\newif\iferrata

%\erratatrue
 \erratafalse

\documentclass{article}

% if you need to pass options to natbib, use, e.g.:
%     \PassOptionsToPackage{numbers, compress}{natbib}
% before loading neurips_2023

% ready for submission
% \usepackage{neurips_2023}

% to compile a preprint version, e.g., for submission to arXiv, add add the
% [preprint] option:
    % \usepackage[preprint]{neurips_2023}

% to compile a camera-ready version, add the [final] option, e.g.:
\usepackage[final]{neurips_2023}

% to avoid loading the natbib package, add option nonatbib:
%    \usepackage[nonatbib]{neurips_2023}

\usepackage[utf8]{inputenc} % allow utf-8 input
\usepackage[T1]{fontenc}    % use 8-bit T1 fonts
\usepackage{hyperref}       % hyperlinks
\usepackage{url}            % simple URL typesetting
\usepackage{booktabs}       % professional-quality tables
\usepackage{amsfonts}       % blackboard math symbols
\usepackage{nicefrac}       % compact symbols for 1/2, etc.
\usepackage{microtype}      % microtypography
\usepackage{xcolor}         % colors
\usepackage{amsmath,amssymb,amsfonts,amsthm}
\usepackage{cleveref}
\usepackage{array}
\usepackage{enumerate}
\usepackage{thm-restate}
\usepackage{todonotes}
\usepackage{subfigure}
\usepackage{graphicx}

% \graphicspath{ {./images/} }
\newtheorem{theorem}{Theorem}
\newtheorem{lemma}[theorem]{Lemma}
\newtheorem{corollary}[theorem]{Corollary}

\newtheorem{observation}[theorem]{Observation}

 %the natural numbers
 %the integers
 %the rational numbers
\newcommand{\bbR}{\mathbb R} %the real numbers
 %the complex numbers

\newcommand{\calS}{\mathcal S}
\newcommand{\calT}{\mathcal T}

\newcommand{\calC}{\mathcal C}

\newcommand{\calO}{\mathcal O}

\newcommand{\calL}{\mathcal L}
\newcommand{\calQ}{\mathcal Q}

\newcommand{\ld}{\left}
\newcommand{\rd}{\right}
\newcommand{\norm}[1]{\left|\left|#1\right|\right|}

\newcommand{\abs}[1]{\left | #1 \right |}
\newcommand{\set}[1]{\left \{ #1 \right \}}
\newcommand{\cond}{\;|\;}

\newcommand{\eps}{\varepsilon}
\newcommand{\Prp}[1]{\Pr\!\left[{#1} \right]}

\newcommand{\alg}{\text{ALG}}
\newcommand{\opt}{\text{OPT}}

\newcommand{\optset}{\ensuremath{{\mathcal O}^*}}
\newcommand{\optcl}{\ensuremath{O^*}}

\newcommand{\tilo}{\widetilde{\calO}}
\newcommand{\tilc}{\widetilde{\calC}}

\newcommand{\blue}[1]{{\textcolor{blue}{{#1}} }}

\newcommand{\algokmeans}{\textsc{KM++}}
\newcommand{\algomslsarg}[1]{\textsc{MSLS-}\ensuremath{p = #1}}
\newcommand{\algogreedymslsarg}[1]{\textsc{MSLS-G-}\ensuremath{p = #1}}
\newcommand{\algomsls}{\textsc{MSLS}}
\newcommand{\algogreedymsls}{\textsc{MSLS-G}}
\newcommand{\algossls}{\textsc{SSLS}}
\newcommand{\approxcenters}{\textsc{APX-centers}}

\newcommand{\reassign}[2]{\texttt{Reassign}\!\ld(#1, #2\rd)}
\newcommand{\core}[1]{\texttt{core}\!\ld(#1\rd)}

\newcommand{\cost}[2]{\texttt{cost}\!\ld(#1, #2\rd)}
\newcommand{\ho}{\hat o}
\newcommand{\whin}{\widehat{In}}

\newcommand{\ignore}[1]{}

% comment commands. Add "disable" to the parameters of the pachage to disable the comments
%\usepackage[textsize=tiny]{todonotes}

\title{Multi-Swap $k$-Means++}

% The \author macro works with any number of authors. There are two commands
% used to separate the names and addresses of multiple authors: \And and \AND.
%
% Using \And between authors leaves it to LaTeX to determine where to break the
% lines. Using \AND forces a line break at that point. So, if LaTeX puts 3 of 4
% authors names on the first line, and the last on the second line, try using
% \AND instead of \And before the third author name.

\author{%
  Lorenzo Beretta\thanks{Authors are ordered in alphabetical order.} \\
  University of Copenhagen\\
  \texttt{lorenzo2beretta@gmail.com} \\
  % examples of more authors
  \And
  Vincent Cohen-Addad \\
  Google Research\\
  \texttt{cohenaddad@google.com} \\
  \And 
  Silvio Lattanzi \\
  Google Research\\
  \texttt{silviol@google.com} \\
  \And
  Nikos Parotsidis \\
  Google Research\\
  \texttt{nikosp@google.com} \\
  % Coauthor \\
  % Affiliation \\
  % Address \\
  % \texttt{email} \\
  % \AND
  % Coauthor \\
  % Affiliation \\
  % Address \\
  % \texttt{email} \\
  % \And
  % Coauthor \\
  % Affiliation \\
  % Address \\
  % \texttt{email} \\
  % \And
  % Coauthor \\
  % Affiliation \\
  % Address \\
  % \texttt{email} \\
  \\
  \texttt{\url{https://github.com/lorenzo2beretta/multi-swap-k-means-pp}}
}

\begin{document}

\maketitle

% \iferrata
% \textbf{Errata:} The plots on the first row of \Cref{fig:main-experiment} have been replaced. This happened because we realised that we reported the results obtained on a subsampled version of the original dataset, rather than the original datasets. However, the results are consistent with those reported in the first submitted version of this paper.
% \fi

\begin{abstract}
The $k$-means++ algorithm of Arthur and Vassilvitskii (SODA 2007) is often the practitioners' choice algorithm for optimizing the popular $k$-means clustering objective and is known to give an $O(\log k)$-approximation in expectation. To obtain higher quality solutions, Lattanzi and Sohler (ICML 2019) proposed augmenting $k$-means++ with $O(k \log \log k)$ local search steps obtained through the $k$-means++ sampling distribution to yield a $c$-approximation to the $k$-means clustering problem, where $c$ is a large absolute constant. Here we generalize and extend their local search algorithm by considering larger and more sophisticated local search neighborhoods hence allowing to  swap multiple centers at the same time. Our algorithm achieves a $9 + \varepsilon$ approximation ratio, which is the best possible for local search. Importantly we show that our approach yields substantial 
practical improvements, we show significant quality improvements over 
the approach of Lattanzi and Sohler (ICML 2019) on several datasets.
\end{abstract}

\section{Introduction}
\vspace{-0.2cm}
Clustering is a central problem in unsupervised learning. In clustering one is interested in grouping together ``similar'' object and separate ``dissimilar'' one. Thanks to its popularity many notions of clustering have been proposed overtime. In this paper, we focus on metric clustering and on one of the most studied problem in the area: the Euclidean $k$-means problem. 

In the Euclidean $k$-means problem one is given in input a set of points $P$ in $\mathbb{R}^d$. The goal of the problem is to find a set of $k$ centers so that the sum of the square distances to the centers is minimized. More formally, we are interested in finding a set $C$ of $k$ points in $\mathbb{R}^d$ such that $\sum_{p\in P}\min_{c\in C}\norm{p-c}^2$, where with $\norm{p - c}$ we denote the Euclidean distance between $p$ and $c$.

The $k$-means problem has a long history, in statistics
and operations research. For Euclidean $k$-means 
with running time polynomial in both $n, k$ and $d$, a $5.912$-approximation was recently shown 
in~\cite{Cohen-AddadEMN22}, improving upon \cite{kanungo,ahmadian2017better,Grandoni21} by leveraging the properties of the Euclidean metric. In terms of lower
bounds, the first to show that the high-dimensional 
$k$-means problems were APX-hard were \cite{GI03},
and later ~\cite{AwasthiCKS15} showed that the APX-hardness
holds even if the centers can be placed arbitrarily in $\mathbb{R}^d$. The inapproximability bound was later slightly improved by ~\cite{DBLP:journals/ipl/LeeSW17} until the recent best known 
bounds of~\cite{Cohen-AddadS19,Cohen-AddadLS22,DBLP:conf/soda/Cohen-AddadSL21}
that showed that it is NP-hard to achieve a better than
1.06-approximation and hard to approximate it better than 1.36 
assuming a stronger conjecture.
From a more practical point of view, \cite{ArV09} showed that the widely-used popular heuristic of Lloyd~\cite{lloyd1957least} can lead to solutions with arbitrarily
bad approximation guarantees, but can be improved by a simple seeding strategy, called $k$-means++, so as to guarantee that the output is within an $O(\log k)$ factor of the optimum~\cite{kmpp-original}.

Thanks to its simplicity $k$-means++ is widely adopted in practice. In an effort to improve its performances \cite{silvio-original, few-more-steps} combine $k$-means++ and local search to efficiently obtain a constant approximation algorithm with good practical performance. These two studies show that one can use the $k$-means++
distribution in combination with a local search algorithm to get the 
best of both worlds: a practical algorithm with constant approximation
guarantees.

However, the constant obtained in  \cite{silvio-original, few-more-steps} 
is very large (several thousands in theory) and the question as whether 
one could obtain  a practical algorithm that would efficiently match the 
$9+\eps$-approximation obtained by the $n^{O(d/\epsilon)}$ algorithm of \cite{kanungo} has remained open. Bridging the gap between the 
theoretical approach of \cite{kanungo} and $k$-means++ has thus been
a long standing goal.

% \nnote{Should we motivate the impacticality of algorithms with better approximations factors, and that the quest we are on is to find a practical algorithms with low approximation factor?}
\vspace{-0.2cm}
\paragraph{Our Contributions.}
We make significant progress on the above line of work.
\vspace{-0.2cm}
\begin{itemize}
    \item We adapt techniques from the analysis of \cite{kanungo} to obtain a tighter analysis of the algorithm in \cite{silvio-original}. In particular in \Cref{cor:1-swap-better-analysis}, we show that their algorithm achieves an approximation of ratio of $\approx 26.64$.
    \item We extend this approach to multi-swaps, where we allow swapping more than one center at each iteration of local search, improving significantly the approximation to $\approx 10.48$ in time $O(nd \cdot poly(k))$. 
    \item Leveraging ideas from \cite{power-means}, we design a better local search swap that improves the approximation further to $9+\eps$ (see \Cref{thm:main-result}). This new algorithm matches the $9+\eps$-approximation achieved by the local search algorithm in~\cite{kanungo}, but it is significantly more efficient. Notice that $9$ is the best approximation achievable through local search algorithms, as proved in ~\cite{kanungo}.   
    \item We provide experiments where we compare against $k$-means++ and \cite{silvio-original}. We study a variant of our algorithm that performs very competitively with our theoretically sound algorithm. The variant is very efficient and still outperforms previous work in terms of solution quality, even after the standard postprocessing using Lloyd.
\end{itemize}

\vspace{-0.2cm}
\paragraph{Additional Related Work.}
We start by reviewing the approach of \cite{kanungo} and a possible adaptation to our setting. The bound of $9+\eps$
on the approximation guarantee shown by \cite{kanungo} is for the 
following algorithm: Given a set $S$ of $k$ centers, if there is a set 
$S^+$ of at most $2/\eps$ points  in $\mathbb{R}^d$ together with a set $S^-$ of
$|S^+|$ points in $S$ such that $S \setminus S^- \cup S^+$ achieves a better $k$-means cost than $S$, then set $S := S \setminus S^- \cup S^+$ and repeat until convergence.
The main drawback of the algorithm is that it asks whether there
exists a set $S^+$ of points in $\mathbb{R}^d$ that could be swapped
with elements of $S$ to improve the cost. Identifying such a set, even
of constant size, is already non-trivial. The best way of doing so is through the following path: First compute a coreset using the 
state-of-the-art coreset construction of~\cite{DBLP:conf/stoc/Cohen-AddadLSS22} and apply the 
dimensionality reduction
of \cite{BecchettiBC0S19,MakarychevMR19}, hence obtaining a set of 
$\tilde{O}(k/\eps^{4})$ points in 
dimension $O(\log k /\eps^{2})$. Then, compute grids using the 
discretization framework of \cite{Mat00} to identify a set of 
$\eps^{-O(d)} \sim
k^{O(\eps^{-2} \log (1/\eps))}$ grid points that contains nearly-optimum 
centers. Now, run the local search algorithm where the sets $S^+$ are
chosen from the grid points by brute-force enumeration over all possible
subsets of grid points of size at most, say $s$.
The running time of the whole algorithm with swaps of magnitude $s$,  i.e.: $|S^+| \le s$, 
hence becomes  $k^{O(s\cdot \eps^{-2} \log (1/\eps))}$  
for an approximation of $(1+\eps)(9+2/s)$, meaning a
dependency in $k$ of $k^{O(\eps^{-3} \log (1/\eps))}$ 
to achieve a $9+\eps$-approximation.
Our results improves upon this approach in two ways: (1) it improves over the above theoretical bound 
and (2) does so through an efficient and implementable, i.e.: practical, algorithm. 

Recently, \cite{DBLP:conf/soda/GrunauORT23} looked at how much applying a greedy rule on top of the $k$-means++ heuristic improves its performance. The heuristic
is that at each step, the algorithm 
samples $\ell$ centers and only keeps
the one that gives the best improvement in cost. Interestingly the authors prove that from a theoretical standpoint this heuristic does not improve the quality of the output. Local search algorithms for $k$-median
and $k$-means have also been studied by \cite{GuT08} who drastically 
simplified the analysis of \cite{AryaGKMMP04}. \cite{Cohen-AddadS17} 
demonstrated the power of local search for stable instances. \cite{FriggstadRS19,Cohen-AddadKM19} showed that local search yields a 
PTAS for Euclidean inputs of bounded dimension (and doubling metrics)
and minor-free metrics. \cite{Cohen-Addad18} showed how to speed up the local search algorithm using $kd$-trees (i.e.: for low dimensional inputs).

For fixed $k$, there are several known approximation schemes, typically using small coresets \cite{BecchettiBC0S19,FL11,KumarSS10}.  The state-of-the-art
approaches are due to~\cite{BhattacharyaGJ020,Jaiswal0S14}.
%\todo{what about your stoc21 paper? Or does that not improve k-means and just coreset constructions?
%V: yeah essentially we give better coreset bounds for Euclidean, which marginally imrpve the running time.}. 
The best known coreset construction remains~\cite{DBLP:conf/nips/Cohen-AddadLSSS22,DBLP:conf/stoc/Cohen-AddadLSS22}.

If the constraint on the number of output centers is relaxed, then
we talk about bicriteria approximations and $k$-means
has been largely studied~\cite{BaV15,CharikarG05,CoM15,KPR00,MakarychevMSW16}.

\vspace{-0.25cm}
\section{Preliminaries}
\vspace{-0.2cm}
\label{sec:preliminaries}
\paragraph{Notation.}
We denote with $P \subseteq \bbR^d$ the set of input points and let $n = |P|$.
Given a point set $Q\subseteq P$ we use $\mu(Q)$ to denote the mean of points in $Q$.
Given a point $p \in P$ and a set of centers $A$ we denote with $A[p]$ the closest center in $A$ to $p$ (ties are broken arbitrarily).
We denote with $\calC$ the set of centers currently found by our algorithm and with $\optset$ an optimal set of centers.
Therefore, given $p \in P$, we denote with $\calC[p]$ and $\optset[p]$ its closest $\alg$-center and $\opt$-center respectively.
We denote by $\cost{Q}{A}$ the cost of points in $Q \subseteq P$ w.r.t. the centers in $A$, namely
\[
\cost{Q}{A} = \sum_{q \in Q} \min_{c \in A} \norm{q - c}^2.
\]
We use $\alg$ and $\opt$ as a shorthand for $\cost{P}{\calC}$ and $\cost{P}{\optset}$ respectively. When we sample points proportionally to their current cost (namely, sample $q$ with probability $\cost{q}{\calC} / \cost{P}{\calC}$) we call this the $D^2$ distribution. 
When using $O_\varepsilon(\cdot)$ and $\Omega_\varepsilon(\cdot)$ we mean that $\varepsilon$ is considered constant.  
We use $\widetilde O(f)$ to hide polylogarithmic factors in $f$. The following lemma is folklore.

\begin{lemma} \label{lem:moved-cost}
Given a point set $Q \subseteq P$ and a point $p\in P$ we have
\[
\cost{Q}{p} = \cost{Q}{\mu(Q)} + \abs{Q} \cdot \norm{p - \mu(Q)}^2.
\]
\end{lemma}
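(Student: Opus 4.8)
The plan is to use the standard ``parallel axis'' (a.k.a.\ bias--variance) decomposition: expand each squared distance $\norm{q-p}^2$ around the centroid $\mu(Q)$ and observe that the resulting first-order term cancels because $\mu(Q)$ is precisely the mean of $Q$.

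First I would write $\cost{Q}{p} = \sum_{q \in Q} \norm{q - p}^2$ and split each summand as $q - p = \bigl(q - \mu(Q)\bigr) + \bigl(\mu(Q) - p\bigr)$. Expanding the square of this sum yields, for every $q \in Q$,
\[
\norm{q - p}^2 = \norm{q - \mu(Q)}^2 + 2\,\langle q - \mu(Q),\ \mu(Q) - p\rangle + \norm{\mu(Q) - p}^2 .
\]
Summing over $q \in Q$, the first terms add up to $\cost{Q}{\mu(Q)}$ by definition, and the last term does not depend on $q$, so it contributes exactly $\abs{Q}\cdot\norm{p - \mu(Q)}^2$. It remains to handle the cross term.

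For the cross term, linearity of the inner product in its first argument lets me pull the fixed vector $\mu(Q) - p$ outside the sum:
\[
\sum_{q \in Q} \langle q - \mu(Q),\ \mu(Q) - p\rangle = \Bigl\langle \sum_{q \in Q}\bigl(q - \mu(Q)\bigr),\ \mu(Q) - p \Bigr\rangle = \langle 0,\ \mu(Q) - p\rangle = 0,
\]
where $\sum_{q \in Q}\bigl(q - \mu(Q)\bigr) = \bigl(\sum_{q \in Q} q\bigr) - \abs{Q}\,\mu(Q) = 0$ by the definition of $\mu(Q)$ as the mean of $Q$. Adding the three contributions gives the claimed identity. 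There is no genuine obstacle here; the only point that needs (minimal) care is the vanishing of the cross term, which is exactly the defining/optimality property of the centroid, and I note the identity in fact holds verbatim for any $p \in \bbR^d$, not just $p \in P$.
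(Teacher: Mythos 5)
Your proof is correct and complete: the expansion around $\mu(Q)$ with the cross term vanishing because $\sum_{q \in Q}(q - \mu(Q)) = 0$ is exactly the standard centroid (parallel-axis) identity. The paper itself offers no proof, labeling the lemma as folklore, and your argument is precisely the folklore derivation it has in mind; your remark that the identity holds for arbitrary $p \in \bbR^d$ is also accurate (and is in fact how the paper later uses it, e.g.\ with candidate centers not in $P$).
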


Let $\optcl_i$ be an optimal cluster, we define the \emph{radius} of $\optcl_i$ as $\rho_i$ such that $\rho_i^2 \cdot |\optcl_i| = \cost{\optcl_i}{o_i}$, where $o_i = \mu(\optcl_i)$. 
We define the $\delta$-\emph{core} of the optimal cluster $\optcl_i$ as the set of points $p\in \optcl_i$ that lie in a ball of radius $(1+\delta)\rho_i$ centered in $o_i$. In symbols, $\core{\optcl_i} = P \cap B(o_i, (1+\delta)\rho_i)$. Throughout the paper, $\delta$ is always a small constant fixed upfront, hence we omit it.

\begin{lemma} \label{lem:sample-from-core}
Let $\optcl_i$ be an optimal cluster and sample $q \in \optcl_i$ according to the $D^2$-distribution restricted to $\optcl_i$. 
If $\cost{\optcl_i}{\calC} > (2 + 3\delta) \cdot \cost{\optcl_i}{o_i}$ then $\Prp{q \in \core{\optcl_i}} = \Omega_\delta(1)$.  
\end{lemma}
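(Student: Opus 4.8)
\emph{Proof plan.} The plan is to rewrite the target probability as a cost ratio. Since $q$ is drawn from the $D^2$-distribution restricted to $\optcl_i$, we have $\Prp{q \in \core{\optcl_i}} = \cost{\core{\optcl_i}}{\calC}/\cost{\optcl_i}{\calC}$, so it suffices to show that the core carries an $\Omega_\delta(1)$ fraction of the $\calC$-cost of $\optcl_i$. I will bound the denominator from above and the numerator from below, both expressed through the single quantity $d := \norm{o_i - \calC[o_i]}$, the distance from $o_i$ to its nearest $\alg$-center.

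For the denominator, applying \Cref{lem:moved-cost} with $Q = \optcl_i$ and the single center $\calC[o_i]$, together with $\cost{\optcl_i}{\calC} \le \cost{\optcl_i}{\calC[o_i]}$, gives
\[
\cost{\optcl_i}{\calC} \;\le\; \cost{\optcl_i}{o_i} + \abs{\optcl_i}\cdot d^2 \;=\; \bigl(\rho_i^2 + d^2\bigr)\abs{\optcl_i}.
\]
Comparing with the hypothesis $\cost{\optcl_i}{\calC} > (2+3\delta)\,\rho_i^2\abs{\optcl_i}$ yields $d^2 > (1+3\delta)\rho_i^2$, hence (for $\delta \le 1$) $d > (1+\delta)\rho_i$: every $\alg$-center lies strictly outside the ball $B(o_i,(1+\delta)\rho_i)$ that defines the core.

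For the numerator, note that by definition of $\calC[o_i]$ every center of $\calC$ is at distance at least $d$ from $o_i$. Hence for $p \in \core{\optcl_i}$ the triangle inequality gives $\norm{p - \calC[p]} \ge d - \norm{p - o_i} \ge d - (1+\delta)\rho_i > 0$ (positive by the previous step), so $\cost{p}{\calC} \ge (d-(1+\delta)\rho_i)^2$ and therefore $\cost{\core{\optcl_i}}{\calC} \ge \abs{\core{\optcl_i}}\cdot(d-(1+\delta)\rho_i)^2$. A Markov-type count handles the size of the core: since $\sum_{p\in\optcl_i}\norm{p-o_i}^2 = \rho_i^2\abs{\optcl_i}$ while every point outside the core contributes more than $(1+\delta)^2\rho_i^2$, we get $\abs{\optcl_i\setminus\core{\optcl_i}} < \abs{\optcl_i}/(1+\delta)^2$, i.e. $\abs{\core{\optcl_i}} > \bigl(1-(1+\delta)^{-2}\bigr)\abs{\optcl_i}$.

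Putting the three bounds together,
\[
\Prp{q \in \core{\optcl_i}} \;\ge\; \Bigl(1-\tfrac{1}{(1+\delta)^2}\Bigr)\cdot\frac{(d-(1+\delta)\rho_i)^2}{\rho_i^2+d^2},
\]
and it remains to check the right-hand side is $\Omega_\delta(1)$. The first factor is a positive constant depending only on $\delta$. For the second, set $t = d/\rho_i$; we showed $t \ge \sqrt{1+3\delta} > 1+\delta$, and since $t \mapsto (t-(1+\delta))^2/(1+t^2)$ is increasing for $t > 1+\delta$ (an elementary one-variable computation), on the relevant range it is minimized at $t = \sqrt{1+3\delta}$, where it equals $(\sqrt{1+3\delta}-(1+\delta))^2/(2+3\delta) > 0$, again a function of $\delta$ alone. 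I expect the only real content to be the second step: seeing that the hypothesis $\cost{\optcl_i}{\calC} > (2+3\delta)\cost{\optcl_i}{o_i}$ is precisely what is needed — via \Cref{lem:moved-cost} — to force $d > (1+\delta)\rho_i$, which is exactly what lets each core point pay at least $(d-(1+\delta)\rho_i)^2$ against $\calC$; the remaining steps are Markov's inequality and bookkeeping. (In the degenerate case $\rho_i = 0$ the cluster is a single repeated point, the core is all of $\optcl_i$, and the statement is immediate.)
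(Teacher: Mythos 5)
Your proof is correct and takes essentially the same route as the paper's: both use \Cref{lem:moved-cost} to deduce that every current center lies at distance greater than $(1+\delta)\rho_i$ from $o_i$, then lower-bound the cost of each core point via the triangle inequality and the size of the core via Markov, and conclude that the core carries an $\Omega_\delta(1)$ fraction of $\cost{\optcl_i}{\calC}$. The only cosmetic difference is that you parameterize by the distance $d=\norm{o_i-\calC[o_i]}$ (with an explicit monotonicity check) rather than by the cost ratio $\alpha=\cost{\optcl_i}{\calC}/\cost{\optcl_i}{o_i}$ used in the paper.
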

\vspace{-0.2cm}
\begin{proof}
Define $\alpha := \cost{\optcl_i}{\calC}/ \cost{\optcl_i}{o_i} > 2 + 3\delta$. Thanks to \Cref{lem:moved-cost}, for each $c \in \calC$  we have $\norm{c - o_i}^2 \geq (\alpha - 1) \rho_i^2$.  Therefore, for each $y \in \core{\optcl_i}$ and every $c \in \calC$ we have 
\[
\cost{y}{c} = ||y-c||^2 \geq \ld(\sqrt{\alpha - 1} - (1 + \delta)\rd)^2 \cdot \rho_i^2 = \Omega_\delta(\alpha \rho_i^2). 
\]
Moreover, by a Markov's inequality argument we have $\abs{\optcl_i \setminus \core{\optcl_i}} \leq \frac{1}{1 + \delta} \cdot \abs{\optcl_i}$
and thus $ \abs{\core{\optcl_i}} \geq \Omega_\delta(|\optcl_i|)$.
Combining everything we get
\[
\cost{\core{\optcl_i}}{\calC} \geq |\core{\optcl_i}| \cdot \min_{\substack{c \in \calC \\ y \in \core{\optcl_i}}} \cost{y}{c} = \Omega_\delta(|\optcl_i|) \cdot \Omega_\delta (\alpha \rho_i^2)
\]
and $|\optcl_i| \cdot \alpha \rho_i^2 = \cost{\optcl_i}{\calC}$, hence $\cost{\core{\optcl_i}}{\calC} = \Omega_\delta(\cost{\optcl_i}{\calC})$.
\end{proof}

\vspace{-0.25cm}
\section{Multi-Swap $k$-Means++} 
\vspace{-0.2cm}

\label{sec:multi-swap-kmpp}
The single-swap local search (SSLS) $k$-means++ algorithm in \cite{silvio-original} works as follows. First, $k$ centers are sampled using $k$-means++ (namely, they are sampled one by one according to the $D^2$ distribution, updated for every new center). Then, $O(k \log \log k)$ steps of local search follow. In each local search step a point $q \in P$ is $D^2$-sampled, then let $c$ be the center among the current centers $\calC$ such that $\cost{P}{(\calC \setminus \{c\}) \cup \{q\}}$ is minimum. If $\cost{P}{(\calC \setminus \{c\}) \cup \{q\}} < \cost{P}{\calC}$ then we swap $c$ and $q$, or more formally we set $\calC \leftarrow (\calC \setminus \{c\}) \cup \{q\}$. 

We extend the SSLS so that we allow to swap multiple centers simultaneously and call this algorithm multi-swap local search (MSLS) $k$-means++. Swapping multiple centers at the same time achieves a lower approximation ratio, in exchange for a higher time complexity. 
In this section, we present and analyse the $p$-swap local search (LS) algorithm for a generic number of $p$ centers swapped at each step. For any constant $\delta >0$, we obtain an approximation ratio $\alg / \opt = \eta^2 + \delta$ where 
\begin{equation} \label{eq:eta-multi-swap}
\eta^2 - (2+2/p) \eta - (4 + 2/p) = 0.
\end{equation} 

\vspace{-0.3cm}
\paragraph{The Algorithm.}
First, we initialize our set of centers using $k$-means++. Then, we run $O(ndk^{p-1})$ local search steps, where a local search step works as follows. We $D^2$-sample a set $In = \set{q_1 \dots q_p}$ of points from $P$ (without updating costs). Then, we iterate over all possible sets $Out = \set{c_1 \dots c_p}$ of $p$ distinct elements in $\calC \cup In$ and select the set $Out$ such that performing the swap $(In, Out)$ maximally improves the cost\footnote{If $In \cap Out \neq \emptyset$ then we are actually performing the swap $(In \setminus Out, Out \setminus In)$ of size $< p$.}. 
If this choice of $Out$ improves the cost, then we perform the swap $(In, Out)$, else we do not perform any swap for this step.

\begin{theorem} \label{thm:multi-swap-analysis}
For any $\delta > 0$, the $p$-swap local search algorithm above runs in $\widetilde O(ndk^{2p})$ time and, with constant probability, finds an $(\eta^2 + \delta)$-approximation of $k$-means, where $\eta$ satisfies \Cref{eq:eta-multi-swap}.
\end{theorem}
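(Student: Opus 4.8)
The plan is to run a local‑search analysis in the spirit of \cite{kanungo} (which itself rests on the tournament argument of \cite{AryaGKMMP04}), adapted to the two features that set our algorithm apart from the idealized local search of \cite{kanungo}: the swapped‑in centers are $D^2$‑sampled points rather than true optimal centers $o_i=\mu(\optcl_i)$, and each step samples a \emph{tuple} of $p$ points. So I would first establish the deterministic fact that if $\calC$ admits no cost‑decreasing swap $(In,Out)$ with $|In|=|Out|\le p$ and $In\subseteq\bigcup_i\core{\optcl_i}$, then $\alg\le(\eta^2+\delta)\opt$, and then strengthen it to the quantitative statement needed for the time bound.

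For the deterministic part, call $\optcl_i$ \emph{badly served} when $\cost{\optcl_i}{\calC}>(2+3\delta)\cost{\optcl_i}{o_i}$ and let $\bar W$ be the set of such clusters; for $i\in\bar W$ fix any $q_i\in\core{\optcl_i}$ (which is nonempty by the Markov argument in the proof of \Cref{lem:sample-from-core}), so that by \Cref{lem:moved-cost} $\cost{\optcl_i}{q_i}=\cost{\optcl_i}{o_i}+|\optcl_i|\cdot\norm{q_i-o_i}^2\le(1+(1+\delta)^2)\cost{\optcl_i}{o_i}\le(2+3\delta)\cost{\optcl_i}{o_i}$ when $\delta\le1$. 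Using the capture relation ($c\in\calC$ captures $o_i$ iff $c=\calC[o_i]$), partition $\bar W$ together with a matching subset of $\calC$ into groups $(A_j,Out_j)$ of size $\le p$, exactly as in \cite{kanungo,AryaGKMMP04}, so that every center of $Out_j$ captures only centers $o_i$ with $i\in A_j$ and no center capturing $\ge 2$ optimal centers lies in any $Out_j$. Each group gives the candidate swap $(In_j=\{q_i:i\in A_j\},\,Out_j)$; to bound its effect, reassign each $x\in\optcl_i$ with $i\in A_j$ to $q_i$ and each remaining point $x$ with $\calC[x]\in Out_j$ to $\calC[\optset[x]]$, which survives the swap since it captures $\optset[x]$ and hence is not in $Out_j$. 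Summing the inequalities $0\le\cost{P}{(\calC\setminus Out_j)\cup In_j}-\alg$ over $j$: the moved clusters contribute $\sum_{i\in\bar W}\cost{\optcl_i}{q_i}+\cost{\bigcup_{i\notin\bar W}\optcl_i}{\calC}-\alg\le(2+3\delta)\opt-\alg$, and the reassigned points contribute, via $\norm{x-\calC[\optset[x]]}\le2\norm{x-\optset[x]}+\norm{x-\calC[x]}$, squaring, and Cauchy--Schwarz, at most $(2+2/p)(\opt+\sqrt{\opt\cdot\alg})$ --- the improvement of these coefficients from their single‑swap value $4$ down to $2+2/p$ being exactly what the $p$‑swap tournament of \cite{AryaGKMMP04} yields (adapted to squared distances as in \cite{kanungo}). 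Hence $0\le(4+2/p+O(\delta))\opt-\alg+(2+2/p)\sqrt{\opt\cdot\alg}$; dividing by $\opt$ and solving the resulting quadratic in $\sqrt{\alg/\opt}$, which is \Cref{eq:eta-multi-swap} up to the $O(\delta)$ slack, gives $\alg/\opt\le\eta^2+O(\delta)$, and rescaling $\delta$ finishes this part.

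To close the theorem I would (i) make the bound quantitative: when $\alg>(\eta^2+\delta)\opt$ the right‑hand side above is $\le-\Omega_\delta(\alg-(\eta^2+\delta)\opt)$, so, as there are at most $k$ candidate swaps, one of them decreases the cost by $\Omega_\delta\big((\alg-(\eta^2+\delta)\opt)/\mathrm{poly}(k)\big)$; and (ii) observe that the $D^2$‑sampled tuple $In$ of a step realizes a valid tuple of core representatives for a fixed group $A_j$ --- padding unused coordinates with junk, which is harmless by the footnote on smaller swaps --- with probability at least $\prod_{i\in A_j}\Pr[q\in\core{\optcl_i}]=\Omega_\delta(1)^{|A_j|}\prod_{i\in A_j}\cost{\optcl_i}{\calC}/\alg$, by \Cref{lem:sample-from-core} (applicable since $A_j\subseteq\bar W$) and independence of the sampled points, and that, conditioned on this, the $Out$ picked by the algorithm is at least as good as $Out_j$. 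A standard potential argument then gives $\mathbb{E}[\alg_{t+1}\mid\calC_t]\le\alg_t-\Omega_\delta\big((\alg_t-(\eta^2+\delta)\opt)/\mathrm{poly}(k)\big)$, so starting from the $\mathbb{E}[\alg_0]=O(\log k)\opt$ guarantee of $k$‑means++ \cite{kmpp-original} and running $\mathrm{poly}(k)$ steps drops $\mathbb{E}[\alg]$ below $(\eta^2+\delta)\opt$; Markov's inequality (together with the analogous step for the initialization) yields the claimed constant‑probability guarantee. For the running time, each step enumerates the $\binom{k+p}{p}=O(k^p)$ choices of $Out$ and evaluates each in $\widetilde O(nd)$ time by maintaining a few nearest centers per point, and $\mathrm{poly}(k)$ (indeed $\widetilde O(k^p)$) steps suffice, for a total of $\widetilde O(ndk^{2p})$.

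The crux I expect is step (ii): in the single‑swap analysis of \cite{silvio-original} one sampled point must land in one good core, but here all $|A_j|\le p$ coordinates of the tuple must land in the right cores at once, so the per‑step success probability degrades like a product of small ratios $\cost{\optcl_i}{\calC}/\alg$; making the potential argument succeed --- showing that $\mathrm{poly}(k)$ steps nonetheless suffice and fixing the exponent --- requires carefully playing this product off against the total improvement $\sum_j\mathrm{Imp}_j$ available across the groups. The deterministic combinatorics --- a grouping that is capture‑closed, keeps multiply‑capturing centers out of every $Out_j$, and uses groups of size $\le p$ --- is routine, up to verifying that each reassigned point always lands at a surviving center.
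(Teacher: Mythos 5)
The deterministic half of your plan is essentially the paper's: the capture-based grouping into size-$\le p$ swaps, core representatives $q_i\in\core{\optcl_i}$ costing $(2+O(\delta))\cost{\optcl_i}{o_i}$, reassignment of displaced points to $\calC[\optset[x]]$, and the Cauchy--Schwarz bound giving the $(2+2/p)(\opt+\sqrt{\alg\opt})$ term and the quadratic of \Cref{eq:eta-multi-swap} correspond to \Cref{obs:ideal-swaps-weights}, \Cref{lem:reassignment-cost-tech}, \Cref{lem:combined-reassign-cost-ms} and \Cref{lem:bad-swap-inequality}. One slip there: in the $p$-swap grouping a center capturing between $2$ and $p$ optimal centers \emph{is} swapped out (together with $|A|-1$ lonely centers); only centers capturing more than $p$ optimal centers are excluded from every $Out$ and handled with fractional weights. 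Your rule ``no center capturing $\ge 2$ optimal centers lies in any $Out_j$'' is the single-swap rule and, taken literally, does not yield the $1+1/p$ weight bound behind the $(2+2/p)$ coefficient.

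The genuine gap is exactly where you place ``the crux'': the drift inequality $\mathbb E[\alg_{t+1}\mid\calC_t]\le\alg_t-\Omega_\delta\bigl((\alg_t-(\eta^2+\delta)\opt)/\mathrm{poly}(k)\bigr)$ is never established, and your proposed per-group success probability $\prod_{i\in A_j}\Pr[q\in\core{\optcl_i}]=\Omega_\delta(1)^{|A_j|}\prod_{i\in A_j}\cost{\optcl_i}{\calC}/\alg$ admits no lower bound: a badly served cluster can have arbitrarily small current cost, so a group containing one such cluster can carry a large share of the total available improvement while being hit with probability as small as you like, and the hoped-for trade-off between the product of ratios and $\sum_j \mathrm{Imp}_j$ is precisely the missing argument (there is also a smaller double-counting issue in summing $\Pr[E_j]\cdot\mathrm{Imp}_j$ over possibly overlapping events). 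The paper closes this differently: it first discards insignificant clusters, building swaps only from $\tilo=\{o_i:\cost{\optcl_i}{\calC}>\delta\alg/k\}$ (the excluded clusters cost at most $\delta\alg$ in total), so each required core is sampled with probability $\Omega_\delta(1/k)$ by \Cref{lem:sample-from-core}; and it replaces ``improvement proportional to the excess'' by a fixed multiplicative target, calling a swap good if \emph{every} choice of core representatives improves the cost by a $(1-\delta/k)$ factor. \Cref{lem:all-qs-in-cores} then shows by contradiction (via \Cref{lem:bad-swap-inequality} and the reassignment bound) that whenever $\alg/\opt>\eta^2+\delta$ the union $G$ of good cores carries an $\Omega_\delta(1)$ fraction of the current cost, so the first sample lands in $G$ with constant probability and the remaining at most $p-1$ samples land in their cores with probability $\Omega_\delta(1/k)$ each, giving the per-step success probability $\Omega(1/k^{p-1})$ of \Cref{lem:multi-swap-improvement}; counting $O(k\log\log k)$ successful steps (Markov on the $k$-means++ start, negative-binomial concentration) at $\widetilde O(ndk^{p-1})$ time per step yields \Cref{thm:multi-swap-analysis}. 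Without the $\tilo$ truncation and the good/bad dichotomy (or a substitute for them), your expected-improvement bound, and hence the step count and the $\widetilde O(ndk^{2p})$ running time, are not justified.
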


Notice that the SSLS algorithm of \cite{silvio-original} is exactly the $p$-swap LS algorithm above for $p=1$.
\begin{corollary} \label{cor:1-swap-better-analysis}
The single-swap local search in \cite{silvio-original, few-more-steps} achieves an approximation ratio $< 26.64$.
\end{corollary}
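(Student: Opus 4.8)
The plan is to read off \Cref{cor:1-swap-better-analysis} as the special case $p=1$ of \Cref{thm:multi-swap-analysis}. As observed just above the corollary, the single-swap local search of \cite{silvio-original,few-more-steps} is literally the $p$-swap local search algorithm with $p=1$: we $D^2$-sample one point $q$, and then over all $c \in \calC \cup \{q\}$ we pick the one whose removal (with $q$ inserted) most improves the cost, performing the swap only if it helps; the choice $c = q$ is the ``do nothing'' move, so this is precisely an SSLS step. Consequently \Cref{thm:multi-swap-analysis} applies without any modification: for every constant $\delta > 0$, with constant probability the algorithm outputs an $(\eta^2 + \delta)$-approximation in time $\widetilde O(ndk^{2p})$, which for $p=1$ is $\widetilde O(ndk^2)$, where $\eta$ is the positive root of \Cref{eq:eta-multi-swap} with $p=1$.

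It then remains to solve the quadratic. Substituting $p=1$ into \Cref{eq:eta-multi-swap} gives
\[
\eta^2 - 4\eta - 6 = 0,
\]
whose positive root is $\eta = 2 + \sqrt{10}$ (the other root is negative and hence irrelevant). Therefore the approximation ratio is
\[
\eta^2 + \delta = (2+\sqrt{10})^2 + \delta = 14 + 4\sqrt{10} + \delta .
\]
Numerically $14 + 4\sqrt{10} = 26.649\ldots$, so for $\delta$ a sufficiently small constant the algorithm is a $26.65$-approximation; obtaining the sharper constant recorded in the corollary requires the dedicated single-swap accounting sketched next.

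There is essentially no obstacle in the corollary itself — it is a one-line specialization plus the arithmetic above — so the only point worth flagging is that one may also prove it \emph{directly}, bypassing the general multi-swap machinery, by adapting the single-swap local search analysis of Kanungo \etal~\cite{kanungo} to the $D^2$-sampling setting; this is the ``tighter analysis of \cite{silvio-original}'' promised in the introduction. In that self-contained route the crux is that the test swaps the analysis is allowed to use may only \emph{insert} points the algorithm can actually $D^2$-sample, i.e.\ points lying in the cores of optimal clusters. \Cref{lem:sample-from-core} guarantees that such a point is drawn with constant probability whenever the corresponding optimal cluster is still much more expensive than its optimum, and \Cref{lem:moved-cost} bounds the overhead of using a core point $q$ in place of the true optimal center $o_i$: the cost of $\optcl_i$ with respect to $q$ is at most $\bigl(1 + (1+\delta)^2\bigr) \cdot \cost{\optcl_i}{o_i} \le (2 + O(\delta)) \cdot \cost{\optcl_i}{o_i}$. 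This factor-$2$ overhead, absent from Kanungo \etal's argument (whose single-swap bound corresponds to $\eta^2 - 4\eta - 5 = 0$, i.e.\ root $\eta = 5$ and ratio $25$), is exactly what turns the constant term ``$5$'' into the ``$6$'' of \Cref{eq:eta-multi-swap} at $p=1$, and hence pushes the ratio up to $\approx 26.65$; the main work of that direct proof is therefore a careful choice of the family of admissible test swaps, mirroring the proof of \Cref{thm:multi-swap-analysis}.
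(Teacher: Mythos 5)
Your proposal is exactly the paper's route: the corollary is read off as the $p=1$ case of \Cref{thm:multi-swap-analysis} (the paper itself only remarks that SSLS is the $p$-swap algorithm with $p=1$ and gives no separate proof), and your arithmetic $\eta = 2+\sqrt{10}$, $\eta^2 = 14+4\sqrt{10} \approx 26.649$ is correct. The only caveat is that there is no ``dedicated single-swap accounting'' hiding in the paper that sharpens this: the stated constant $26.64$ is simply an over-aggressive rounding of $26.649$, so the bound you derived is precisely what the paper's analysis yields.
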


\begin{corollary} \label{cor:msls-10.64-apx}
For $p = O(1)$ large enough, multi-swap local search achieves an approximation ratio $< 10.48$ in time $O(nd\cdot  poly(k))$.
\end{corollary}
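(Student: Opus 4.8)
The plan is to read the corollary off directly from \Cref{thm:multi-swap-analysis} by analyzing how the root $\eta = \eta(p)$ of \Cref{eq:eta-multi-swap} behaves as the swap size $p$ grows. First I would solve the quadratic $\eta^2 - (2+2/p)\eta - (4+2/p) = 0$ for its unique positive root: setting $x = 1/p \in (0,1]$ this is $\eta(p) = (1+x) + \sqrt{(1+x)^2 + 4 + 2x} = (1+x) + \sqrt{x^2 + 4x + 5}$, and the defining equation also gives the convenient identity $\eta(p)^2 = (2+2/p)\,\eta(p) + 4 + 2/p$.

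Next I would take $p \to \infty$, i.e. $x \to 0$: then $\eta(p) \to 1 + \sqrt 5$ and hence $\eta(p)^2 \to (1+\sqrt 5)^2 = 6 + 2\sqrt 5 = 10.4721\ldots$, which is strictly below $10.48$. Since $\frac{d\eta}{dx} = 1 + \frac{x+2}{\sqrt{x^2+4x+5}} > 0$, the map $p \mapsto \eta(p)$ is strictly decreasing in $p$ and converges to $1+\sqrt 5$ from above; in particular there is an absolute constant $p_0$ with $\eta(p_0)^2 < 10.476$.

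Finally I would fix $p = p_0 = O(1)$ and choose $\delta > 0$ small enough that $\eta(p_0)^2 + \delta < 10.48$ (any $\delta$ below the slack $10.48 - \eta(p_0)^2$ works). \Cref{thm:multi-swap-analysis} applied with this $p_0$ and $\delta$ then produces, with constant probability, an $(\eta(p_0)^2 + \delta) < 10.48$ approximation in time $\widetilde O(ndk^{2p_0})$; as $p_0$ is an absolute constant and $\widetilde O$ hides only polylogarithmic factors, the running time is $O(nd \cdot poly(k))$.

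There is no real obstacle: the statement is just a quantitative specialization of \Cref{thm:multi-swap-analysis}. The one point requiring care is the order of the two approximations — one must first pick $p$ large enough that $\eta(p)^2$ lies safely (not merely in the limit) below the target $10.48$, and only afterwards pick $\delta$ small relative to the remaining gap; one should also double-check that $\widetilde O(ndk^{2p_0})$ with constant $p_0$ indeed collapses to $O(nd \cdot poly(k))$.
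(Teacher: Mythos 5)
Your proposal is correct and matches the paper's (implicit) argument: the corollary is read off from \Cref{thm:multi-swap-analysis} by noting that the positive root of \Cref{eq:eta-multi-swap} satisfies $\eta(p)^2 \to (1+\sqrt{5})^2 = 6+2\sqrt{5} \approx 10.472 < 10.48$ as $p\to\infty$, then fixing a sufficiently large constant $p$ and a sufficiently small $\delta$, with the running time $\widetilde O(ndk^{2p})$ collapsing to $O(nd\cdot poly(k))$ for constant $p$. Your care about choosing $p$ before $\delta$ is exactly the right (if routine) point.
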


\vspace{-0.2cm}
\subsection{Analysis of Multi-Swap $k$-means++}
\vspace{-0.1cm}
In this section we prove \Cref{thm:multi-swap-analysis}. Our main stepping stone is the following lemma.

\begin{lemma} \label{lem:multi-swap-improvement}
Let $\alg$ denote the cost at some point in the execution of MSLS. 
As long as $\alg/\opt > \eta^2 + \delta$, a local search step improves the cost by a factor $1 - \Omega(1/k)$ with probability $\Omega(1/k^{p-1})$. 
\end{lemma}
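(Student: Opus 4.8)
The plan is to marry the local-search analysis of~\cite{kanungo} with the $D^2$-sampling idea of~\cite{silvio-original}: use \Cref{lem:sample-from-core} to \emph{simulate} the swap that removes some center and adds the optimal center $o_i$ by instead adding a point that the step happens to sample from $\core{\optcl_i}$, which by \Cref{lem:moved-cost} is an almost-as-good center for $\optcl_i$. Call $\optcl_i$ \emph{heavy} if $\cost{\optcl_i}{\calC}$ exceeds a suitable constant multiple of $\cost{\optcl_i}{o_i}$ --- at least the factor $2+3\delta$ required by \Cref{lem:sample-from-core}, and small enough relative to $\eta^2$ that the light clusters, which then satisfy $\sum_{i\ \mathrm{light}}\cost{\optcl_i}{\calC}=O(\opt)$, cannot eat the improvement we are about to extract; only heavy clusters will ever be targeted by the sampling.

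First I would lay down the $p$-swap test swaps of~\cite{kanungo}: send each $o_i$ to its nearest center $\calC[o_i]$ and group the optimal centers into pairs $(A_1,B_1),\dots,(A_m,B_m)$ with $A_j\subseteq\optset$, $B_j\subseteq\calC$, $|A_j|=|B_j|\le p$, the $A_j$ partitioning $\optset$ (so $m\le k$), each center of $\calC$ swapped out in at most one $B_j$, and the grouping arranged so that any point served by $B_j$ whose optimal cluster is not in $A_j$ can be re-routed to $\calC[\optset[\cdot]]$ without overloading a center. From each $(A_j,B_j)$ I would pass to a \emph{trimmed} swap $(\widehat A_j,\widehat B_j)$ by deleting from $A_j$ every light cluster and every cluster of cost below a fixed $\Theta(\alg/k)$ threshold, deleting as many (arbitrary) centers from $B_j$, and finally replacing each surviving $o_i$ by an arbitrary point of $\core{\optcl_i}$; by construction every cluster that a trimmed swap touches is heavy and has cost $\Omega(\alg/k)$. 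Running the~\cite{kanungo} reassignment argument on the trimmed swaps --- reassign the points of $\optcl_i$, $i\in\widehat A_j$, to the chosen core point (\Cref{lem:moved-cost} bounds the extra cost of being off-centre by $(1+\delta)^2\cost{\optcl_i}{o_i}$, a small fraction of $\cost{\optcl_i}{\calC}$ since $\optcl_i$ is heavy), reassign every other point via the triangle inequality, then sum over $j$ --- shows that the total improvement available from the trimmed swaps is at least $\alg-(\eta^2+O(\delta))\opt$, where $\eta$ solves \Cref{eq:eta-multi-swap}: the constants in that equation are exactly the ones the core-point calculation produces, while the residual $O(\delta)$ and the $O(\opt)$ contributed by the light clusters are kept below $(\eta^2+\delta)\opt$ by the choice of the heavy-cluster threshold and of $\delta$. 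Since $\opt<\alg/(\eta^2+\delta)$ this total is $\Omega(\alg)$; as $m\le k$, discarding the trimmed swaps whose improvement is below $c\alg/k$ for a small constant $c$ (they contribute at most $c\alg$ in total) leaves a family of \emph{good} trimmed swaps with total improvement still $\Omega(\alg)$, each decreasing the cost by $\Omega(\alg/k)$ and touching only heavy clusters of cost $\Omega(\alg/k)$.

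Finally the sampling step. The $p$ points of $In$ are i.i.d.\ $D^2$-samples and distinct cores are disjoint, so by \Cref{lem:sample-from-core} (applicable since the clusters below are heavy) the probability that $q_1$ lands in the core of \emph{some} cluster touched by \emph{some} good trimmed swap is $\Omega(\sum_{j\ \mathrm{good}}\sum_{i\in\widehat A_j}\cost{\optcl_i}{\calC}/\alg)=\Omega(1)$, because that double sum is at least the total good-swap improvement divided by $\alg$. Condition on $q_1\in\core{\optcl_{i^*}}$ and let $j^*$ be the unique good swap with $i^*\in\widehat A_{j^*}$; each of the remaining $|\widehat A_{j^*}|-1\le p-1$ clusters of $\widehat A_{j^*}$ is heavy with cost $\Omega(\alg/k)$, so the (independent) samples $q_2,\dots,q_{|\widehat A_{j^*}|}$ land in their respective cores with probability $\Omega(1/k^{p-1})$. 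On this event $In$ contains a point in every core of $\widehat A_{j^*}$; taking $Out=\widehat B_{j^*}$ together with the $\le p-|\widehat A_{j^*}|$ leftover samples --- which cancel out, exactly as in the footnote --- realizes precisely the core-point version of trimmed swap $j^*$, so the best swap the algorithm performs decreases the cost by at least $\Omega(\alg/k)$, i.e.\ by a factor $1-\Omega(1/k)$. Multiplying the two probabilities yields $\Omega(1/k^{p-1})$, which is the claim.

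\textbf{Main obstacle.} The delicate part is the deterministic estimate: one must make the~\cite{kanungo} swap bound survive both the substitution of core points for optimal centers (so the constants land on \Cref{eq:eta-multi-swap}) and the deletion of light and low-cost clusters (so that every surviving swap touches only heavy clusters of cost $\Omega(\alg/k)$), all while keeping the accumulated slack within $O(\delta)\opt$. It is exactly this trimming that upgrades ``some good swap improves by $\Omega(\alg/k)$'' to ``\emph{every} cluster of a good swap has cost $\Omega(\alg/k)$'', and that per-cluster cost lower bound is what lets the conditional sampling probability reach $\Omega(1/k^{p-1})$ rather than a much smaller power of $1/k$.
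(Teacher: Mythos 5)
Your proposal is correct and takes essentially the same route as the paper's proof: the same capture-based pairing of optimal and current centers into size-$\le p$ test swaps with the $(1+1/p)$ multiplicity accounting, core points standing in for optimal centers via \Cref{lem:moved-cost} and \Cref{lem:sample-from-core} (which produces the $(2+\delta)$-factor behind \Cref{eq:eta-multi-swap}), the reassignment bound of \Cref{lem:reassignment-cost-tech}, and $D^2$-sampling hitting all cores of one profitable swap with probability $\Omega(1/k^{p-1})$. The differences are organizational rather than substantive: you sum the available improvement and discard swaps gaining less than $c\cdot\alg/k$ by averaging, whereas the paper defines good/bad swaps via the strongly-improving property and derives a contradiction from bad clusters carrying $(1-2\delta)\alg$; and you explicitly trim light (well-served) clusters so that \Cref{lem:sample-from-core} applies to every cluster the sampler must hit, correctly noting that this costs only $O(\delta)\cdot\opt$ and hence keeps the constants of \Cref{eq:eta-multi-swap} (just make the deletions from $B_j$ non-arbitrary, e.g.\ prefer lonely centers, so the re-routing to $\calC[\optset[\cdot]]$ stays valid).
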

% \vspace{-0.1cm}
\begin{proof}[Proof of \Cref{thm:multi-swap-analysis}]
First, we show that $O(k^{p} \log\log k)$ local steps suffice to obtain the desired approximation ratio, with constant probability.
Notice that a local search step can only improve the cost function, so it is sufficient to show that the approximation ratio is achieved at some point in time.
We initialize our centers using $k$-means++, which gives a $O(\log k)$-approximation in expectation. Thus, using Markov's inequality the approximation guarantee $O(\log k)$ holds with arbitrary high constant probability. We say that a local-search step is \emph{successful} if it improves the cost by a factor of at least $1-\Omega(1/k)$. 
Thanks to \Cref{lem:multi-swap-improvement}, we know that unless the algorithm has already achieved the desired approximation ratio then a local-search step is successful with probability $\Omega(1/k^{p-1})$.
To go from $O(\log k)$ to $\eta^2 + \delta$ we need $O(k \log \log k)$ successful local search steps.
Standard concentration bounds on the value of a Negative Binomial random variable show that, with high probability, the number of trial to obtain $O(k \log \log k)$ successful local-search steps is $O(k^p \log \log k)$. % with high probability. 
Therefore, after $O(k^p \log \log k)$ local-search steps we obtain an approximation ratio of $\eta^2 + \delta$. 

To prove the running time bound it is sufficient to show that a local search step can be performed in time $\widetilde O(ndk^{p-1})$. This is possible if we maintain, for each point $x \in P$, a dynamic sorted dictionary\footnote{Also known as dynamic predecessor search data structure.} storing the pairs $(\cost{x}{c_i}, c_i)$ for each $c_i \in \calC$. Then we can combine the exhaustive search over all possible size-$p$ subsets of $\calC \cup In$ and the computation of the new cost function using time $O(ndk^{p-1} \log k)$. To do so, we iterate over all possible size-$(p-1)$ subsets $Z$ of $\calC \cup In$ and update all costs as if these centers were removed, then for each point $x \in P$ we compute how much its cost increases if we remove its closest center $c_x$ in $(\calC \cup In) \setminus Z$ and charge that amount to $c_x$. In the end, we consider $Out = Z \cup \{c\}$ where $c$ is the cheapest-to-remove center found in this way.
%  (namely, the one minimizing $\sum_{c_x = c} \cost{x}{(\calC \cup In) \setminus (Z \cup \{c_x\})} - \cost{x}{c_x}$)
\end{proof}

\vspace{-0.1cm}
The rest of this section is devoted to proving \Cref{lem:multi-swap-improvement}. 
For convenience, we prove that \Cref{lem:multi-swap-improvement} holds whenever $\alg / \opt > \eta^2 + O(\delta)$, which is wlog by rescaling $\delta$. Recall that we now focus on a given step of the algorithm, and when we say current cost, current centers and current clusters we refer to the state of these objects at the end of the last local-search step before the current one. 
Let $\optcl_1 \dots \optcl_k$ be an optimal clustering of $P$ and let $\optset = \set{o_i = \mu(\optcl_i) \cond \text{ for } i = 1 \dots k}$ be the set of optimal centers of these clusters. We denote with $C_1 \dots C_k$ the current set of clusters at that stage of the local search and with $\calC = \{c_1 \dots c_k\}$ the set of their respective current centers.

We say that $c_i$ \emph{captures} $o_j$ if $c_i$ is the closest current center to $o_j$, namely $c_i = \calC[o_j]$. We say that $c_i$ is \emph{busy} if it captures more than $p$ optimal centers, and we say it is \emph{lonely} if it captures no optimal center. 
Let $\tilo = \set{o_i \cond \cost{\optcl_i}{\calC} > \delta \cdot \alg / k}$ and $\tilc = \calC \setminus \{\calC[o_i] \cond o_i \in \optset \setminus \tilo\}$. For ease of notation, we simply assume that $\tilo = \{o_1 \dots o_h\}$ and $\tilc = \{c_1 \dots c_{h'}\}$. Notice that $h' > h$.

\vspace{-0.2cm}
\paragraph{Weighted ideal multi-swaps.} 
Given $In \subseteq P$ and $Out \subseteq \tilc$ of the same size we say that the swap $(In, Out)$ is an \emph{ideal} swap if $In \subseteq \tilo$.
We now build a set of \emph{weighted} ideal multi-swaps $\calS$. First, suppose wlog that $\set{c_1 \dots c_t}$ is the set of current centers in $\tilc$ that are neither lonely nor busy.
Let $\calL$ be the set of lonely centers in $\tilc$. For each $i=1 \dots t$, we do the following. 
Let $In$ be the set of optimal centers in $\tilo$ captured by $c_i$. 
Choose a set $\calL_i$ of $|In|-1$ centers from $\calL$, set $\calL \leftarrow \calL \setminus \calL_i$ and define $Out = \calL_i \cup \{c_i\}$. Assign weight $1$ to $(In, Out)$ and add it to $\calS$.
For each busy center $c_i\in \set{c_{t+1} \dots c_{h'}}$ let $A$ be the set of optimal centers in $\tilo$ captured by $c_i$, pick a set $\calL_i$ of $|A| - 1$ lonely current centers from $\calL$ (a counting argument shows that this is always possible). Set $\calL \leftarrow \calL \setminus \calL_i$. For each $o_j \in A$ and $c_\ell \in \calL_i$ assign weight $1/(|A| - 1)$ to $(o_j, c_\ell)$ and add it to $\calS$.
Suppose we are left with $\ell$ centers $o'_1 \dots o'_\ell \in \tilo$ such that $\calC[o'_i] \not\in \tilc$. Apparently, we have not included any $o'_i$ in any swap yet. However, since $|\tilc| \geq |\tilo|$, we are left with at least $\ell' \geq \ell$ lonely centers $c'_1 \dots c'_{\ell'} \in \tilc$. For each $i =1 \dots \ell$ we assign weight $1$ to $(o'_i, c'_i)$ and add it to $\calS$.

\begin{observation} \label{obs:ideal-swaps-weights}
The process above generates a set of weighted ideal multi-swaps such that: (i) Every swap has size at most $p$; (ii) The combined weights of swaps involving an optimal center $o_i \in \tilo$ is $1$; (iii) The combined weights of swaps involving a current center $c_i$ is at most $1 + 1/p$.   
\end{observation}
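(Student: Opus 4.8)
The plan is to read all three properties off the construction by a two‑sided case analysis: one over the three types of center in $\tilc$ — \emph{lonely}, \emph{busy}, and the remaining ones $c_1,\dots,c_t$ that are neither — together with the dual analysis over the types of center in $\tilo$, namely those captured by a neither‑center of $\tilc$, those captured by a busy center of $\tilc$, and the leftover ones $o'_1,\dots,o'_\ell$ whose capturer lies outside $\tilc$. Before that I would record the single counting fact the construction relies on, namely that the pool $\calL$ of lonely centers never runs dry: writing $N$ for the number of non‑lonely centers of $\tilc$, the total number of lonely centers requested — that is $\sum(|In_i|-1)$ over neither‑centers (with $In_i$ the set captured by $c_i$), plus $\sum(|A|-1)$ over busy centers, plus $\ell$ for the leftovers — adds up to $|\tilo| - N$, using that every center of $\tilo$ is captured by exactly one center and only non‑lonely centers capture anything, whereas the supply of lonely centers is $|\tilc| - N$; since $|\tilc| \ge |\tilo|$ and requests are served sequentially from a common pool, every request can be met.

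For property (i) I would simply note that each swap added to $\calS$ is either $(In,\calL_i\cup\{c_i\})$ for a neither‑center $c_i$, of size $|In|\le p$ precisely because $c_i$ is not busy, or a singleton swap $(o_j,c_\ell)$ produced from a busy center, or a singleton leftover swap $(o'_i,c'_i)$ — all of size at most $p$.

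For property (ii) I would fix $o_i\in\tilo$ and split on the type of its capturer $\calC[o_i]$, which (since it does capture $o_i$) is either a neither‑center of $\tilc$, a busy center of $\tilc$, or a center outside $\tilc$. In the first case $o_i$ lies in the $In$‑set of that center's unique swap, of weight $1$; in the third case $o_i$ is a leftover $o'_j$ and lies only in $(o'_j,c'_j)$, of weight $1$; in the second case, with captured set $A$, $o_i$ lies in exactly the $|A|-1$ swaps $(o_i,c_\ell)$, $c_\ell\in\calL_i$, each of weight $1/(|A|-1)$, which sum to $1$. Hence the total weight of swaps touching $o_i$ is always exactly $1$.

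Property (iii) is the only place needing an inequality rather than an identity, and is where the (mild) subtlety sits; I would expect this to be the main point requiring care. Here I would fix a current center $c_i$ and note first that, since every $In$‑set contains only optimal centers, $c_i$ can appear only inside $Out$‑sets. A busy center is never placed in any $Out$‑set (the busy step only ever uses the lonely centers of its $\calL_i$), so its weight is $0$. A neither‑center appears only in the $Out$ of its own swap, with weight $1$. A lonely center, deleted from $\calL$ the first time it is chosen, is used by at most one of the following: the $\calL_j$ of a neither‑center (it then sits in one $Out$‑set, weight $1$); the $\calL_j$ of a busy center with captured set $A$, in which case it occurs in the $|A|$ swaps $(o_j,c_i)$, $o_j\in A$, each of weight $1/(|A|-1)$, for a total $|A|/(|A|-1)=1+1/(|A|-1)\le 1+1/p$ since $|A|\ge p+1$; a single leftover swap, weight $1$; or nothing at all, weight $0$. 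The maximum of these is $1+1/p$, attained exactly when a lonely center is absorbed by a busy center capturing precisely $p+1$ optimal centers, which is the claimed bound.
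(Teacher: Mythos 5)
Your proof is correct, and it is essentially the verification the paper intends: the observation is stated without proof as an immediate consequence of the construction, and your case analysis over neither/busy/lonely centers (weight $1$ for a neither-center's own swap, $|A|/(|A|-1)\le 1+1/p$ for a lonely center absorbed by a busy center with $|A|\ge p+1$, weight exactly $1$ per optimal center) is exactly the intended reading. Your counting argument that the pool $\calL$ never runs dry, using $|\tilc|\ge|\tilo|$, is also precisely the argument the paper alludes to parenthetically.
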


Consider an ideal swap $(In, Out)$. Let $\optcl_{In} = \bigcup_{o_i \in In} \optcl_i$ and $C_{Out} = \bigcup_{c_j \in Out} C_j$. Define the reassignment cost $\reassign{In}{Out}$ as the increase in cost of reassigning points in $C_{Out} \setminus \optcl_{In}$ to centers in $\calC \setminus Out$. Namely, 
\[
\reassign{In}{Out} =  \cost{C_{Out} \setminus \optcl_{In}}{\calC \setminus Out} - \cost{C_{Out} \setminus \optcl_{In}}{\calC}.
\]
We take the increase in cost of the following reassignment as an upper bound to the reassignment cost. For each $p \in C_{Out} \setminus \optcl_{In}$ we consider its closest optimal center $\optset[p]$ and reassign $p$ to the current center that is closest to $\optset[p]$, namely $\calC[\optset[p]]$. In formulas, we have
\begin{align*}
\reassign{In}{Out} &\leq \sum_{p \in C_{Out} \setminus \optcl_{In}} \cost{p}{\calC[\optset[p]]} - \cost{p}{\calC[p]} \\ &\leq \sum_{p \in C_{Out}} \cost{p}{\calC[\optset[p]]} - \cost{p}{\calC[p]}.
\end{align*}
Indeed, by the way we defined our ideal swaps we have $\calC[\optset[p]] \not \in Out$ for each $p \not\in \optcl_{In}$ and this reassignment is valid. Notice that the right hand side in the equation above does not depend on $In$. 

\begin{restatable}{lemma}{CombinedReassignmentCost}
\label{lem:reassignment-cost-tech}
$\sum_{p \in P} \cost{p}{\calC[\optset[p]]} \leq 2\opt + \alg + 2\sqrt{\alg} \sqrt{\opt}$.
\end{restatable}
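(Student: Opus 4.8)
The plan is to reorganise the sum by optimal clusters and then replace a naive triangle inequality by the exact Pythagorean identity of \Cref{lem:moved-cost}. Since a fixed optimal clustering may be taken to be the Voronoi partition induced by $\optset$, we have $\optset[p] = o_i$ for every $p \in \optcl_i$, hence $\calC[\optset[p]] = \calC[o_i]$ and
\[
\sum_{p \in P} \cost{p}{\calC[\optset[p]]} \;=\; \sum_{i=1}^{k} \cost{\optcl_i}{\calC[o_i]}.
\]
Applying \Cref{lem:moved-cost} with $Q = \optcl_i$ and the point $\calC[o_i]$ (using $o_i = \mu(\optcl_i)$) gives $\cost{\optcl_i}{\calC[o_i]} = \cost{\optcl_i}{o_i} + |\optcl_i|\cdot\norm{\calC[o_i]-o_i}^2$, so the right-hand side equals $\opt + \sum_{i} |\optcl_i|\cdot\norm{\calC[o_i]-o_i}^2$. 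It therefore suffices to prove $\sum_{i} |\optcl_i|\cdot\norm{\calC[o_i]-o_i}^2 \le \opt + \alg + 2\sqrt{\alg}\sqrt{\opt}$.

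For this last bound I would exploit that $\calC[o_i]$ is the current center nearest to $o_i$: for every $p \in \optcl_i$ we have $\norm{\calC[o_i]-o_i} \le \norm{\calC[p]-o_i} \le \norm{p-o_i} + \norm{p-\calC[p]}$. Summing the squares over $p \in \optcl_i$ and applying Minkowski's inequality (the triangle inequality for the $\ell_2$ norm) to the sequences $(\norm{p-o_i})_{p\in\optcl_i}$ and $(\norm{p-\calC[p]})_{p\in\optcl_i}$ yields $|\optcl_i|\cdot\norm{\calC[o_i]-o_i}^2 \le \big(\sqrt{\cost{\optcl_i}{o_i}}+\sqrt{\cost{\optcl_i}{\calC}}\big)^2$. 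Summing over $i$, expanding the square, noting $\sum_i \cost{\optcl_i}{o_i} = \opt$ and $\sum_i \cost{\optcl_i}{\calC} = \alg$, and bounding the cross term $\sum_i \sqrt{\cost{\optcl_i}{o_i}}\,\sqrt{\cost{\optcl_i}{\calC}} \le \sqrt{\opt}\,\sqrt{\alg}$ by Cauchy--Schwarz, gives exactly the claimed inequality; combining with the previous paragraph completes the proof.

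Every step is elementary; the one subtlety worth flagging is that one must \emph{not} bound $\norm{p-\calC[o_i]} \le \norm{p-o_i} + \norm{o_i-\calC[o_i]}$ pointwise and then square, since that route yields $4\opt$ instead of $2\opt$. The whole point of routing through \Cref{lem:moved-cost} is that the cross term $2\sum_{p\in\optcl_i}\langle p-o_i,\, o_i-\calC[o_i]\rangle$ vanishes because $o_i$ is the centroid of $\optcl_i$, which is precisely what keeps the coefficient of $\opt$ at $2$. A minor bookkeeping point is to fix tie-breaking so that $\optset[p]=o_i$ for $p\in\optcl_i$.
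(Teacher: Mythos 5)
Your proof is correct and follows essentially the same route as the paper's: decompose by optimal clusters, apply \Cref{lem:moved-cost} to extract the $\opt$ term plus $\sum_i |\optcl_i|\,\norm{o_i-\calC[o_i]}^2$, use that $\calC[o_i]$ is the nearest current center to compare against $\calC[p]$, and control the resulting cross term to get $2\sqrt{\alg}\sqrt{\opt}$. The only cosmetic difference is the aggregation of that cross term — you use Minkowski within each cluster and then Cauchy--Schwarz across clusters, whereas the paper applies a single Cauchy--Schwarz over all points $p \in P$ — and both give the identical bound.
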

\vspace{-0.3cm}
\begin{proof} Deferred to the supplementary material.
\end{proof}

\begin{lemma} \label{lem:combined-reassign-cost-ms}
The combined weighted reassignment costs of all ideal multi-swaps in $\calS$ is at most $(2+2/p) \cdot (\opt + \sqrt{\alg}\sqrt{\opt})$. 
\end{lemma}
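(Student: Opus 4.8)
The plan is to bound the combined weighted reassignment cost by summing $\reassign{In}{Out}$ over all weighted ideal swaps in $\calS$ and using the per-point upper bound already derived, namely $\reassign{In}{Out} \le \sum_{p \in C_{Out}} \big(\cost{p}{\calC[\optset[p]]} - \cost{p}{\calC[p]}\big)$. Since this bound depends only on $Out$ (not on $In$), I would first reorganize the weighted sum by current centers: for each current center $c_i$, the swaps in $\calS$ that contain $c_i$ in their $Out$-set have combined weight at most $1 + 1/p$ by \Cref{obs:ideal-swaps-weights}(iii). Hence the total weighted reassignment cost is at most $(1+1/p) \sum_{c_i \in \calC} \sum_{p \in C_i} \big(\cost{p}{\calC[\optset[p]]} - \cost{p}{\calC[p]}\big)$.

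Next I would observe that $\{C_i\}_{i}$ partitions $P$ (the current clusters cover all points) and that for $p \in C_i$ we have $\calC[p] = c_i$, so $\sum_{c_i} \sum_{p \in C_i} \cost{p}{\calC[p]} = \cost{P}{\calC} = \alg$. Therefore the double sum telescopes to $\sum_{p \in P} \cost{p}{\calC[\optset[p]]} - \alg$. Applying \Cref{lem:reassignment-cost-tech}, which gives $\sum_{p \in P} \cost{p}{\calC[\optset[p]]} \le 2\opt + \alg + 2\sqrt{\alg}\sqrt{\opt}$, this difference is at most $2\opt + 2\sqrt{\alg}\sqrt{\opt} = 2(\opt + \sqrt{\alg}\sqrt{\opt})$. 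Multiplying by the weight factor $1 + 1/p$ yields the claimed bound $(2 + 2/p)(\opt + \sqrt{\alg}\sqrt{\opt})$.

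One subtlety I would make sure to handle carefully: the per-point bound on $\reassign{In}{Out}$ was stated as a sum over $p \in C_{Out}$, which already enlarges the sum from $C_{Out} \setminus \optcl_{In}$ to all of $C_{Out}$; I need the validity of the reassignment, i.e. that $\calC[\optset[p]] \notin Out$ for the points actually being reassigned (those in $C_{Out} \setminus \optcl_{In}$), which is exactly guaranteed by the construction of the ideal swaps — every swap's $Out$-set, apart from a possible busy/non-lonely center $c_i$, consists of lonely centers, and a lonely center captures no optimal center so it is never of the form $\calC[\optset[p]]$, while the single non-lonely center $c_i$ captures only optimal centers that are in the corresponding $In$-set, so for $p \notin \optcl_{In}$ we indeed have $\calC[\optset[p]] \neq c_i$. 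I expect the main (though modest) obstacle to be simply making the reindexing of the weighted sum fully rigorous — switching from "sum over swaps, weighted" to "sum over current centers, with accumulated weight at most $1+1/p$" — and being careful that points in $C_i$ lying in no $C_{Out}$ contribute nonnegatively, so dropping/adding them only weakens the inequality in the right direction. Everything else is bookkeeping on top of \Cref{lem:reassignment-cost-tech} and \Cref{obs:ideal-swaps-weights}.
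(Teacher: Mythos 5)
Your proposal is correct and follows essentially the same route as the paper's own proof: bound each $\reassign{In}{Out}$ by the $Out$-dependent per-point sum, regroup by current centers using Observation~\ref{obs:ideal-swaps-weights}(iii) (with the weight cap $1+1/p$ and nonnegativity of the per-point terms), and finish with Lemma~\ref{lem:reassignment-cost-tech}. The validity check that $\calC[\optset[p]] \notin Out$ for $p \notin \optcl_{In}$, which you rightly flag, is exactly the remark the paper makes when defining the reassignment bound.
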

\vspace{-0.3cm}
\begin{proof}
Denote by $w(In, Out)$ the weight associated with the swap $(In, Out)$.
\begin{align*}
\sum_{(In, Out) \in \calS} w(In, Out) \cdot \reassign{In}{Out} &\leq \\
\sum_{(In, Out) \in \calS} w(In, Out) \cdot \sum_{p \in C_{Out}} \cost{p}{\calC[\optset[p]]} - \cost{p}{\calC[p]} &\leq \\
(1+1/p) \cdot \sum_{c_j \in \calC} \sum_{p \in C_j} \cost{p}{\calC[\optset[p]]} - \cost{p}{\calC[p]} &\leq \\
(1+1/p) \cdot \left(\sum_{p \in P} \cost{p}{\calC[\optset[p]]} - \alg\right).
\end{align*}
The second inequality uses $(iii)$ from \Cref{obs:ideal-swaps-weights}. Applying \Cref{lem:reassignment-cost-tech} completes the proof.
\end{proof}
\vspace{-0.1cm}

Recall the notions of radius and core of an optimal cluster introduced in \Cref{sec:preliminaries}.
We say that a swap $(In, Out)$ is \emph{strongly improving} if $\cost{P}{(\calC \cup In) \setminus Out} \leq (1 - \delta/ k ) \cdot \cost{P}{\calC}$.
Let $In = \{o_1 \dots o_s\} \subseteq \tilo$ and $Out = \{c_1 \dots c_s \} \subseteq \tilc$ we say that an ideal swap $(In, Out)$ is \emph{good} if for every $q_1 \in \core{o_1} \dots q_s \in \core{o_s}$ the swap $(\calQ, Out)$ is strongly improving, where $\calQ = \set{q_1 \dots q_s}$. We call an ideal swap \emph{bad} otherwise. 
We say that an optimal center $o_i \in \tilo$ is good if that's the case for at least one of the ideal swaps it belongs to, otherwise we say that it is bad.
Notice that each optimal center in $\tilo$ is assigned to a swap in $\calS$, so it is either good or bad.
% Recalling how we constructed our set of weighted ideal swaps $\calS$, if $o_i$ belongs to a $s$-swap $(In, Out) \in \calS$ for $s > 1$ then it is good if and only if $(In, Out)$ is good; else $o_i$ might belong to up to $p-1$ single swaps and then any of them being good would suffice.
Denote with $G$ the union of cores of good optimal centers in $\tilo$. 

\begin{lemma} \label{lem:bad-swap-inequality}
If an ideal swap $(In, Out)$ is bad, then we have
\begin{equation} \label{eq:bad-swap}
\cost{\optcl_{In}}{\calC} \leq (2+\delta) \cost{\optcl_{In}}{\optset} + \reassign{In}{Out} + \delta \alg / k.
\end{equation}
\end{lemma}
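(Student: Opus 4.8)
The plan is to convert the hypothesis ``$(In,Out)$ is bad'' into a concrete swap whose post-swap cost I can upper bound explicitly, and then read the inequality off by comparing with the ``not strongly improving'' lower bound. Write $In=\{o_1,\dots,o_s\}$ and $Out=\{c_1,\dots,c_s\}$. By definition of a bad ideal swap there exist core representatives $q_i\in\core{o_i}$, $i=1,\dots,s$, for which the swap $(\calQ,Out)$ with $\calQ=\{q_1,\dots,q_s\}$ fails to be strongly improving; I would fix this adversarial tuple $q_1,\dots,q_s$ once and for all at the very start of the argument, so that the inequality it yields applies to exactly the configuration I am about to bound. Setting $\calC':=(\calC\setminus Out)\cup\calQ$, this hypothesis reads $\cost{P}{\calC'} > (1-\delta/k)\,\alg = \alg - \delta\alg/k$. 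The whole proof is then to establish the matching upper bound $\cost{P}{\calC'}\le \alg - \cost{\optcl_{In}}{\calC} + (2+O(\delta))\cost{\optcl_{In}}{\optset} + \reassign{In}{Out}$ and rearrange.

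For that upper bound I would partition $P$ into $\optcl_{In}$, $C_{Out}\setminus\optcl_{In}$, and $P\setminus(\optcl_{In}\cup C_{Out})$, and assign each part to a center of $\calC'$ separately. Points of the third part keep their current center, which lies in $\calC\setminus Out\subseteq\calC'$, contributing exactly $\cost{P\setminus(\optcl_{In}\cup C_{Out})}{\calC}$. Points of $C_{Out}\setminus\optcl_{In}$ I would reassign optimally within $\calC\setminus Out$; by the very \emph{definition} of $\reassign{In}{Out}$ this contributes at most $\cost{C_{Out}\setminus\optcl_{In}}{\calC} + \reassign{In}{Out}$ (invoking the definition directly, rather than the explicit reassignment used in the paragraphs above, avoids having to re-check any ``$\calC[\optset[p]]\notin Out$'' legality condition). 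Finally a point $p\in\optcl_i\subseteq\optcl_{In}$ I would send to its representative $q_i$: by \Cref{lem:moved-cost}, $\cost{\optcl_i}{q_i} = \cost{\optcl_i}{o_i} + |\optcl_i|\cdot\norm{q_i-o_i}^2 \le \bigl(1+(1+\delta)^2\bigr)\cost{\optcl_i}{o_i}$, since $q_i\in\core{o_i}$ forces $\norm{q_i-o_i}\le(1+\delta)\rho_i$ and $|\optcl_i|\rho_i^2=\cost{\optcl_i}{o_i}$; summing over $i\in\{1,\dots,s\}$ and using $\sum_{o_i\in In}\cost{\optcl_i}{o_i}=\cost{\optcl_{In}}{\optset}$ bounds this part by $(2+3\delta)\cost{\optcl_{In}}{\optset}$.

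The single observation that makes everything collapse is that $C_{Out}\setminus\optcl_{In}$ and $P\setminus(\optcl_{In}\cup C_{Out})$ are disjoint with union exactly $P\setminus\optcl_{In}$, so the two ``current-cost'' contributions add up to $\cost{P\setminus\optcl_{In}}{\calC}=\alg-\cost{\optcl_{In}}{\calC}$. Collecting the three parts gives $\cost{P}{\calC'}\le \alg - \cost{\optcl_{In}}{\calC} + (2+3\delta)\cost{\optcl_{In}}{\optset} + \reassign{In}{Out}$, and comparing with $\cost{P}{\calC'} > \alg - \delta\alg/k$ yields \Cref{eq:bad-swap} up to the constant multiplying $\delta$. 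I do not expect a genuine obstacle here; the only care needed is the $\delta$-bookkeeping --- shrinking the core parameter slightly up front (equivalently rescaling $\delta$, as already done for \Cref{lem:multi-swap-improvement}) turns the factor $2+3\delta$ into $2+\delta$ while only improving the $\delta\alg/k$ slack --- and making sure the adversarial core tuple $q_1,\dots,q_s$ is chosen before, not after, the reassignment is constructed.
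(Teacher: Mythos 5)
Your argument is correct and is essentially the paper's proof in contrapositive-free form: you fix the witness tuple $q_1,\dots,q_s$ provided by badness and rearrange, whereas the paper assumes \Cref{eq:bad-swap} fails and concludes every such tuple gives a strongly improving swap, but both rest on the identical decomposition — bounding $\cost{\optcl_{In}}{\calQ}$ by roughly $2\cdot\cost{\optcl_{In}}{\optset}$ via \Cref{lem:moved-cost} and the core radius $(1+\delta)\rho_i$, and absorbing everything outside $\optcl_{In}$ into $\reassign{In}{Out}$ plus the unchanged current cost. Your $(2+3\delta)$-versus-$(2+\delta)$ bookkeeping remark matches the paper's own implicit rescaling of $\delta$, so there is nothing substantive to add.
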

\vspace{-0.3cm}
\begin{proof}
Let $In = \{o_1 \dots o_s\}$, $\calQ = \set{q_1 \dots q_s}$ such that $q_1 \in \core{o_1} \dots q_s \in \core{o_s}$. Then, by \Cref{lem:moved-cost} $\cost{\optcl_{In}}{\calQ} \leq (2+\delta) \cost{\optcl_{In}}{\optset}$. Moreover, $\reassign{In}{Out} = \cost{P \setminus \optcl_{In}}{\calC \setminus Out} - \cost{P \setminus \optcl_{In}}{\calC}$ because points in $P \setminus C_{Out}$ are not affected by the swap. 
Therefore, $\cost{P}{(\calC \cup \calQ) \setminus Out} \leq (2+\delta) \cost{\optcl_{In}}{\optset} + \reassign{In}{Out} + \cost{P \setminus \optcl_{In}}{\calC}$. Suppose by contradiction that \Cref{eq:bad-swap} does not hold, then
\vspace{-0.1cm}
\begin{align*}
\cost{P}{\calC} - \cost{P}{(\calC \cup \calQ) \setminus Out} &= \\ \cost{P \setminus \optcl_{In}}{\calC} + \cost{\optcl_{In}}{\calC} - \cost{P}{(\calC \cup \calQ) \setminus Out} &\geq \delta \alg / k.
\end{align*}
Hence, $(\calQ, Out)$ is strongly improving and this holds for any choice of $\calQ$, contradiction.
\end{proof}

\begin{lemma} \label{lem:all-qs-in-cores}
If $\alg  /\opt > \eta^2 + \delta$ then $\cost{G}{\calC} = \Omega_\delta(\cost{P}{\calC})$. Thus, if we $D^2$-sample $q$ we have $P[q \in G] = \Omega_\delta(1)$.
\end{lemma}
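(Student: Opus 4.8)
\textbf{Proof plan for \Cref{lem:all-qs-in-cores}.}
The plan is to bound the cost of bad optimal centers from above, and conclude that good cores carry an $\Omega_\delta(1)$ fraction of the total cost. First I would split $\opt$ into three parts according to the optimal clusters: (a) clusters $\optcl_i$ with $o_i \notin \tilo$, (b) good centers $o_i \in \tilo$, (c) bad centers $o_i \in \tilo$. For part (a), by definition of $\tilo$ each such cluster satisfies $\cost{\optcl_i}{\calC} \le \delta \alg / k$, so summing over at most $k$ clusters contributes at most $\delta \alg$ to $\sum_i \cost{\optcl_i}{\calC}$; this is the "negligible tail" and is discarded into the $\Omega_\delta$ slack.

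Next, the heart of the argument: bound $\sum_{o_i \text{ bad}} \cost{\optcl_i}{\calC}$. Each bad $o_i$ lies in some ideal swap $(In, Out) \in \calS$, and by \Cref{lem:bad-swap-inequality} that swap satisfies \Cref{eq:bad-swap}. I would sum \Cref{eq:bad-swap} over the bad swaps, weighted by $w(In,Out)$. Using property (ii) of \Cref{obs:ideal-swaps-weights} (the combined weight of swaps touching a fixed $o_i \in \tilo$ is exactly $1$), the left side sums to at least $\sum_{o_i \text{ bad}} \cost{\optcl_i}{\calC}$ (a bad center appears only in bad swaps, since it is bad precisely when none of its swaps is good), and the first term on the right sums to at most $(2+\delta) \opt$. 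The reassignment term, summed with weights over \emph{all} swaps in $\calS$, is at most $(2+2/p)(\opt + \sqrt{\alg}\sqrt{\opt})$ by \Cref{lem:combined-reassign-cost-ms}; the last term $\delta\alg/k$ summed over $O(k)$ swaps gives $O(\delta)\alg$. Hence
\[
\sum_{o_i \text{ bad}} \cost{\optcl_i}{\calC} \le (2+\delta)\opt + (2+2/p)\bigl(\opt + \sqrt{\alg}\sqrt{\opt}\bigr) + O(\delta)\alg.
\]
Adding back the part-(a) contribution, the total cost \emph{not} coming from good cores is bounded, modulo $O(\delta)\alg$, by $(4+2/p)\opt + (2+2/p)\sqrt{\alg}\sqrt{\opt}$.

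Finally I would turn this into a statement about $\cost{G}{\calC}$. Writing $\alg = \sum_i \cost{\optcl_i}{\calC}$, the bound above shows the sum over non-good $i$ (and hence everything except good clusters) is at most $(4+2/p)\opt + (2+2/p)\sqrt{\alg}\sqrt{\opt} + O(\delta)\alg$, so $\sum_{o_i \text{ good}} \cost{\optcl_i}{\calC} \ge \alg - (4+2/p)\opt - (2+2/p)\sqrt{\alg}\sqrt{\opt} - O(\delta)\alg$. Setting $\eta = \sqrt{\alg/\opt}$ and dividing by $\opt$, the right side is $\opt\bigl(\eta^2 - (2+2/p)\eta - (4+2/p) - O(\delta)\eta^2\bigr)$; by \Cref{eq:eta-multi-swap} the bracket is $0$ at $\eta^2 = \alg/\opt$ exactly on the boundary, so when $\alg/\opt > \eta^2 + \delta$ (i.e. strictly past the root) the bracket is $\Omega_\delta(1) \cdot \eta^2$, giving $\sum_{o_i \text{ good}} \cost{\optcl_i}{\calC} = \Omega_\delta(\alg)$. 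To pass from $\sum_{o_i \text{ good}}\cost{\optcl_i}{\calC}$ to $\cost{G}{\calC}$ I would invoke \Cref{lem:sample-from-core}: each good $o_i \in \tilo$ has $\cost{\optcl_i}{\calC} > \delta\alg/k$, and one checks this forces $\cost{\optcl_i}{\calC} > (2+3\delta)\cost{\optcl_i}{o_i}$ for the relevant parameter regime (otherwise $\cost{\optcl_i}{\calC}$ would be comparable to $\cost{\optcl_i}{o_i}$, which can be absorbed — this needs a short separate accounting, splitting good clusters into those with large vs.\ small $\calC$-cost relative to their optimal cost), so that $\cost{\core{o_i}}{\calC} = \Omega_\delta(\cost{\optcl_i}{\calC})$; summing over good $i$ gives $\cost{G}{\calC} = \Omega_\delta(\alg)$, and the $D^2$-sampling claim follows immediately. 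The main obstacle I anticipate is the bookkeeping in this last step — cleanly handling good clusters whose $\calC$-cost is not much larger than their optimal cost, so that \Cref{lem:sample-from-core} applies to the part of $G$ that actually matters — together with making sure every $O(\delta)$ loss is genuinely absorbable after the final rescaling of $\delta$.
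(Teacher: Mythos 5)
Your proposal follows essentially the same route as the paper's proof: the same three-way split of the clusters (those outside $\tilo$, bad, good), the same weighted summation of \Cref{lem:bad-swap-inequality} over bad swaps via \Cref{obs:ideal-swaps-weights} and \Cref{lem:combined-reassign-cost-ms}, and the same quadratic inequality in $\eta$ at the end; the paper merely phrases the middle step as a contradiction (``suppose the bad clusters carry more than $(1-2\delta)\alg$'') rather than directly, which is cosmetic.

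One remark on the final step, which you rightly flag and which the paper glosses over: $\cost{\optcl_i}{\calC} > \delta\alg/k$ does \emph{not} force the hypothesis $\cost{\optcl_i}{\calC} > (2+3\delta)\cost{\optcl_i}{o_i}$ of \Cref{lem:sample-from-core}, and the good clusters violating it cannot be ``absorbed'' into the $O(\delta)\alg$ slack as your parenthetical suggests: their combined cost can be as large as $(2+3\delta)\opt$, which is a constant (not $O(\delta)$) fraction of $\alg$ in the regime $\alg/\opt \approx \eta^2$. The clean way to close this is to fold the low-ratio clusters into the main summation alongside the bad ones: any cluster with $\cost{\optcl_i}{\calC} \le (2+3\delta)\cost{\optcl_i}{o_i}$ satisfies an inequality of the same shape as \Cref{eq:bad-swap} (with no reassignment term at all), so the combined cost of bad and low-ratio clusters is still at most $(2+O(\delta))\opt + (2+2/p)\bigl(\opt+\sqrt{\alg}\sqrt{\opt}\bigr) + O(\delta)\alg$, the same quadratic applies with unchanged constants, and the surviving $\Omega_\delta(\alg)$ of cost sits in good clusters to which \Cref{lem:sample-from-core} genuinely applies. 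With that accounting made explicit, your argument is complete and matches the paper's.
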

\vspace{-0.3cm}
\begin{proof}
First, we observe that the combined current cost of all optimal clusters in $\optset \setminus \tilo$ is at most $k \cdot \delta \alg / k = \delta \alg$. 
Now, we prove that the combined current cost of all $\optcl_i$ such that $o_i$ is bad is $\leq (1-2\delta)\alg$. Suppose, by contradiction, that it is not the case, then we have:
\begin{align*} \label{eq:combined-opt-reassignment}
(1-2\delta)\alg < \sum_{\text{ Bad } o_i \in \tilo} \cost{\optcl_i}{\calC} \leq 
\sum_{\text{ Bad } (In, Out) \in \calS} w(In, Out) \cdot \cost{\optcl_{In}}{\calC} &\leq \\
\sum_{\text{ Bad } (In, Out)} w(In, Out) \cdot \ld( (2+\delta) \cost{\optcl_{In}}{\optset} + \reassign{In}{Out} + \delta \alg / k \rd) &\leq \\
(2+\delta) \opt + (2 + 2/p)\opt + (2 + 2/p)\sqrt{\alg}\sqrt{\opt} + \delta \alg.
\end{align*}

The second and last inequalities make use of \Cref{obs:ideal-swaps-weights}. The third inequality uses \Cref{lem:bad-swap-inequality}.

Setting $\eta^2 = \alg / \opt$ we obtain the inequality $\eta^2 -(2 + 2/p \pm O(\delta)) \eta - (4 + 2/p \pm O(\delta)) \leq 0$.
Hence, we obtain a contradiction in the previous argument as long as $\eta^2 -(2 + 2/p \pm O(\delta)) \eta - (4 + 2/p \pm O(\delta)) > 0$. A contradiction there implies that at least an $\delta$-fraction of the current cost is due to points in $\bigcup_{\text{Good } o_i \in \tilo} \optcl_i$. We combine this with \Cref{lem:sample-from-core} and conclude that the total current cost of $G = \bigcup_{\text{Good } o_i \in \tilo} \core{\optcl_i}$ is $\Omega_\delta(\cost{P}{\calC})$.  
\end{proof}

\vspace{-0.3cm}
Finally, we prove \Cref{lem:multi-swap-improvement}. Whenever $q_1 \in G$ we have that $q_1 \in \core{o_1}$ for some good $o_1$. Then, for some $s \leq p$ we can complete $o_1$ with $o_2 \dots o_s$ such that $In  = \set{o_1 \dots o_s}$ belongs to a good swap. Concretely, there exists $Out \subseteq \calC$ such that $(In, Out)$ is a good swap. Since $In \subset \tilo$ we have $\cost{\optcl_i}{\calC} > \delta \opt / k$ for all $o_i \in In$, which combined with \Cref{lem:sample-from-core} gives that for $i = 2 \dots s$ $P[q_i \in \core{o_i}] \geq \Omega_\delta(1 / k)$. Hence, we have $P[q_i \in \core{o_i} \text{ for } i = 1 \dots s] \geq \Omega_{\delta, p}(1 / k^{p-1})$.
Whenever we sample $q_1 \dots q_s$ from $\core{o_1} \dots \core{o_s}$, we have that $(\calQ, Out)$ is strongly improving. Notice, however, that $(\calQ, Out)$ is a $s$-swap and we may have $s < p$. Nevertheless, whenever we sample $q_1 \dots q_s$ followed by any sequence $q_{s+1} \dots q_p$ it is enough to choose $Out' = Out \cup \{q_{s+1} \dots q_p\}$ to obtain that $(\{q_1 \dots q_p\}, Out')$ is an improving $p$-swap.

\vspace{-0.2cm}
\section{A Faster $(9+\varepsilon)$-Approximation Local Search Algorithm}
\vspace{-0.2cm} \label{sec:faster-9+eps-algo}

The MSLS algorithm from \Cref{sec:multi-swap-kmpp} achieves an approximation ratio of $\eta^2 + \varepsilon$, where $\eta^2 - (2 + 2/p) \eta - (\blue{4} + 2/p) = 0$ and $\varepsilon >0$ is an arbitrary small constant. For large $p$ we have $\eta \approx 10.48$. On the other hand, employing $p$ simultaneous swaps, \cite{kanungo} achieve an approximation factor of $\xi^2 + \varepsilon$ where $\xi^2 -(2 + 2/p) \xi - (\blue{3} + 2/p) = 0$. If we set $p \approx 1/\varepsilon$ this yields a $(9 + O(\varepsilon))$-approximation. In the same paper, they prove that $9$-approximation is indeed the best possible for $p$-swap local search, if $p$ is constant (see Theorem $3.1$ in \cite{kanungo}). 
They showed that $9$ is the right locality gap for local search, but they matched it with a very slow algorithm. To achieve a $(9+\varepsilon)$-approximation, they discretize the space reducing to $O(n \varepsilon^{-d})$ candidate centers and perform an exhaustive search over all size-$(1/\varepsilon)$ subsets of candidates at every step. As we saw in the related work section, it is possible to combine techniques from coreset and dimensionality reduction to reduce the number of points to $n' = k \cdot poly(\varepsilon^{-1})$ and the number of dimensions to $d' = \log k \cdot \varepsilon^{-2}$. This reduces the complexity of \cite{kanungo} to $k^{O(\varepsilon^{-3} \log \varepsilon^{-1})}$.

In this section, we leverage techniques from \cite{power-means} to achieve a $(9+\varepsilon)$-approximation faster
\footnote{The complexity in \Cref{thm:main-result} can be improved by applying the same preprocessing techniques using coresets and dimensionality reduction, similar to what can be used to speed up the approach of \cite{kanungo}. Our algorithm hence becomes asymptotically faster.}. In particular, we obtain the following.

\begin{restatable}{theorem}{MainTheorem} \label{thm:main-result}
Given a set of $n$ points in $\bbR^d$ with aspect ratio $\Delta$, there exists an algorithm that computes a $9+\varepsilon$-approximation to $k$-means in time $ndk^{O(\varepsilon^{-2})} \log^{O(\varepsilon^{-1})} (\Delta) \cdot 2^{-poly(\varepsilon^{-1})}$.
\end{restatable}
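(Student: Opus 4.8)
The plan is to run the $p$-swap local search of \Cref{sec:multi-swap-kmpp} with $p=\Theta(\eps^{-1})$ and $\delta=\Theta(\eps)$, changing only \emph{which points a swap is allowed to insert}. In \Cref{sec:multi-swap-kmpp} the factor $2+\delta$ in \Cref{lem:bad-swap-inequality} is precisely the price charged by \Cref{lem:moved-cost} for inserting a $D^2$-sampled core point $q_i$ of $\optcl_i$ instead of the true centroid $o_i$: since $\norm{q_i-o_i}\le(1+\delta)\rho_i$, one only gets $\cost{\optcl_i}{q_i}\le(2+O(\delta))\cost{\optcl_i}{o_i}$. We would instead insert a point $\ho_i$ that is a $(1+\eps)$-approximate $1$-means center of $\optcl_i$, i.e., with $\cost{\optcl_i}{\ho_i}\le(1+\eps)\cost{\optcl_i}{o_i}$ (equivalently $\norm{\ho_i-o_i}\le\sqrt{\eps}\,\rho_i$). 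Carrying this through the proof of \Cref{lem:all-qs-in-cores} --- replacing $(2+\delta)\cost{\optcl_{In}}{\optset}$ by $(1+\eps)\cost{\optcl_{In}}{\optset}$ in \Cref{lem:bad-swap-inequality} and in the displayed chain of inequalities --- turns the quadratic \Cref{eq:eta-multi-swap} into $\eta^2-(2+2/p\pm O(\eps))\eta-(3+2/p\pm O(\eps))\le 0$, which is exactly the locality bound of \cite{kanungo}; for $p=\Theta(\eps^{-1})$ its root gives $\eta^2=9+O(\eps)$.

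The work lies in producing such centers cheaply from $D^2$ samples, and this is where we use \cite{power-means}. In each local-search step we $D^2$-sample a small bag $B$ of $\Theta(\eps^{-2})$ points. By \Cref{lem:sample-from-core}, for any optimal cluster $\optcl_i$ with $\cost{\optcl_i}{\calC}=\Omega(\delta\alg/k)$ a sample conditioned to land in $\optcl_i$ lands in $\core{\optcl_i}$ with probability $\Omega_\delta(1)$, and inside a core all points have $D^2$-weight within a $\delta$-dependent constant factor. Given $\Theta(\eps^{-1})$ such (near-uniform-on-the-core) samples of a cluster, the power-means machinery of \cite{power-means} pins down a $(1+\eps)$-approximate $1$-means center of that cluster up to a family of $2^{\mathrm{poly}(\eps^{-1})}$ explicit candidates: the approximate center lies, to within $\sqrt{\eps}\,\rho_i$, in the $O(\eps^{-1})$-dimensional affine span of the samples, and is recovered by a grid of $2^{\mathrm{poly}(\eps^{-1})}$ points placed in that span at each of $O(\log\Delta)$ guessed scales of $\rho_i$. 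The union of these families over all clusters receiving enough samples produces a candidate set $\widehat B$ with $|\widehat B|=2^{\mathrm{poly}(\eps^{-1})}\log\Delta$ that, with constant probability, contains a valid $\ho_i$ for every cluster we might want to swap in. The local-search step then performs the best improving swap $(In,Out)$ with $In\subseteq\widehat B$ and $Out\subseteq\calC\cup\widehat B$ of size at most $p$, exactly as in \Cref{sec:multi-swap-kmpp}.

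Everything else is the bookkeeping of \Cref{sec:multi-swap-kmpp}. The ``good swap'' step (the paragraph following \Cref{lem:all-qs-in-cores}) now needs $B$ to contain, simultaneously, $\Theta(\eps^{-1})$ samples inside the core of each of the at most $p$ optimal clusters of a good swap; since a light core is hit with probability $\Omega_\delta(1/k)$ per draw, this configuration occurs with probability at least $k^{-O(\eps^{-2})}/2^{\mathrm{poly}(\eps^{-1})}$ (there are $p\cdot\Theta(\eps^{-1})=\Theta(\eps^{-2})$ independent ``$1/k$ events''), so $k^{O(\eps^{-2})}2^{\mathrm{poly}(\eps^{-1})}$ local-search steps take the $O(\log k)$-approximate $k$-means++ start down to ratio $9+\eps$ by the negative-binomial concentration argument in the proof of \Cref{thm:multi-swap-analysis}. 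Using the dynamic predecessor dictionaries of \Cref{thm:multi-swap-analysis}, one step costs $nd\cdot(k+|\widehat B|)^{O(p)}\log k=ndk^{O(\eps^{-1})}\log^{O(\eps^{-1})}\Delta\cdot 2^{\mathrm{poly}(\eps^{-1})}$, and multiplying by the number of steps gives the claimed running time.

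The main obstacle is the middle paragraph: $D^2$-sampling is biased away from cluster centroids, so the naive average of the sampled core points converges to the wrong point and does not approximate $o_i$. Making the \cite{power-means} construction certify that some point in the $O(\eps^{-1})$-dimensional span of $\Theta(\eps^{-1})$ non-uniform, core-restricted $D^2$-samples is within $\sqrt{\eps}\,\rho_i$ of $o_i$ --- and doing so while keeping $|\widehat B|=2^{\mathrm{poly}(\eps^{-1})}\log\Delta$ and the per-step success probability $\ge k^{-O(\eps^{-2})}/2^{\mathrm{poly}(\eps^{-1})}$ --- is the crux; the rest is the constant-recomputation of \Cref{sec:multi-swap-kmpp}.
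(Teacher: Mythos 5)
Your high-level reduction is the same as the paper's: run MSLS with $p=\Theta(\eps^{-1})$, replace each sampled core point $q_i$ by a point $\ho_i$ that serves essentially as well as $o_i$, so that the constant term in the quadratic drops from $4$ to $3$ and the ratio becomes $9+O(\eps)$; the bookkeeping (good/bad swaps, negative-binomial argument, per-step cost) then goes through as in \Cref{sec:multi-swap-kmpp}. The gap is exactly the step you flag as the ``crux'': the candidate list is never actually constructed, and the specific claim you lean on is false. You assert that a $(1+\eps)$-approximate $1$-mean center of $\optcl_i$ (equivalently, by \Cref{lem:moved-cost}, a point within $\sqrt{\eps}\,\rho_i$ of $o_i$) lies within $\sqrt{\eps}\,\rho_i$ of the affine span of $\Theta(\eps^{-1})$ core-restricted $D^2$-samples and can be recovered by gridding that span. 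Counterexample: let a $0.9$ fraction of $\optcl_i$ sit at a single location $x$ and a $0.1$ fraction at a single far location $y$; then $\norm{x-o_i}\approx\rho_i/3$ while $\norm{y-o_i}\approx 3\rho_i$, so $y\notin\core{\optcl_i}$, every core sample equals $x$, and the affine span of the samples is the single point $x$, at distance $\Theta(\rho_i)\gg\sqrt{\eps}\,\rho_i$ from $o_i$. The centroid of a cluster is determined in part by mass that core-only sampling never sees, so no grid placed in the span of such samples can certify the center you need; near-uniformity of $D^2$-weights inside the core does not address this.

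The paper closes precisely this gap, and does so with a deliberately weaker target. It never tries to get $\norm{\ho_i-o_i}\le\sqrt{\eps}\,\rho_i$; it only needs a ``perfect approximation,'' i.e. $\cost{\optcl_i}{(\calC\cup\{\ho_i\})\setminus Out}\le(1+\eps)\opt_i+\eps\opt/k$, which allows part of $\optcl_i$ to be served by surviving centers. To produce candidates, $\approxcenters$ guesses $\rho_1$ over $O(\log\Delta)$ geometric scales, partitions $P$ into close, nice and far points relative to $q_1$ and the temporary centers, samples $\Theta(\eps^{-1})$ points \emph{uniformly} from the nice points $N$ (paying probability $k^{-O(\eps^{-1})}2^{-poly(\eps^{-1})}$ that all samples land in $N\cap\optcl_1$, so that \Cref{lem:weak-coreset-for-1-mean} applies), enumerates over a geometric grid the unknown mass fractions of $\optcl_1$ that collapse onto each $q_j$ (the coefficients $\alpha_j$ in the convex combination defining $\ho_1$), and disposes of far points with the insensitivity/gradient argument of \Cref{lem:gradient-argument}. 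The mass-guessing, the uniform (not $D^2$) sampling from the nice region, and the relaxed per-cluster guarantee are exactly the ingredients missing from your sketch; in my counterexample it is the guessed mass at the far location (handled there as a nice or collapsed point) and the relaxed guarantee that rescue the construction, and without them the middle step of your argument does not go through.
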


Notice that, besides being asymptotically slower, the pipeline obtained combining known techniques is highly impractical and thus it did not make for an experimental test-bed. Moreover, it is not obvious how to simplify such an ensemble of complex techniques to obtain a practical algorithm.

\vspace{-0.2cm}
\paragraph{Limitations of MSLS.} The barrier we need to overcome in order to match the bound in \cite{kanungo} is that, while we only consider points in $P$ as candidate centers, the discretization they employ considers also points in $\bbR^d \setminus P$. 
In the analysis of MSLS we show that we sample each point $q_i$ from $\core{\optcl_i}$ or equivalently that $q_i \in B(o_i, (1+\epsilon) \rho_i)$, where $\rho_i$ is such that $\optcl_i$ would have the same cost w.r.t. $o_i$ if all its points were moved on a sphere of radius $\rho_i$ centered in $o_i$. 
This allows us to use a Markov's inequality kind of argument and conclude that there must be $\Omega_\epsilon(|\optcl_i|)$ points in $\optcl_i \cap B(o_i, (1+\epsilon) \rho_i)$. 
However, we have no guarantee that there is any point at all in $\optcl_i \cap B(o_i, (1-\varepsilon) \rho_i)$. Indeed, all points in $\optcl_i$ might lie on $\partial B(o_i, \rho_i)$. The fact that potentially all our candidate centers $q$ are at distance at least $\rho_i$ from $o_i$ yields (by \Cref{lem:moved-cost}) $\cost{\optcl_i}{q} \geq 2 \cost{\optcl_i}{o_i}$, which causes the zero-degree term in $\xi^2 -(2 + 2/p) \xi - (\blue{3} + 2/p) = 0$ from \cite{kanungo} to become a $\blue{4}$ in our analysis.

\vspace{-0.2cm}
\paragraph{Improving MSLS by taking averages.} 
First, we notice that, in order to achieve $(9+\varepsilon)$-approximation we need to set $p = \Theta(1/\varepsilon)$. 
The main hurdle to achieve a $(9+\varepsilon)$-approximation is that we need to replace the $q_i$ in MSLS with a better approximation of $o_i$. We design a subroutine that computes, with constant probability, an $\varepsilon$-approximation $\hat o_i$ of $o_i$ (namely, $\cost{\optcl_i}{\ho_i} \leq (1+\varepsilon) \cost{\optcl_i}{o_i}$). The key idea is that, if sample uniformly $O(1/\varepsilon)$ points from $\optcl_i$ and define $\ho_i$ to be the average of our samples then $\cost{\optcl_i}{\ho_i} \leq (1+\varepsilon) \cost{\optcl_i}{o_i}$

% \begin{lemma}[Lemma $1$ from \cite{easy-coreset}] \label{lem:weak-coreset-for-1-mean}
% Given an instance $P$ of $1$-mean, sample $m = 1 / (\varepsilon \delta)$ points uniformly at random from $P$ and denote the set of samples with $S$. Then $\mu(S)$ is a $(1+\varepsilon)$-apx solution of $1$-mean on $P$ with probability $1 - \delta$.   
% \end{lemma}

Though, we do not know $\optcl_i$, so sampling uniformly from it is non-trivial. To achieve that, for each $q_i$ we identify a set $N$ of \emph{nice} candidate points in $P$ such that a $poly(\varepsilon) / k$ fraction of them are from $\optcl_i$. We sample $O(1/\varepsilon)$ points uniformly from $N$ and thus with probability $(\varepsilon / k)^{O(1/\varepsilon)}$ we sample only points from $\optcl_i$. Thus far, we sampled $O(1/\varepsilon)$ points uniformly from $N \cap \optcl_i$. What about the points in $\optcl_i \setminus N$? We can define $N$ so that all points in $\optcl_i \setminus N$ are either very close to some of the $(q_j)_j$ or they are very far from $q_i$. The points that are very close to points $(q_j)_j$ are easy to treat. Indeed, we can approximately locate them and we just need to guess their mass, which is matters only when $\geq poly(\varepsilon) \alg$, and so we pay only a $\log^{O(1/\varepsilon)}(1/\varepsilon)$ multiplicative overhead to guess the mass close to $q_j$ for $j = 1 \dots p = \Theta(1/\varepsilon)$. As for a point $f$ that is very far from $q_i$ (say, $||f-q_i|| \gg \rho_i$) we notice that, although $f$'s contribution to $\cost{\optcl_i}{o_i}$ may be large, we have $\cost{f}{o} \approx \cost{f}{o_i}$ for each $o \in B(q_i, \rho_i) \subseteq B(o_i, (2+\varepsilon) \rho_i)$ assuming $q_i \in \core{o_i}$.  

\vspace{-0.25cm}
\section{Experiments}
\vspace{-0.2cm}
In this section, we show that our new algorithm using multi-swap local search can be employed to design an efficient seeding algorithm for Lloyd's which outperforms both the classical $k$-means++ seeding and the single-swap local search from \cite{silvio-original}.

\vspace{-0.2cm}
\paragraph{Algorithms.}
The multi-swap local search algorithm that we analysed above performs very well in terms of solution quality. 
This empirically verifies the improved approximation factor of our algorithm, compared to the single-swap local search of \cite{silvio-original}.

Motivated by practical considerations, we heuristically adapt our algorithm to make it very competitive with SSLS in terms of running time and still remain very close, in terms of solution quality, to the theoretically superior algorithm that we analyzed.
%\todo{Perhaps it is a little weird to start off by saying that the method we propose has a significant drawback. I would suggest starting by saying our method does well in terms of quality and can be heuristically sped up using the greedyMSLS}:
The adaptation of our algorithm replaces the phase where it selects the $p$ centers to swap-out by performing an exhaustive search over $\binom{k + p}{p}$ subsets of centers. 
Instead, we use an efficient heuristic procedure for selecting the $p$ centers to swap-out, by greedily selecting one by one the centers to swap-out. 
Specifically, we select the first center to be the cheapest one to remove (namely, the one that increases the cost by the least amount once the points in its cluster are reassigned to the remaining centers). Then, we update all costs and select the next center iteratively. After $p$ repetitions we are done. 
We perform an experimental evaluation of the ``greedy'' variant of our algorithm compared to the theoretically-sound algorithm from \Cref{sec:multi-swap-kmpp} and show that employing the greedy heuristic does not measurably impact performance. 
%Because of the high running time of the latter we perform this evaluation on small datasets. 

The four algorithms that we evaluate are the following: 1) \textbf{\algokmeans:} The $k$-means++ from \cite{kmpp-original}, 2) \textbf{\algossls:} The Single-swap local search method from \cite{silvio-original}, 3) \textbf{\algomsls:} The multi-swap local search from Section~\ref{sec:multi-swap-kmpp}, and 4) \textbf{\algogreedymsls:} The greedy variant of multi-swap local search as described above. 

%In general, we denote with $p$-swap LS a MSLS algorithm that swaps up to $p$ centers in each step. 
We use \algogreedymslsarg{x} and \algomslsarg{x}, to denote \algogreedymsls{} and \algomsls{} with $p=x$, respectively.
Notice that \algogreedymslsarg{1} is exactly \algossls.
Our experimental evaluation explores the effect of $p$-swap LS, for $p > 1$, in terms of solution cost and running time.

\vspace{-0.215cm}
\paragraph{Datasets.}
We consider the three datasets used in \cite{silvio-original} to evaluate the performance of SSLS: 1) KDD-PHY – $100,000$ points with $78$ features representing a quantum physic task \cite{kdd-datasets}, 2) RNA - $488,565$ points with $8$ features representing RNA input sequence pairs \cite{rna-dataset}, and 3) KDD-BIO – $145,751$ points with $74$ features measuring the match between a protein and a native sequence \cite{kdd-datasets}. We discuss the results for two or our datasets, namely KDD-BIO and RNA. We deffer the results on KDD-PHY to the appendix and note that the results are very similar to the results on RNA. %The experimental result for KDD-BIO can be found in the supplementary material. \lb{consider showing KDD-BIO instead, since it looks a bit worse and we wanna be fair}

We performed a preprocessing step to clean-up the datasets. We observed that the standard deviation of some features was disproportionately high. This causes all costs being concentrated in few dimensions making the problem, in some sense, lower-dimensional.  
Thus, we apply min-max scaling to all datasets and observed that this causes all our features' standard deviations to be comparable. 

\vspace{-0.215cm}
\paragraph{Experimental setting.} 
All our code is written in Python. The code will be made available upon publication of this work. We did not make use of parallelization techniques. To run our experiments, we used a personal computer with $8$ cores, a $1.8$ Ghz processor, and $15.9$ GiB of main memory 
We run all experiments 5 times and report the mean and standard deviation in our plots. All our plots report the progression of the cost either w.r.t local search steps, or Lloyd's iterations. 
We run experiments on all our datasets for $k=10, 25, 50$. The main body of the paper reports the results for $k=25$, while the rest can be found in the appendix. We note that the conclusions of the experiments for $k = 10, 50$ are similar to those of $k=25$. 

\vspace{-0.215cm}
\paragraph{Removing centers greedily.} 
We first we compare \algogreedymsls{} with \algomsls{}. To perform our experiment, we initialize $k=25$ centers using $k$-means++ and then run $50$ iterations of local search for both algorithms, for $p = 3$ swaps.
Due to the higher running of the \algomsls{} we perform this experiments on 1\% uniform sample of each of our datasets. 
We find out that the performance of the two algorithms is comparable on all our instances, while they both perform roughly 15\%-27\% at convergence.
\Cref{fig:vanilla-vs-greedy} shows the aggregate results, over 5 repetitions of our experiment.

\begin{figure}[t] 
    \centering
         \begin{subfigure}
         \centering
         \includegraphics[trim={0.6cm 0.8cm 0.2cm 1cm}, width=0.45\textwidth]{./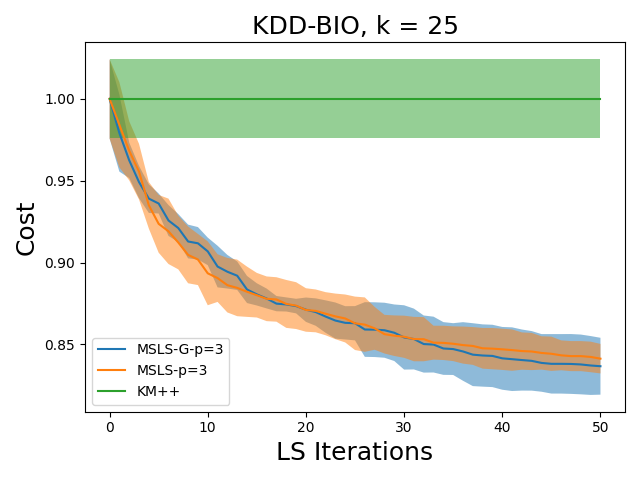}
    \end{subfigure}
    \begin{subfigure}
         \centering
        \includegraphics[trim={0.6cm 0.8cm 0.2cm 1cm}, width=0.45\textwidth]{./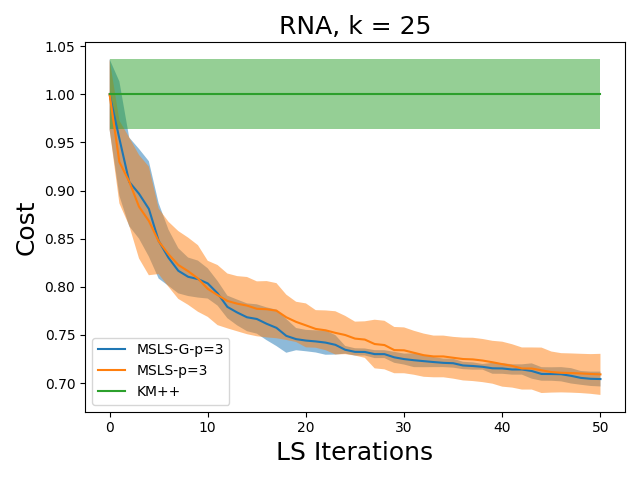}
    \end{subfigure}
       \caption{Comparison between \algomsls{} and \algogreedymsls{}, for $p = 3$, for $k=25$, on the datasets KDD-BIO and RNA. The $y$ axis shows the solution cost divided by the means solution cost of \algokmeans{}.}
       \iferrata
       \else
       \vspace{-0.5cm}
       \fi
    \label{fig:vanilla-vs-greedy}
\end{figure}

It may happen that \algomsls{}, which considers all possible swaps of size $p$ at each LS iteration, performs worse than \algogreedymsls{} as a sub-optimal swap at intermediate iterations may still lead to a better local optimum by coincidence.
Given that \algogreedymsls{} performs very comparably to \algomsls{}, while it is much faster in practice, we use \algogreedymsls{} for the rest of our experiments where we compare to baselines. This allows us to consider higher values of $p$, without compromising much the running time.

\vspace{-0.2cm}
\paragraph{Results: Evaluating the quality and performance of the algorithms.} %We performed two different experiments (\Cref{fig:main-experiment}). 
In our first experiment we run \algokmeans{} followed by $50$ iterations of \algogreedymsls{} with $p = 1, 4, 7, 10$ and plot the relative cost w.r.t. \algokmeans{} at each iteration, for $k=25$. The first row of \Cref{fig:main-experiment} plots the results. Our experiment shows that, after $50$ iterations \algogreedymsls{} for $p = 4, 7, 10$ achieves improvements of roughly $10\%$ compared to \algogreedymslsarg{1} and of the order of $20\%-30\%$ compared to \algokmeans. We also report the time per iteration that each algorithm takes. For comparison, we report the running time of a single iteration of Lloyd's next to the dataset's name. It is important to notice that, although \algogreedymslsarg{1} is faster, running more iterations \algogreedymslsarg{1} is not sufficient to compete with \algogreedymsls{} when $p>1$.

\iferrata
\begin{figure} 
    \centering
     \begin{subfigure}
         \centering
         \includegraphics[trim={0.6cm 0cm 0.2cm 1cm},width=0.45\textwidth]{./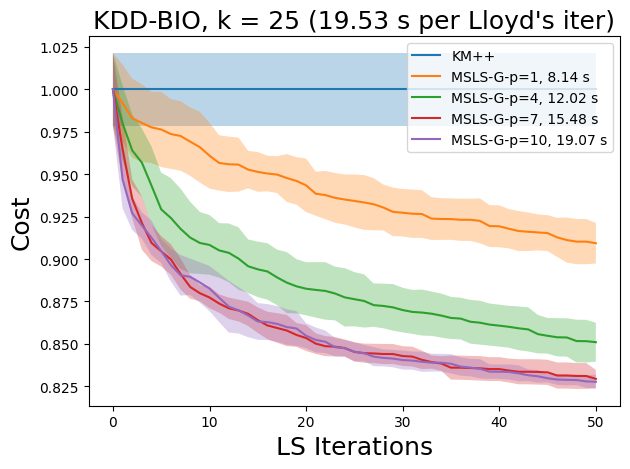}
    \end{subfigure}
         \begin{subfigure}
         \centering
         \includegraphics[trim={0.6cm 0cm 0.2cm 1cm},width=0.45\textwidth]{./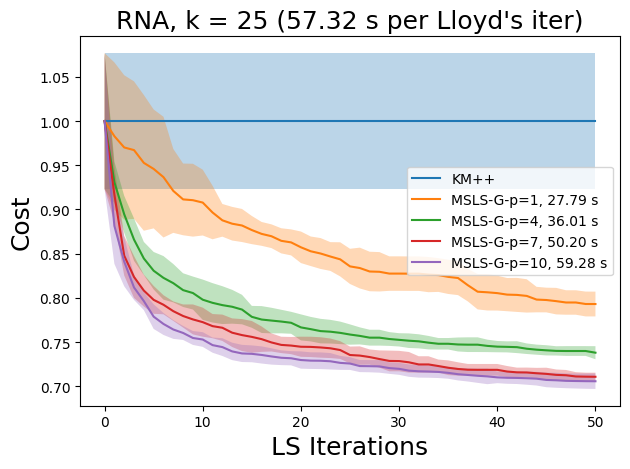}
    \end{subfigure}
    \begin{subfigure}
         \centering
        \includegraphics[trim={0.6cm 0.8cm 0.2cm 0.8cm},width=0.45\textwidth]{./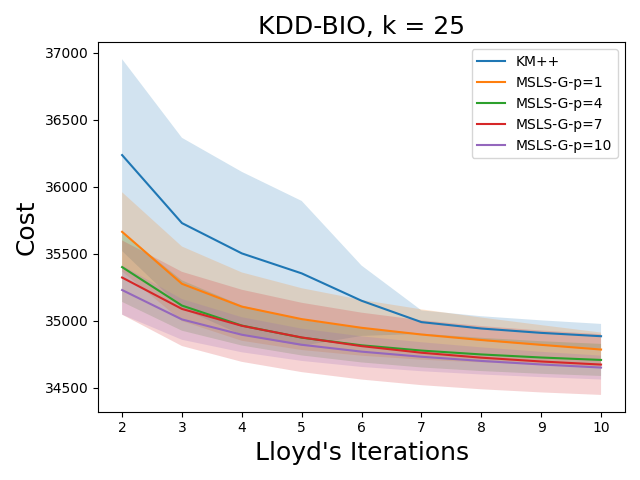}
    \end{subfigure}
         \begin{subfigure}
         \centering
         \includegraphics[trim={0.6cm 0.8cm 0.2cm 0.8cm},width=0.45\textwidth]{./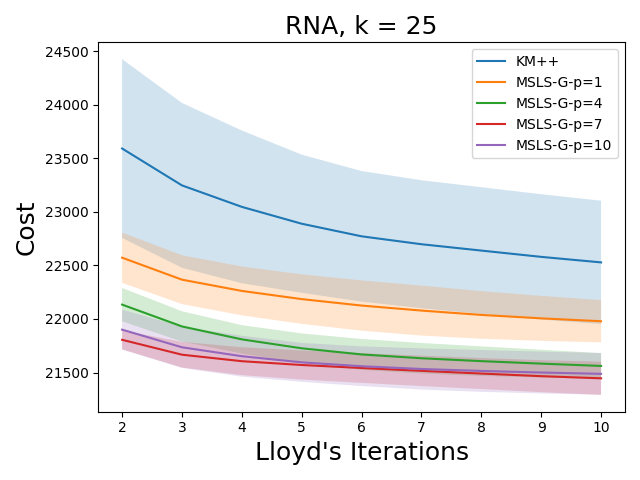}
    \end{subfigure}
        \caption{The first row compares the cost of \algogreedymsls{}, for $p\in\{1,4, 7, 10\}$, divided by the mean cost of \algokmeans{} at each LS  step, for $k=25$. The legend reports also the running time of \algogreedymsls{} per LS step (in seconds).
    The second row compares the cost after each of the $10$ iterations of Lloyd with seeding from \algogreedymsls{}, for $p\in\{1,4, 7, 10\}$ and $15$ local search steps and \algokmeans{}, for $k=25$.
    }
    \vspace{-0.5cm}
    \label{fig:main-experiment}
\end{figure}
\else
\begin{figure}[] 
    \centering
     \begin{subfigure}
         \centering
         \includegraphics[trim={0.6cm 0cm 0.2cm 1cm},width=0.45\textwidth]{./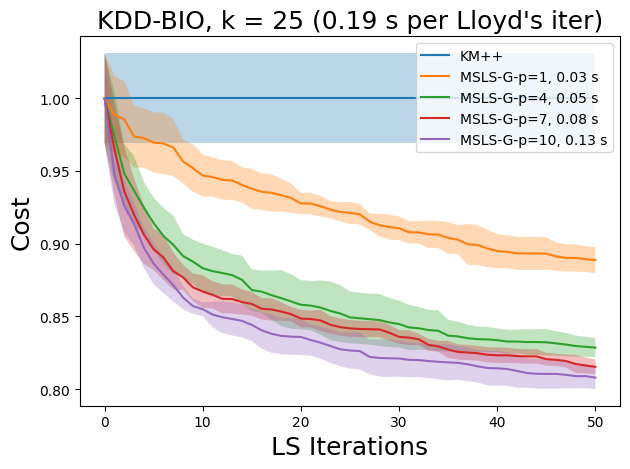}
    \end{subfigure}
         \begin{subfigure}
         \centering
         \includegraphics[trim={0.6cm 0cm 0.2cm 1cm},width=0.45\textwidth]{./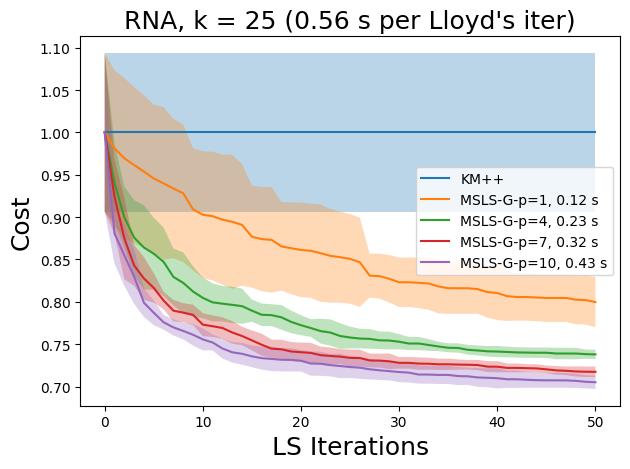}
    \end{subfigure}
    \begin{subfigure}
         \centering
        \includegraphics[trim={0.6cm 0.8cm 0.2cm 0.8cm},width=0.45\textwidth]{./images/lloyd-KDD-BIO.png}
    \end{subfigure}
         \begin{subfigure}
         \centering
         \includegraphics[trim={0.6cm 0.8cm 0.2cm 0.8cm},width=0.45\textwidth]{./images/lloyd-RNA.png}
    \end{subfigure}
        \caption{The first row compares the cost of \algogreedymsls{}, for $p\in\{1,4, 7, 10\}$, divided by the mean cost of \algokmeans{} at each LS  step, for $k=25$. The legend reports also the running time of \algogreedymsls{} per LS step (in seconds).  
    The second row compares the cost after each of the $10$ iterations of Lloyd with seeding from \algogreedymsls{}, for $p\in\{1,4, 7, 10\}$ and $15$ local search steps and \algokmeans{}, for $k=25$.
    %the results of the second experiment. The $x$ axis is the number of Lloyd's iteration run after \algogreedymsls{} (or \algokmeans{}) is used as a seeding. The $y$ axis shows the cost achieved by Lloyd's with that seeding.
    }
    \label{fig:main-experiment}
\end{figure}
\fi

\vspace{-0.2cm}
\paragraph{Results: Evaluating the quality after postprocessing using Lloyd.}
In our second experiment, we use \algokmeans{} and \algogreedymsls{} as a seeding algorithm for Lloyd's and measure how much of the performance improvement measured in the first experiment is retained after running Lloyd's.
First, we initialize our centers using \algokmeans{} and the run $15$ iterations of \algogreedymsls{} for $p=1, 4, 7$.  We measure the cost achieved by running $10$ iterations of Lloyd's starting from the solutions found by \algogreedymsls{} as well as \algokmeans{}. In \Cref{fig:main-experiment} (second row) we plot the results.  
Notice that, according to the running times from the first experiment, $15$ iterations iterations of \algogreedymsls{} take less than $10$ iterations of Lloyd's for $p=4, 7$ (and also for $p=10$, except on RNA). We observe that \algogreedymsls{} for $p>1$ performs at least as good as \algossls{} from \cite{silvio-original} and in some cases maintains non-trivial improvements.

\paragraph{Results: Evaluating the quality and performance of the algorithms against a fixed deadline.} %We performed two different experiments (\Cref{fig:main-experiment}). 
In this experiment we run \algokmeans{} followed by \algogreedymsls{} with $p = 1, 4, 7, 10$, for a set of fixed amounts of time.
This setting allows the versions of \algogreedymsls{} with smaller swap size to perform more iterations compared to the versions of the algorithm with a larger swap size, as smaller swap size leads to lower running time per iteration. 
Let $\tau$ be the average time that Lloyd's algorithm requires to complete a simple iteration on a specific instance.
We plot the cost of the solution produced by each algorithm after running $\lambda \times \tau$ for each $\lambda \in \{1, \cdots, 20\}$ in \Cref{fig:dealine-experiment}. 
Our experiment shows that \algogreedymsls{} for $p = 4, 7, 10$ achieves improvements of more than $5\%$ compared to \algogreedymslsarg{1} even when compared against a fixed running time, and of the order of $20\%-30\%$ compared to \algokmeans.

\begin{figure}[t] 
    \centering
         \begin{subfigure}
         \centering
         \includegraphics[trim={0.6cm 0.8cm 0.2cm 1cm}, width=0.49\textwidth]{./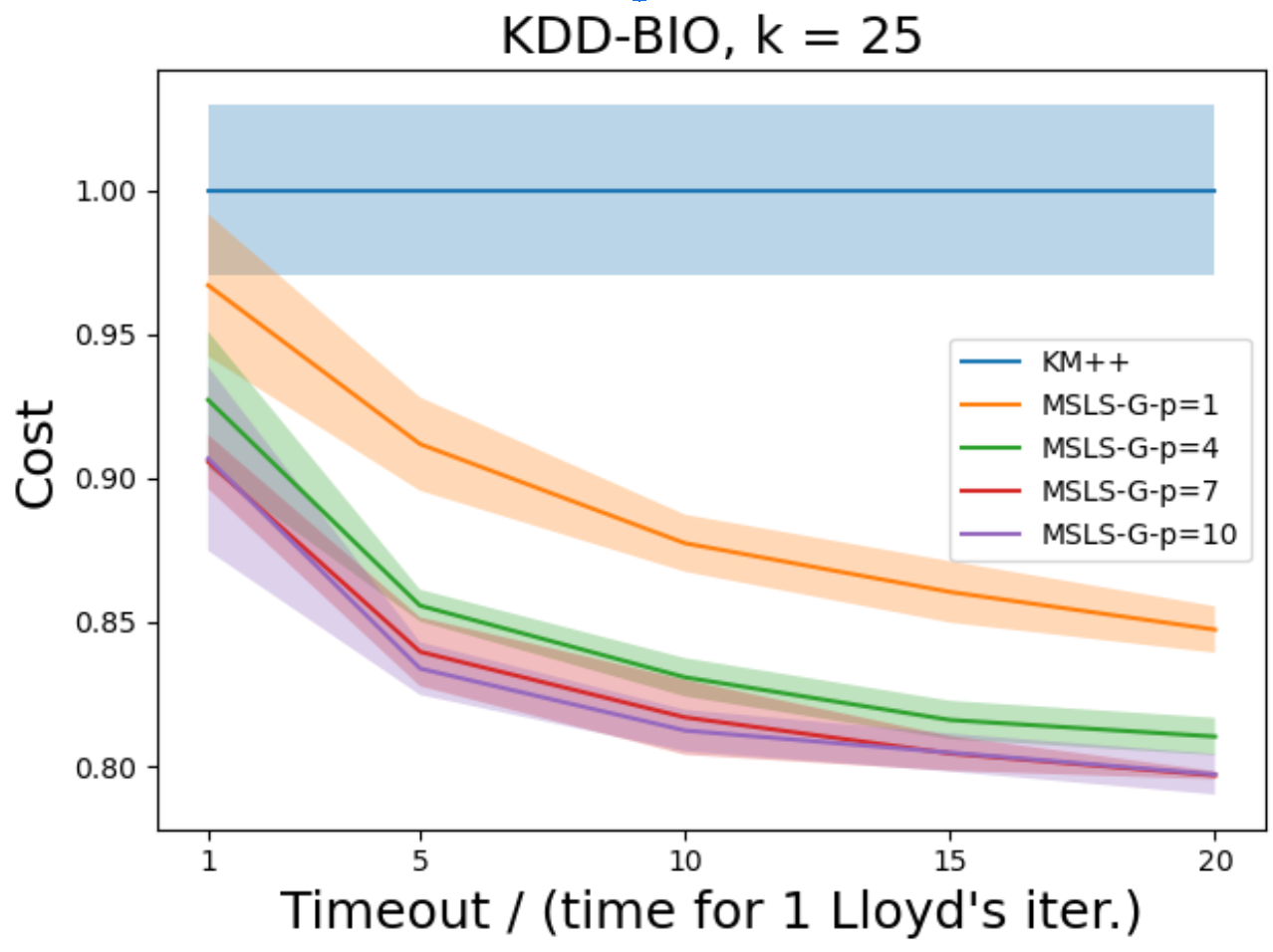}
    \end{subfigure}
    \begin{subfigure}
         \centering
        \includegraphics[trim={0.6cm 0.8cm 0.2cm 1cm}, width=0.49\textwidth]{./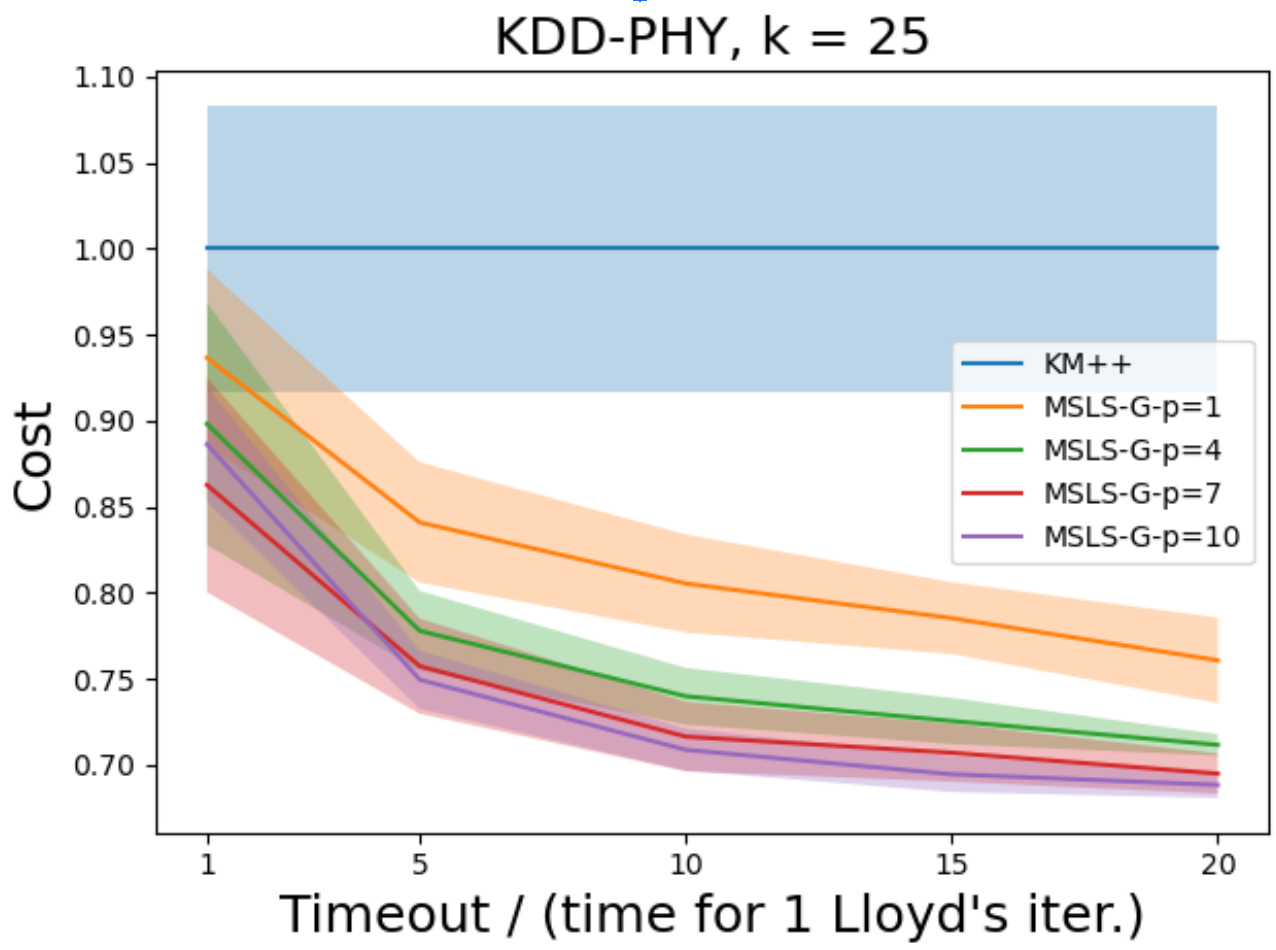}
    \end{subfigure}
       \caption{Comparison of the cost produced by \algogreedymsls{}, for $p\in\{1,4, 7, 10\}$ and $k=25$ on the datasets KDD-BIO and KDD-PHU, divided by the mean cost of \algokmeans{} after running for fixed amount of time in terms of multiplicative factors to the average time for an iteration of Lloyd's algorithm (i.e., for deadlines that are $1\times, \dots, 20\times$ the average time of an iteration of Lloyd).}
       \iferrata
       \else
       \vspace{-0.5cm}
       \fi
    \label{fig:dealine-experiment}
\end{figure}

\vspace{-0.25cm}
\section*{Conclusion and Future Directions}
\vspace{-0.2cm}
We present a new algorithm for the $k$-means problem and we show that it outperforms theoretically and experimentally state-of-the-art practical algorithms
with provable guarantees in terms of solution quality. A very interesting open question is to improve our local search procedure by avoiding the exhaustive search over all possible size-$p$ subsets of centers to swap out, 
concretely an algorithm with running time $\tilde{O}(2^{poly (1/ \varepsilon)} ndk)$.

\paragraph{Acknowledgements.}
This work was partially done when Lorenzo Beretta was a Research Student at Google Research. 
Moreover, Lorenzo Beretta receives funding from the European Union's Horizon 2020 research and innovation program under the Marie Skłodowska-Curie grant agreement No.~801199.
\includegraphics[height=0.03\textwidth, width=0.055\textwidth]{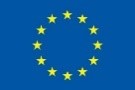}

\bibliographystyle{plainnat}
\bibliography{literature}

\begin{thebibliography}{40}
\providecommand{\natexlab}[1]{#1}
\providecommand{\url}[1]{\texttt{#1}}
\expandafter\ifx\csname urlstyle\endcsname\relax
  \providecommand{\doi}[1]{doi: #1}\else
  \providecommand{\doi}{doi: \begingroup \urlstyle{rm}\Url}\fi

\bibitem[kdd(2004)]{kdd-datasets}
Kdd cup.
\newblock 2004.
\newblock URL \url{http://osmot.cs.cornell.edu/kddcup/datasets.html}.

\bibitem[Ahmadian et~al.(2019)Ahmadian, Norouzi-Fard, Svensson, and
  Ward]{ahmadian2017better}
Sara Ahmadian, Ashkan Norouzi-Fard, Ola Svensson, and Justin Ward.
\newblock Better guarantees for k-means and {E}uclidean k-median by primal-dual
  algorithms.
\newblock \emph{{SIAM} Journal on Computing}, 49\penalty0 (4):\penalty0
  FOCS17--97--FOCS17--156, 2019.

\bibitem[Arthur and Vassilvitskii(2007)]{kmpp-original}
David Arthur and Sergei Vassilvitskii.
\newblock K-means++ the advantages of careful seeding.
\newblock In \emph{Proceedings of the eighteenth annual ACM-SIAM symposium on
  Discrete algorithms}, pages 1027--1035, 2007.

\bibitem[Arthur and Vassilvitskii(2009)]{ArV09}
David Arthur and Sergei Vassilvitskii.
\newblock Worst-case and smoothed analysis of the {ICP} algorithm, with an
  application to the k-means method.
\newblock \emph{{SIAM} J. Comput.}, 39\penalty0 (2):\penalty0 766--782, 2009.
\newblock \doi{10.1137/070683921}.
\newblock URL \url{http://dx.doi.org/10.1137/070683921}.

\bibitem[Arya et~al.(2004)Arya, Garg, Khandekar, Meyerson, Munagala, and
  Pandit]{AryaGKMMP04}
Vijay Arya, Naveen Garg, Rohit Khandekar, Adam Meyerson, Kamesh Munagala, and
  Vinayaka Pandit.
\newblock Local search heuristics for k-median and facility location problems.
\newblock \emph{{SIAM} J. Comput.}, 33\penalty0 (3):\penalty0 544--562, 2004.
\newblock \doi{10.1137/S0097539702416402}.
\newblock URL \url{https://doi.org/10.1137/S0097539702416402}.

\bibitem[Awasthi et~al.(2015)Awasthi, Charikar, Krishnaswamy, and
  Sinop]{AwasthiCKS15}
Pranjal Awasthi, Moses Charikar, Ravishankar Krishnaswamy, and Ali~Kemal Sinop.
\newblock The hardness of approximation of euclidean k-means.
\newblock In Lars Arge and J{\'{a}}nos Pach, editors, \emph{31st International
  Symposium on Computational Geometry, SoCG 2015, June 22-25, 2015, Eindhoven,
  The Netherlands}, volume~34 of \emph{LIPIcs}, pages 754--767. Schloss
  Dagstuhl - Leibniz-Zentrum f{\"{u}}r Informatik, 2015.
\newblock \doi{10.4230/LIPIcs.SOCG.2015.754}.
\newblock URL \url{https://doi.org/10.4230/LIPIcs.SOCG.2015.754}.

\bibitem[Bandyapadhyay and Varadarajan(2016)]{BaV15}
Sayan Bandyapadhyay and Kasturi Varadarajan.
\newblock On variants of k-means clustering.
\newblock In \emph{32nd International Symposium on Computational Geometry (SoCG
  2016)}. Schloss Dagstuhl-Leibniz-Zentrum fuer Informatik, 2016.

\bibitem[Becchetti et~al.(2019)Becchetti, Bury, Cohen{-}Addad, Grandoni, and
  Schwiegelshohn]{BecchettiBC0S19}
Luca Becchetti, Marc Bury, Vincent Cohen{-}Addad, Fabrizio Grandoni, and Chris
  Schwiegelshohn.
\newblock Oblivious dimension reduction for \emph{k}-means: beyond subspaces
  and the johnson-lindenstrauss lemma.
\newblock In \emph{Proceedings of the 51st Annual {ACM} {SIGACT} Symposium on
  Theory of Computing, {STOC} 2019, Phoenix, AZ, USA, June 23-26, 2019}, pages
  1039--1050, 2019.
\newblock \doi{10.1145/3313276.3316318}.
\newblock URL \url{https://doi.org/10.1145/3313276.3316318}.

\bibitem[Bhattacharya et~al.(2020)Bhattacharya, Goyal, Jaiswal, and
  Kumar]{BhattacharyaGJ020}
Anup Bhattacharya, Dishant Goyal, Ragesh Jaiswal, and Amit Kumar.
\newblock On sampling based algorithms for k-means.
\newblock In Nitin Saxena and Sunil Simon, editors, \emph{40th {IARCS} Annual
  Conference on Foundations of Software Technology and Theoretical Computer
  Science, {FSTTCS} 2020, December 14-18, 2020, {BITS} Pilani, {K} {K} Birla
  Goa Campus, Goa, India (Virtual Conference)}, volume 182 of \emph{LIPIcs},
  pages 13:1--13:17. Schloss Dagstuhl - Leibniz-Zentrum f{\"{u}}r Informatik,
  2020.
\newblock \doi{10.4230/LIPIcs.FSTTCS.2020.13}.
\newblock URL \url{https://doi.org/10.4230/LIPIcs.FSTTCS.2020.13}.

\bibitem[Charikar and Guha(2005)]{CharikarG05}
Moses Charikar and Sudipto Guha.
\newblock Improved combinatorial algorithms for facility location problems.
\newblock \emph{{SIAM} J. Comput.}, 34\penalty0 (4):\penalty0 803--824, 2005.
\newblock \doi{10.1137/S0097539701398594}.
\newblock URL \url{https://doi.org/10.1137/S0097539701398594}.

\bibitem[Choo et~al.(2020)Choo, Grunau, Portmann, and Rozhon]{few-more-steps}
Davin Choo, Christoph Grunau, Julian Portmann, and Vaclav Rozhon.
\newblock k-means++: few more steps yield constant approximation.
\newblock In Hal~Daumé III and Aarti Singh, editors, \emph{Proceedings of the
  37th International Conference on Machine Learning}, volume 119 of
  \emph{Proceedings of Machine Learning Research}, pages 1909--1917. PMLR,
  2020.
\newblock URL \url{https://proceedings.mlr.press/v119/choo20a.html}.

\bibitem[Cohen{-}Addad(2018)]{Cohen-Addad18}
Vincent Cohen{-}Addad.
\newblock A fast approximation scheme for low-dimensional \emph{k}-means.
\newblock In Artur Czumaj, editor, \emph{Proceedings of the Twenty-Ninth Annual
  {ACM-SIAM} Symposium on Discrete Algorithms, {SODA} 2018, New Orleans, LA,
  USA, January 7-10, 2018}, pages 430--440. {SIAM}, 2018.
\newblock \doi{10.1137/1.9781611975031.29}.
\newblock URL \url{https://doi.org/10.1137/1.9781611975031.29}.

\bibitem[Cohen{-}Addad and {Karthik {C. S.}}(2019)]{Cohen-AddadS19}
Vincent Cohen{-}Addad and {Karthik {C. S.}}
\newblock Inapproximability of clustering in lp metrics.
\newblock In David Zuckerman, editor, \emph{60th {IEEE} Annual Symposium on
  Foundations of Computer Science, {FOCS} 2019, Baltimore, Maryland, USA,
  November 9-12, 2019}, pages 519--539. {IEEE} Computer Society, 2019.
\newblock \doi{10.1109/FOCS.2019.00040}.
\newblock URL \url{https://doi.org/10.1109/FOCS.2019.00040}.

\bibitem[Cohen{-}Addad and Mathieu(2015)]{CoM15}
Vincent Cohen{-}Addad and Claire Mathieu.
\newblock Effectiveness of local search for geometric optimization.
\newblock In \emph{31st International Symposium on Computational Geometry, SoCG
  2015, June 22-25, 2015, Eindhoven, The Netherlands}, pages 329--343, 2015.
\newblock \doi{10.4230/LIPIcs.SOCG.2015.329}.
\newblock URL \url{http://dx.doi.org/10.4230/LIPIcs.SOCG.2015.329}.

\bibitem[Cohen{-}Addad and Schwiegelshohn(2017)]{Cohen-AddadS17}
Vincent Cohen{-}Addad and Chris Schwiegelshohn.
\newblock On the local structure of stable clustering instances.
\newblock In Chris Umans, editor, \emph{58th {IEEE} Annual Symposium on
  Foundations of Computer Science, {FOCS} 2017, Berkeley, CA, USA, October
  15-17, 2017}, pages 49--60. {IEEE} Computer Society, 2017.
\newblock \doi{10.1109/FOCS.2017.14}.
\newblock URL \url{https://doi.org/10.1109/FOCS.2017.14}.

\bibitem[Cohen{-}Addad et~al.(2019)Cohen{-}Addad, Klein, and
  Mathieu]{Cohen-AddadKM19}
Vincent Cohen{-}Addad, Philip~N. Klein, and Claire Mathieu.
\newblock Local search yields approximation schemes for k-means and k-median in
  euclidean and minor-free metrics.
\newblock \emph{{SIAM} J. Comput.}, 48\penalty0 (2):\penalty0 644--667, 2019.
\newblock \doi{10.1137/17M112717X}.
\newblock URL \url{https://doi.org/10.1137/17M112717X}.

\bibitem[Cohen{-}Addad et~al.(2021)Cohen{-}Addad, {Karthik {C. S.}}, and
  Lee]{DBLP:conf/soda/Cohen-AddadSL21}
Vincent Cohen{-}Addad, {Karthik {C. S.}}, and Euiwoong Lee.
\newblock On approximability of clustering problems without candidate centers.
\newblock In D{\'{a}}niel Marx, editor, \emph{Proceedings of the 2021
  {ACM-SIAM} Symposium on Discrete Algorithms, {SODA} 2021, Virtual Conference,
  January 10 - 13, 2021}, pages 2635--2648. {SIAM}, 2021.
\newblock \doi{10.1137/1.9781611976465.156}.
\newblock URL \url{https://doi.org/10.1137/1.9781611976465.156}.

\bibitem[Cohen-Addad et~al.(2021)Cohen-Addad, Saulpic, and
  Schwiegelshohn]{power-means}
Vincent Cohen-Addad, David Saulpic, and Chris Schwiegelshohn.
\newblock Improved coresets and sublinear algorithms for power means in
  euclidean spaces.
\newblock \emph{Advances in Neural Information Processing Systems},
  34:\penalty0 21085--21098, 2021.

\bibitem[Cohen{-}Addad et~al.(2022{\natexlab{a}})Cohen{-}Addad, Esfandiari,
  Mirrokni, and Narayanan]{Cohen-AddadEMN22}
Vincent Cohen{-}Addad, Hossein Esfandiari, Vahab~S. Mirrokni, and Shyam
  Narayanan.
\newblock Improved approximations for euclidean \emph{k}-means and
  \emph{k}-median, via nested quasi-independent sets.
\newblock In Stefano Leonardi and Anupam Gupta, editors, \emph{{STOC} '22: 54th
  Annual {ACM} {SIGACT} Symposium on Theory of Computing, Rome, Italy, June 20
  - 24, 2022}, pages 1621--1628. {ACM}, 2022{\natexlab{a}}.
\newblock \doi{10.1145/3519935.3520011}.
\newblock URL \url{https://doi.org/10.1145/3519935.3520011}.

\bibitem[Cohen{-}Addad et~al.(2022{\natexlab{b}})Cohen{-}Addad, Larsen,
  Saulpic, and Schwiegelshohn]{DBLP:conf/stoc/Cohen-AddadLSS22}
Vincent Cohen{-}Addad, Kasper~Green Larsen, David Saulpic, and Chris
  Schwiegelshohn.
\newblock Towards optimal lower bounds for k-median and k-means coresets.
\newblock In Stefano Leonardi and Anupam Gupta, editors, \emph{{STOC} '22: 54th
  Annual {ACM} {SIGACT} Symposium on Theory of Computing, Rome, Italy, June 20
  - 24, 2022}, pages 1038--1051. {ACM}, 2022{\natexlab{b}}.
\newblock \doi{10.1145/3519935.3519946}.
\newblock URL \url{https://doi.org/10.1145/3519935.3519946}.

\bibitem[Cohen{-}Addad et~al.(2022{\natexlab{c}})Cohen{-}Addad, Larsen,
  Saulpic, Schwiegelshohn, and Sheikh{-}Omar]{DBLP:conf/nips/Cohen-AddadLSSS22}
Vincent Cohen{-}Addad, Kasper~Green Larsen, David Saulpic, Chris
  Schwiegelshohn, and Omar~Ali Sheikh{-}Omar.
\newblock Improved coresets for euclidean k-means.
\newblock In \emph{NeurIPS}, 2022{\natexlab{c}}.
\newblock URL
  \url{http://papers.nips.cc/paper\_files/paper/2022/hash/120c9ab5c58ba0fa9dd3a22ace1de245-Abstract-Conference.html}.

\bibitem[Cohen{-}Addad et~al.(2022{\natexlab{d}})Cohen{-}Addad, Lee, and
  {Karthik {C. S.}}]{Cohen-AddadLS22}
Vincent Cohen{-}Addad, Euiwoong Lee, and {Karthik {C. S.}}
\newblock Johnson coverage hypothesis: Inapproximability of $k$-means and
  $k$-median in $\ell_p$ metrics.
\newblock In \emph{Proceedings of the 2022 {ACM-SIAM} Symposium on Discrete
  Algorithms, {SODA} 2022}. {SIAM}, 2022{\natexlab{d}}.

\bibitem[Feldman and Langberg(2011)]{FL11}
D.~Feldman and M.~Langberg.
\newblock A unified framework for approximating and clustering data.
\newblock In \emph{STOC}, pages 569--578, 2011.

\bibitem[Friggstad et~al.(2019)Friggstad, Rezapour, and
  Salavatipour]{FriggstadRS19}
Zachary Friggstad, Mohsen Rezapour, and Mohammad~R. Salavatipour.
\newblock Local search yields a {PTAS} for k-means in doubling metrics.
\newblock \emph{{SIAM} J. Comput.}, 48\penalty0 (2):\penalty0 452--480, 2019.
\newblock \doi{10.1137/17M1127181}.
\newblock URL \url{https://doi.org/10.1137/17M1127181}.

\bibitem[Grandoni et~al.(2022)Grandoni, Ostrovsky, Rabani, Schulman, and
  Venkat]{Grandoni21}
Fabrizio Grandoni, Rafail Ostrovsky, Yuval Rabani, Leonard~J. Schulman, and
  Rakesh Venkat.
\newblock A refined approximation for euclidean k-means.
\newblock \emph{Inf. Process. Lett.}, 176:\penalty0 106251, 2022.
\newblock \doi{10.1016/j.ipl.2022.106251}.
\newblock URL \url{https://doi.org/10.1016/j.ipl.2022.106251}.

\bibitem[Grunau et~al.(2023)Grunau, {\"{O}}z{\"{u}}dogru, Rozhon, and
  Tetek]{DBLP:conf/soda/GrunauORT23}
Christoph Grunau, Ahmet~Alper {\"{O}}z{\"{u}}dogru, V{\'{a}}clav Rozhon, and
  Jakub Tetek.
\newblock A nearly tight analysis of greedy k-means++.
\newblock In Nikhil Bansal and Viswanath Nagarajan, editors, \emph{Proceedings
  of the 2023 {ACM-SIAM} Symposium on Discrete Algorithms, {SODA} 2023,
  Florence, Italy, January 22-25, 2023}, pages 1012--1070. {SIAM}, 2023.
\newblock \doi{10.1137/1.9781611977554.ch39}.
\newblock URL \url{https://doi.org/10.1137/1.9781611977554.ch39}.

\bibitem[Gupta and Tangwongsan(2008)]{GuT08}
Anupam Gupta and Kanat Tangwongsan.
\newblock Simpler analyses of local search algorithms for facility location.
\newblock \emph{CoRR}, abs/0809.2554, 2008.
\newblock URL \url{http://arxiv.org/abs/0809.2554}.

\bibitem[Guruswami and Indyk(2003)]{GI03}
Venkatesan Guruswami and Piotr Indyk.
\newblock Embeddings and non-approximability of geometric problems.
\newblock In \emph{Proceedings of the Fourteenth Annual {ACM-SIAM} Symposium on
  Discrete Algorithms, January 12-14, 2003, Baltimore, Maryland, {USA.}}, pages
  537--538, 2003.
\newblock URL \url{http://dl.acm.org/citation.cfm?id=644108.644198}.

\bibitem[Inaba et~al.(1994)Inaba, Katoh, and Imai]{easy-coreset}
Mary Inaba, Naoki Katoh, and Hiroshi Imai.
\newblock Applications of weighted voronoi diagrams and randomization to
  variance-based k-clustering.
\newblock In \emph{Proceedings of the tenth annual symposium on Computational
  geometry}, pages 332--339, 1994.

\bibitem[Jaiswal et~al.(2014)Jaiswal, Kumar, and Sen]{Jaiswal0S14}
Ragesh Jaiswal, Amit Kumar, and Sandeep Sen.
\newblock A simple {D} 2-sampling based {PTAS} for k-means and other clustering
  problems.
\newblock \emph{Algorithmica}, 70\penalty0 (1):\penalty0 22--46, 2014.
\newblock \doi{10.1007/s00453-013-9833-9}.
\newblock URL \url{https://doi.org/10.1007/s00453-013-9833-9}.

\bibitem[Kanungo et~al.(2004)Kanungo, Mount, Netanyahu, Piatko, Silverman, and
  Wu]{kanungo}
Tapas Kanungo, David~M. Mount, Nathan~S. Netanyahu, Christine~D. Piatko, Ruth
  Silverman, and Angela~Y. Wu.
\newblock A local search approximation algorithm for k-means clustering.
\newblock \emph{Computational Geometry}, 28\penalty0 (2):\penalty0 89--112,
  2004.
\newblock ISSN 0925-7721.
\newblock \doi{https://doi.org/10.1016/j.comgeo.2004.03.003}.
\newblock URL
  \url{https://www.sciencedirect.com/science/article/pii/S0925772104000215}.
\newblock Special Issue on the 18th Annual Symposium on Computational Geometry
  - SoCG2002.

\bibitem[Korupolu et~al.(2000)Korupolu, Plaxton, and Rajaraman]{KPR00}
Madhukar~R. Korupolu, C.~Greg Plaxton, and Rajmohan Rajaraman.
\newblock Analysis of a local search heuristic for facility location problems.
\newblock \emph{J. Algorithms}, 37\penalty0 (1):\penalty0 146--188, 2000.
\newblock \doi{10.1006/jagm.2000.1100}.
\newblock URL \url{http://dx.doi.org/10.1006/jagm.2000.1100}.

\bibitem[Kumar et~al.(2010)Kumar, Sabharwal, and Sen]{KumarSS10}
Amit Kumar, Yogish Sabharwal, and Sandeep Sen.
\newblock Linear-time approximation schemes for clustering problems in any
  dimensions.
\newblock \emph{J. {ACM}}, 57\penalty0 (2):\penalty0 5:1--5:32, 2010.
\newblock \doi{10.1145/1667053.1667054}.
\newblock URL \url{https://doi.org/10.1145/1667053.1667054}.

\bibitem[Lattanzi and Sohler(2019)]{silvio-original}
Silvio Lattanzi and Christian Sohler.
\newblock A better k-means++ algorithm via local search.
\newblock In Kamalika Chaudhuri and Ruslan Salakhutdinov, editors,
  \emph{Proceedings of the 36th International Conference on Machine Learning},
  volume~97 of \emph{Proceedings of Machine Learning Research}, pages
  3662--3671. PMLR, 2019.
\newblock URL \url{https://proceedings.mlr.press/v97/lattanzi19a.html}.

\bibitem[Lee et~al.(2017)Lee, Schmidt, and Wright]{DBLP:journals/ipl/LeeSW17}
Euiwoong Lee, Melanie Schmidt, and John Wright.
\newblock Improved and simplified inapproximability for k-means.
\newblock \emph{Inf. Process. Lett.}, 120:\penalty0 40--43, 2017.
\newblock \doi{10.1016/j.ipl.2016.11.009}.
\newblock URL \url{https://doi.org/10.1016/j.ipl.2016.11.009}.

\bibitem[Lloyd(1957)]{lloyd1957least}
SP~Lloyd.
\newblock Least square quantization in pcm. bell telephone laboratories paper.
  published in journal much later: Lloyd, sp: Least squares quantization in
  pcm.
\newblock \emph{IEEE Trans. Inform. Theor.(1957/1982)}, 18, 1957.

\bibitem[Makarychev et~al.(2016)Makarychev, Makarychev, Sviridenko, and
  Ward]{MakarychevMSW16}
Konstantin Makarychev, Yury Makarychev, Maxim Sviridenko, and Justin Ward.
\newblock A bi-criteria approximation algorithm for k-means.
\newblock In Klaus Jansen, Claire Mathieu, Jos{\'{e}} D.~P. Rolim, and Chris
  Umans, editors, \emph{Approximation, Randomization, and Combinatorial
  Optimization. Algorithms and Techniques, {APPROX/RANDOM} 2016, September 7-9,
  2016, Paris, France}, volume~60 of \emph{LIPIcs}, pages 14:1--14:20. Schloss
  Dagstuhl - Leibniz-Zentrum f{\"{u}}r Informatik, 2016.
\newblock \doi{10.4230/LIPIcs.APPROX-RANDOM.2016.14}.
\newblock URL \url{https://doi.org/10.4230/LIPIcs.APPROX-RANDOM.2016.14}.

\bibitem[Makarychev et~al.(2019)Makarychev, Makarychev, and
  Razenshteyn]{MakarychevMR19}
Konstantin Makarychev, Yury Makarychev, and Ilya~P. Razenshteyn.
\newblock Performance of johnson-lindenstrauss transform for \emph{k}-means and
  \emph{k}-medians clustering.
\newblock In Moses Charikar and Edith Cohen, editors, \emph{Proceedings of the
  51st Annual {ACM} {SIGACT} Symposium on Theory of Computing, {STOC} 2019,
  Phoenix, AZ, USA, June 23-26, 2019}, pages 1027--1038. {ACM}, 2019.
\newblock \doi{10.1145/3313276.3316350}.
\newblock URL \url{https://doi.org/10.1145/3313276.3316350}.

\bibitem[Matousek(2000)]{Mat00}
Jir{\'{\i}} Matousek.
\newblock On approximate geometric k-clustering.
\newblock \emph{Discrete {\&} Computational Geometry}, 24\penalty0
  (1):\penalty0 61--84, 2000.
\newblock \doi{10.1007/s004540010019}.
\newblock URL \url{http://dx.doi.org/10.1007/s004540010019}.

\bibitem[Uzilov et~al.(2006)Uzilov, Keegan, and Mathews]{rna-dataset}
Andrew~V Uzilov, Joshua~M Keegan, and David~H Mathews.
\newblock Detection of non-coding rnas on the basis of predicted secondary
  structure formation free energy change.
\newblock \emph{BMC bioinformatics}, 7\penalty0 (1):\penalty0 1--30, 2006.

\end{thebibliography}

\newpage

{\centering \Large \textbf{Supplementary Material}}

\section*{Proofs from \Cref{sec:multi-swap-kmpp}}
\CombinedReassignmentCost*
\begin{proof}
\begin{align*}
\sum_{p \in P} \cost{p}{\calC[\optset[p]]} &= \\
\sum_{o_i \in \optset} \sum_{p \in \optcl_i} \cost{p}{\calC[o_i]} &= \\
\sum_{o_i \in \optset} \abs{\optcl_i} \cdot \cost{o_i}{\calC[o_i]} + \cost{\optcl_i}{o_i} &= \\
\opt + \sum_{p \in P} \cost{\optset[p]}{\calC[\optset[p]]} &\leq \\
\opt + \sum_{p \in P} \cost{\optset[p]}{\calC[p]} &\leq \\
\opt + \sum_{p \in P} \ld( \norm{\optset[p]- p} + \norm{p- \calC[p]}\rd)^2 &= \\
2\opt + \alg + 2 \sum_{p\in P} \norm{\optset[p], p} \cdot \norm{p, \calC[p]} &\leq 2\opt + \alg + 2\sqrt{\alg} \sqrt{\opt}.
\end{align*}
The second equality is due to \Cref{lem:moved-cost} and the last inequality is due to Cauchy-Schwarz.
\end{proof}

\section*{Proofs from \Cref{sec:faster-9+eps-algo}}

In this section, we prove the following.
\MainTheorem*

We start with a key lemma showing that a sample of size $O(1 /\varepsilon)$ is enough to approximate $1$-mean.

\begin{lemma}[Form \cite{easy-coreset}] \label{lem:weak-coreset-for-1-mean}
Given an instance $P \subseteq \bbR^d$, sample $m = 1 / (\varepsilon \delta)$ points uniformly at random from $P$ and denote the set of samples with $S$. Then $\cost{P}{\mu(S)} \leq (1+\varepsilon) \cost{P}{\mu(P)}$ with probability at least $1-\delta$.
\end{lemma}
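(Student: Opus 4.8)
The plan is to reduce the claim to a one-line second-moment computation via the folklore identity of \Cref{lem:moved-cost}. Set $n := |P|$, $\mu := \mu(P)$, and $W := \cost{P}{\mu(P)}$. Applying \Cref{lem:moved-cost} with $Q = P$ and the point $\mu(S)$ gives
\[
\cost{P}{\mu(S)} = \cost{P}{\mu(P)} + n \cdot \norm{\mu(S) - \mu}^2 = W + n \cdot \norm{\mu(S) - \mu}^2,
\]
so it suffices to prove that $n \cdot \norm{\mu(S) - \mu}^2 \le \varepsilon \, W$ holds with probability at least $1-\delta$.

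First I would compute the expectation of the left-hand side. Write $S = \{X_1, \dots, X_m\}$ with the $X_i$ drawn independently and uniformly from $P$, so that $\mu(S) = \frac{1}{m}\sum_{i=1}^m X_i$ and $\mathbb{E}[X_i] = \mu$. Using the coordinate-wise identity $\mathbb{E}[\norm{Y - \mathbb{E}Y}^2] = \sum_j \mathrm{Var}(Y_j)$ together with the fact that the variance of an average of $m$ i.i.d.\ copies is $1/m$ times the single-sample variance, we get
\[
\mathbb{E}\!\left[\norm{\mu(S) - \mu}^2\right] = \frac{1}{m}\,\mathbb{E}\!\left[\norm{X_1 - \mu}^2\right] = \frac{1}{m}\cdot\frac{1}{n}\sum_{p \in P}\norm{p - \mu}^2 = \frac{W}{mn},
\]
hence $\mathbb{E}\!\left[n \cdot \norm{\mu(S)-\mu}^2\right] = W/m$.

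Then I would finish with Markov's inequality: $\Pr\!\left[\, n\cdot\norm{\mu(S)-\mu}^2 > W/(m\delta)\,\right] < \delta$. Substituting $m = 1/(\varepsilon\delta)$ makes the threshold exactly $\varepsilon W$, and combining with the first display gives $\cost{P}{\mu(S)} \le (1+\varepsilon) W = (1+\varepsilon)\cost{P}{\mu(P)}$ with probability at least $1-\delta$. There is no genuine obstacle here; the only points requiring care are (i) if the $m$ points are sampled without replacement rather than with, the variance of $\mu(S)$ only shrinks by the finite-population factor $\frac{n-m}{n-1}\le 1$, so the same bound still holds, and (ii) one should invoke the vector-valued variance identity above explicitly, since that is precisely where the saving factor $1/m$ enters.
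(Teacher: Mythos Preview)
Your proposal is correct and follows essentially the same argument as the paper: reduce via \Cref{lem:moved-cost} to bounding $\norm{\mu(S)-\mu(P)}^2$, compute its expectation as $\cost{P}{\mu(P)}/(m|P|)$ using independence and the variance-of-an-average identity, and finish with Markov's inequality. Your added remarks on without-replacement sampling and on making the vector variance step explicit are nice touches not present in the paper, but the core proof is the same.
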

\begin{proof}
We want to prove that with probability $1 - \delta$ we have $||\mu(S) - \mu(P)||^2 \leq \varepsilon \cost{P}{\mu(P)}/ |P|$. Then, applying \Cref{lem:moved-cost} gives the desired result.
First, we notice that $\mu(P)$ is an unbiased estimator of $\mu(P)$, namely $E[\mu(S)] = \mu(P)$.
Then, we have 
\begin{align*}
	E\ld[||\mu(S) - \mu(P)||^2\rd] = 
	\frac{1}{m} \sum_{i=1}^{|S|} E\ld[||s_i - \mu(P)||^2 \rd] = 
	\frac{\cost{P}{\mu(P)}}{m \cdot |P|}
\end{align*}
where $s_i$ are uniform independent samples from $P$. Applying Markov's inequality concludes the proof.
\end{proof}

The algorithm that verifies \Cref{thm:main-result} is very similar to the MSLS algorithm from \Cref{sec:multi-swap-kmpp} and we use the same notation to describe it. The intuition is that in MSLS we sample $\calQ = \{q_1 \dots q_p\}$ hoping that $q_i \in \core{o_i}$ for each $i$; here we refine $q_i$ to a better approximation $\ho_i$ of $o_i$ and swap the points $(\ho_i)_i$ rather than $(q_i)_i$. Our points $\ho_i$ are generated taking the average of some sampled point, thus we possibly have $\ho_i \not\in P$ while, on the other hand, $q_i \in P$.   

\paragraph{A $(9+\varepsilon)$-approximation MSLS algortihm.} 
First, we initialize our set of centers using $k$-means++. Then, we run $ndk^{O(\varepsilon^{-2})} \cdot 2^{poly(\varepsilon^{-1})}$ local search steps, where a local search step works as follows. Set $p = \Theta(\varepsilon^{-1})$. We $D^2$-sample a set $\calQ = \set{q_1 \dots q_p}$ of points from $P$ (without updating costs). Then, we iterate over all possible sets $Out = \set{c_1 \dots c_p}$ of $p$ distinct elements in $\calC \cup \calQ$. We define the set of \emph{temporary} centers $\calT = (\calC \cup \calQ ) \setminus Out$ and run a subroutine $\approxcenters(\calT)$ which returns a list of $poly(\varepsilon^{-1}) \cdot \log^{O(\varepsilon^{-1})}(\Delta)$ size-$s$ sets $\widehat{In} = \{\ho_1 \dots \ho_s\}$ (where $s = |\calQ \setminus Out|$). We select the set $\whin$ in this list such that the swap $(\whin, Out \setminus \calQ)$ yields the maximum cost reduction. Then we select the set $Out$ that maximizes the cost reduction obtained in this way. If $(\widehat{In}, Out \setminus \calQ)$ actually reduces the cost then we perform that swap. 

\paragraph{A subroutine to approximate optimal centers.}
Here we describe the subroutine $\approxcenters(\calT)$. Let $\calQ \setminus Out = \{q_1 \dots q_s\}$. Recall that $s \leq p = O(\varepsilon^{-1})$. This subroutine outputs a list of $2^{poly(\varepsilon^{-1})} \cdot \log^{O(\varepsilon^{-1})}(\Delta)$ size-$s$ sets $\widehat{In} = \{\ho_1 \dots \ho_s\}$.
Here we describe how to find a list of $2^{poly(\varepsilon^{-1})} \cdot \log(\Delta)$ values for $\ho_1$. The same will apply for $\ho_2 \dots \ho_s$ and taking the Cartesian product yields a list of $2^{poly(\varepsilon^{-1})} \cdot \log^{O(\varepsilon^{-1})}(\Delta)$ size-$s$ sets.
Assume wlog that the pairwise distances between points in $P$ lie in $[1, \Delta]$. We iterate over all possible values of $\rho_1 \in \{1, (1+\varepsilon) \dots (1+\varepsilon)^{\lceil \log_{1+\varepsilon}\Delta \rceil}\}$. We partition $P$ in three sets: the set of \textit{far points} $F = \{ x \in P \,|\, cost(x, q_1) > \rho_1^2 / \varepsilon^3\}$, the set of \textit{close points} $C = \{ x \in P \setminus F \,|\, cost(x, \calT) \leq \varepsilon^3 \rho_1^2\}$ and the set of \textit{nice points} $N = P \setminus (C \cup F)$.
Then, we sample uniformly from $N$ a set $S$ of size $\Theta(\varepsilon^{-1})$. For each $(s+1)$-tuple of coefficients $\alpha_0, \alpha_1 \dots \alpha_s \in \set{1, (1 - \varepsilon), (1 - \varepsilon)^2,  \dots (1-\varepsilon)^{\lceil\log_{1-\varepsilon}(\varepsilon^{7})\rceil}} \cup \{0\}$ we output the candidate solution given by the convex combination

\begin{equation} \label{eq:def-ohat}
	\ho_1 = \ho_1(\alpha_0 \dots \alpha_s) = \frac{\alpha_0 \mu(S) + \sum_{i=1}^s \alpha_i q_i}{\sum_{i =0}^s \alpha_i}
\end{equation}
so, for each value of $\rho_1$, we output $2^{poly(\varepsilon^{-1})}$ values for $\ho_1$. Hence, $2^{poly(\varepsilon^{-1})} \cdot \log(\Delta)$ values in total.

\subsection*{Analysis} 

The key insight in the analysis of the MSLS algorithm form \Cref{sec:multi-swap-kmpp} was that every $q_i$ was a proxy for $o_i$ because $q_i \in \core{o_i}$, and thus $q_i$ provided a good center for $\optcl_i$. In the analysis of this improved version of MSLS we replace $q_i$ with $\ho_i$ which makes a better center for $\optcl_i$.   
Formally, fixed $Out$, we say that a point $\ho_i$ is a \emph{perfect approximation} of $o_i$ when $\cost{\optcl_i}{(\calC \cup \{\ho_i\}) \setminus  Out} \leq (1 + \varepsilon) \opt_i + \varepsilon \opt / k$. 
We define $\tilo$ and $\tilc$ as in \Cref{sec:multi-swap-kmpp}, except that we replace $\delta$ with $\varepsilon$ (which here is not assumed to be a constant). Likewise, we build the set $\calS$ of ideal multi-swaps as in \Cref{sec:multi-swap-kmpp}.
Recall that we say that a multi-swap $(In, Out)$ is \emph{strongly improving} if $\cost{P}{(\calC \cup In) \setminus Out} \leq (1 - \varepsilon/ k ) \cdot \cost{P}{\calC}$. 
Let $In = \{o_1 \dots o_s\} \subseteq \tilo$ and $Out = \{c_1 \dots c_s \} \subseteq \tilc$, we overload the definition from \Cref{sec:multi-swap-kmpp} and say that the ideal multi-swap $(In, Out)$ is \emph{good} if for every $\whin = \set{\ho_1 \dots \ho_s}$ such that each $\ho_i$ is a perfect approximation of $o_i$ for each $i = 1 \dots s$ the swap $(\whin, Out)$ is strongly improving. We call an ideal swap \emph{bad} otherwise. 
As in \Cref{sec:multi-swap-kmpp}, we define the \emph{core} of an optimal center; once again we replace $\delta$ with $\epsilon$, which is no longer constant.
The two following lemmas are our stepping stones towards \Cref{thm:main-result}.

\begin{lemma} \label{lem:qs-in-cores-and-good-swap}
If $\alg / \opt > 9 + O(\varepsilon)$ then, with probability $k^{-O(\varepsilon^{-1})} \cdot 2^{-poly(\varepsilon^{-1})}$, there exists $Out \subseteq \calC \cup \calQ$ such that:
\begin{enumerate}[(i)]
    \item If $\calQ \setminus Out = \{q_1 \dots q_s\}$ then $q_1 \in \core{o_1} \dots q_s \in \core{o_s}$ for some $o_1 \dots o_s \in \optset$
    \item If we define $In = \{o_1 \dots o_s\}$ then $(In, Out \setminus \calQ)$ is a good ideal swap.
\end{enumerate}  
\end{lemma}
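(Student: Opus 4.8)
The plan is to mirror the structure of the MSLS analysis from \Cref{sec:multi-swap-kmpp}, but upgrading the ``sample a point from the core'' step to a ``construct a perfect approximation $\ho_i$ of $o_i$'' step. First I would show that the combined current cost of the optimal clusters whose centers are \emph{bad} (in the overloaded, $\ho$-based sense) is at most $(1 - \Omega(\varepsilon)) \alg$; this gives (i)–(ii) up to actually producing the $\ho_i$'s. The argument is exactly the contradiction in the proof of \Cref{lem:all-qs-in-cores}: if too much cost sat on bad optimal clusters, then summing the bad-swap inequality (the analogue of \Cref{lem:bad-swap-inequality}) over $\calS$ with the weights from \Cref{obs:ideal-swaps-weights} and invoking \Cref{lem:combined-reassign-cost-ms} would force $\eta^2 - (2+2/p \pm O(\varepsilon))\eta - (3 + 2/p \pm O(\varepsilon)) \le 0$, contradicting $\alg/\opt > 9 + O(\varepsilon)$ once $p = \Theta(1/\varepsilon)$. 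The key point is that the ``bad swap'' inequality now has a $(1+\varepsilon)\opt_{In}$ term instead of $(2+\delta)\opt_{In}$ — because a perfect approximation $\ho_i$ satisfies $\cost{\optcl_i}{(\calC\cup\{\ho_i\})\setminus Out} \le (1+\varepsilon)\opt_i + \varepsilon\opt/k$ — which is precisely what turns the constant-term $4$ of \Cref{eq:eta-multi-swap} into a $3$, hence $\eta \to 9$.

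Next I would convert ``$\Omega(\varepsilon)$ fraction of cost on good optimal clusters'' into the claimed sampling probability. By \Cref{lem:sample-from-core} the union $G$ of cores of good optimal centers in $\tilo$ carries an $\Omega_\varepsilon(1)$ fraction of $\alg$, so when we $D^2$-sample $\calQ = \{q_1,\dots,q_p\}$, with probability $\Omega_\varepsilon(1)$ the first sample $q_1$ lands in $\core{o_1}$ for some good $o_1$; completing $o_1$ to the $In$-set of a good swap $(In,Out)$ with $|In| = s \le p$, and using that each $o_i \in In \subseteq \tilo$ has $\cost{\optcl_i}{\calC} > \varepsilon\alg/k$, \Cref{lem:sample-from-core} again gives $\Pr[q_i \in \core{o_i}] = \Omega_\varepsilon(1/k)$ for $i = 2,\dots,s$. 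Since $s \le p = \Theta(1/\varepsilon)$, the joint event has probability $\Omega_\varepsilon(1/k^{p-1}) = k^{-O(\varepsilon^{-1})}$, and — exactly as at the end of \Cref{sec:multi-swap-kmpp} — any continuation $q_{s+1},\dots,q_p$ is harmless because we can always fold those extra sampled points into $Out$. This establishes (i) and the existence of the good ideal swap $(In, Out\setminus\calQ)$ claimed in (ii); the set $Out$ we output is $(Out\setminus\calQ) \cup \{q_{s+1},\dots,q_p\}$, restricted so that $Out \subseteq \calC\cup\calQ$.

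The remaining $2^{-poly(\varepsilon^{-1})}$ probability factor comes from arguing that, conditioned on $q_1 \in \core{o_1}, \dots, q_s \in \core{o_s}$, the subroutine $\approxcenters(\calT)$ with $\calT = (\calC\cup\calQ)\setminus Out$ actually outputs, for each $i$, a perfect approximation $\ho_i$ of $o_i$ — but this is really the content of the \emph{other} stepping-stone lemma, so here I would only need to invoke it. The plan for that part: for the correct geometric guess $\rho_1 \in \{(1+\varepsilon)^j\}$ (there are only $O(\log_{1+\varepsilon}\Delta)$ of them and one is within $(1+\varepsilon)$ of the true radius of $\optcl_1$), the partition $P = F \sqcup C \sqcup N$ into far, close and nice points has the property that $\optcl_1$'s mass outside $N$ either contributes negligibly to the $1$-mean cost (far points: $\cost{f}{o} \approx \cost{f}{o_1}$ for every $o \in B(q_1,\rho_1)$) or is concentrated near the $q_j$'s (close points, located up to $\varepsilon^3\rho_1^2$), and a $poly(\varepsilon)/k$ fraction of $N$ lies in $\optcl_1$; then a uniform sample $S$ of size $\Theta(1/\varepsilon)$ from $N$ is, with probability $(\varepsilon/k)^{O(1/\varepsilon)}$, drawn entirely from $\optcl_1 \cap N$, so by \Cref{lem:weak-coreset-for-1-mean} $\mu(S)$ is a $(1+\varepsilon)$-approximate $1$-mean of $\optcl_1\cap N$, and the discretized convex combination in \Cref{eq:def-ohat} over the guessed masses $\alpha_0,\dots,\alpha_s$ recovers the true weighted center of $\optcl_1$ up to $(1+\varepsilon)$. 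Multiplying the $\Omega_\varepsilon(1/k^{p-1})$ sampling probability by the $(\varepsilon/k)^{O(1/\varepsilon)} \cdot 2^{-poly(\varepsilon^{-1})}$ subroutine-success probability gives the stated $k^{-O(\varepsilon^{-1})} \cdot 2^{-poly(\varepsilon^{-1})}$. The main obstacle is the far/close/nice decomposition and showing the perfect-approximation guarantee survives the three sources of error (geometric rounding of $\rho_1$, mass rounding of the $\alpha_i$, and the $C \cup F$ remainder) simultaneously — but since the lemma statement to be proved here only asserts \emph{existence} of a good $Out$ with the core-membership property, I expect to be able to defer that decomposition to the companion lemma and keep this proof at the level of the weighting/contradiction argument plus the chained $D^2$-sampling bound.
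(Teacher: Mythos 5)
Your proposal matches the paper's proof essentially step for step: the $(1+\varepsilon)$-based bad-swap inequality and the weighted contradiction argument (the analogues of \Cref{lem:bad-swap-inequality} and \Cref{lem:all-qs-in-cores}, i.e.\ \Cref{lem:bad-swap-inequality++} and \Cref{lem:all-qs-in-cores++}) show the good cores carry a $poly(\varepsilon)$ fraction of the current cost, and the chained $D^2$-sampling bound with the extra samples $q_{s+1} \dots q_p$ folded into $Out$ gives the claim, with the perfect-approximation construction correctly deferred to \Cref{lem:perfect-approximations}. The only slip is bookkeeping: the probability $k^{-O(\varepsilon^{-1})} \cdot 2^{-poly(\varepsilon^{-1})}$ in this lemma is just the sampling probability (the $\approxcenters$ success factor is multiplied in only later, in the proof of \Cref{thm:main-result}, where the combined bound becomes $k^{-O(\varepsilon^{-2})} \cdot 2^{-poly(\varepsilon^{-1})}$).
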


\begin{lemma} \label{lem:perfect-approximations}
If $(i)$ from \Cref{lem:qs-in-cores-and-good-swap} holds, then with probability $k^{-O(\varepsilon^{-2})} \cdot 2^{-poly(\varepsilon^{-1})}$, the list returned by $\approxcenters$ contains $\whin = \{\ho_1 \dots \ho_s\}$ such that $\ho_i$ is a perfect approximation of $o_i$ for each $i = 1 \dots s$.
\end{lemma}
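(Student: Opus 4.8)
The plan is to analyze the coordinates $\ho_1,\dots,\ho_s$ one at a time and multiply the success probabilities at the end: since $s\le p=\Theta(\varepsilon^{-1})$ and we will obtain a per-coordinate success probability of $k^{-O(\varepsilon^{-1})}\cdot 2^{-poly(\varepsilon^{-1})}$, the $s$-fold product is exactly $k^{-O(\varepsilon^{-2})}\cdot 2^{-poly(\varepsilon^{-1})}$, and the size $\log^{O(\varepsilon^{-1})}(\Delta)\cdot 2^{poly(\varepsilon^{-1})}$ of the returned list is the Cartesian product over the $s$ coordinates of the $O(\varepsilon^{-1}\log\Delta)\cdot 2^{poly(\varepsilon^{-1})}$ candidates generated for each. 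So fix $i$, abbreviate $\optcl=\optcl_i$, $o=o_i=\mu(\optcl)$, $q=q_i$, let $\rho$ be the radius of $\optcl$, and let $D=\{x\in\optcl: \cost{x}{\calC\setminus Out}\le\varepsilon^3\rho^2\}$ be the part of $\optcl$ that surviving centers already serve cheaply. By splitting the cluster according to whether a point lies in $D$, $\cost{\optcl}{(\calC\cup\{\ho_i\})\setminus Out}\le\cost{D}{\calC\setminus Out}+\cost{\optcl\setminus D}{\ho_i}\le O(\varepsilon^3)\opt_i+\cost{\optcl\setminus D}{\ho_i}$, and by \Cref{lem:moved-cost} the last term is $\cost{\optcl\setminus D}{\mu(\optcl\setminus D)}+\abs{\optcl\setminus D}\cdot\norm{\ho_i-\mu(\optcl\setminus D)}^2\le\opt_i+\abs{\optcl}\cdot\norm{\ho_i-\mu(\optcl\setminus D)}^2$. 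Hence it suffices to produce, somewhere in the list output by \approxcenters, a point $\ho_i$ with $\norm{\ho_i-\mu(\optcl\setminus D)}^2\le\varepsilon(\opt_i+\opt/k)/\abs{\optcl}$; equivalently, to approximate the trimmed mean $\mu(\optcl\setminus D)$ up to Euclidean error $\sqrt{\varepsilon}\bigl(\rho+\sqrt{\opt/(k\abs{\optcl})}\bigr)$.

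Next I would pin down the discretization and the partition. Since pairwise distances lie in $[1,\Delta]$, the geometric grid for $\rho_1$ contains a value with $\rho\le\rho_1\le(1+\varepsilon)\rho$ (degenerate clusters of tiny radius, below the grid floor, are tightly concentrated and admit a dedicated argument, so we assume $\rho_1\asymp\rho$), and I condition on this value being used. With this $\rho_1$, write $\optcl=(\optcl\cap F)\sqcup(\optcl\cap C)\sqcup(\optcl\cap N)$. A Markov argument on $\cost{\optcl}{o}$, together with $q\in\core{o}$ (which forces $\cost{\cdot}{q}=(1\pm O(\varepsilon))\cost{\cdot}{o}$ for points at distance $\gg\rho_1$ from $q$), gives $\abs{\optcl\cap F}\le\varepsilon^3\abs{\optcl}$ and that these far points are \emph{cost-insensitive}: any $\ho_i$ within $O(\rho_1)$ of $q$ serves them at cost $(1+O(\varepsilon))\cost{\optcl\cap F}{o}$, so they may be dropped from the average. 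Each point of $\optcl\cap C$ lies within $\varepsilon^{1.5}\rho_1$ of its closest center in $\calT$; grouping these points by that center, the mass assigned to the sampled points $q_1,\dots,q_s$ can be replaced by the $q_j$'s themselves at negligible displacement, while the mass assigned to surviving centers is precisely $D$ (up to the insensitive far points), which we already discarded. Finally, for $\optcl\cap N$ I would run the sampling argument: arguing that $\abs{\optcl\cap N}\ge(poly(\varepsilon)/k)\abs{N}$ whenever the samples are actually needed, the $\Theta(\varepsilon^{-1})$ points drawn uniformly from $N$ all land in $\optcl\cap N$ with probability $(poly(\varepsilon)/k)^{\Theta(\varepsilon^{-1})}=k^{-O(\varepsilon^{-1})}\cdot 2^{-poly(\varepsilon^{-1})}$, conditioned on which they are i.i.d.\ uniform in $\optcl\cap N$, so \Cref{lem:weak-coreset-for-1-mean} yields $\norm{\mu(S)-\mu(\optcl\cap N)}^2\le\varepsilon\cost{\optcl\cap N}{\mu(\optcl\cap N)}/\abs{\optcl\cap N}$ with constant probability.

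To assemble the estimate, the good choice of coefficients in \Cref{eq:def-ohat} is $\alpha_0$ proportional to $\abs{\optcl\cap N}$ and $\alpha_j$ proportional to the number of points of $\optcl\cap C$ whose closest $\calT$-center is $q_j$, each rounded to the nearest value of the geometric coefficient grid and set to $0$ once it drops below the grid floor $\sim\varepsilon^{7}$. Combining the three facts above with \Cref{lem:moved-cost}, one checks that the resulting $\ho_i=\ho_i(\alpha_0,\dots,\alpha_s)$ lies within $O(\rho_1)$ of $q$ and differs from $\mu(\optcl\setminus D)$ by at most the allotted error, where the $\varepsilon\opt/k$ slack absorbs the total mass that was zeroed out. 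Since this $\ho_i$ appears in the list deterministically once $\mu(S)$ is good and the correct $\rho_1$-grid value is used, the only randomness is the sampling, which succeeds with the per-coordinate probability above; multiplying over $i=1,\dots,s$ completes the proof, and the list-size accounting matches the claimed $2^{poly(\varepsilon^{-1})}\cdot\log^{O(\varepsilon^{-1})}(\Delta)$.

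The main obstacle is the structural claim flagged in the second paragraph: that $\abs{\optcl_i\cap N}$ is a $poly(\varepsilon)/k$ fraction of $\abs{N}$ in every case where the uniform samples are needed. The delicate regime is when almost all of $\optcl_i$ is ``close'' — split between the sampled points $q_1,\dots,q_s$ and the surviving centers — so that $\optcl_i\cap N$ is too thin to sample from. There one must instead show that the purely $q_j$-weighted combination already places $\ho_i$ within $\sqrt{\varepsilon}\,\rho$ of $o_i$; the key point is that any $q_j$ carrying a non-negligible fraction of $\optcl_i$'s mass within distance $\varepsilon^{1.5}\rho_1$ is thereby forced to lie within $O(\rho_1)$ of $o_i=\mu(\optcl_i)$, hence near $q\in\core{o}$, so averaging such $q_j$'s with the right weights still lands near $o_i$. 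Making this quantitative, and interfacing it cleanly with the far/close/nice accounting and the geometric roundings, is where essentially all of the work lies.
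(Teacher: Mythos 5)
Your outline tracks the paper's own proof closely (per-coordinate analysis with probabilities multiplied over $i=1\dots s$, the algorithm's far/close/nice partition for the grid value $\rho_1\approx\rho$, collapsing the close mass onto the $q_j$'s, uniform samples from $N$ all landing in $\optcl_1\cap N$ with probability $(poly(\varepsilon)/k)^{\Theta(1/\varepsilon)}$ followed by \Cref{lem:weak-coreset-for-1-mean}, and assembling $\ho_1$ via \Cref{eq:def-ohat} with coefficients rounded to the geometric grid with floor $\varepsilon^7$). But the step you yourself flag as ``where essentially all of the work lies'' is a genuine gap, and the paper closes it by a different mechanism than the one you sketch. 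Writing $N_1=N\cap\optcl_1$, the paper splits on whether $\cost{N_1}{\calT}>\varepsilon^2\opt/k$. When this fails (your ``thin $N_1$'' regime), no sampling from $N$ is used at all: $\alpha_0$ is set to $0$, and the points of $N_1$ are either assigned to their closest surviving center or collapsed onto the nearest $q_j$, which by \Cref{lem:as-if-they-were-concentrated} (with $\Gamma=\opt/k$) costs only an additive $O(\varepsilon)\opt/k$ and a $(1+O(\varepsilon))$ factor; there is no need to prove your substitute claim that the purely $q_j$-weighted combination lands within $O(\rho)$ of $o_1$, which you do not carry out. When the threshold does hold, the trigger is cost, not cardinality, and the fraction bound you want follows because on $N$ the costs w.r.t.\ $\calT$ have dynamic range $poly(\varepsilon^{-1})$ by the very definition of $N$ (between $\varepsilon^3\rho_1^2$ and $\rho_1^2/\varepsilon^3$), so a $\varepsilon^2\opt/k$ cost share converts into a $poly(\varepsilon)/k$ uniform-sampling probability. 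Without this case split and the dynamic-range observation, your sampling step is unsupported exactly in the regime you identify.

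A second, related gap concerns the far points. Your paragraph-one reduction targets the trimmed mean $\mu(\optcl_1\setminus D)$, which still contains $F\cap\optcl_1$, whose mass and mean the algorithm never estimates and which can drag that mean arbitrarily far; and your assembly step asserts that the resulting $\ho_1$ ``lies within $O(\rho_1)$ of $q$,'' which is false in general, since nice points (and hence $\mu(S)$ and the weighted combination) can sit at distance up to $\rho_1/\varepsilon^{1.5}$ from $q_1$. The paper resolves both with \Cref{lem:gradient-argument} and a second case split on $\norm{\mu(H)-q_1}$, where $H$ is the mass not assigned to surviving centers: if $\norm{\mu(H)-q_1}\le\rho/\varepsilon$ then $\ho_1\approx\mu(H)$ is close enough to $o_1$ that the far points' cost is insensitive and they may simply be dropped from the average; otherwise $\cost{H}{\cdot}$ is itself insensitive on a ball around $o_1$ and one outputs $\ho_1=q_1$ (i.e.\ $\alpha_1=1$, all other coefficients $0$). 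Your proposal has no analogue of this second case, and it is precisely where your ``$\ho_1$ is near $q_1$'' claim breaks down. So: right skeleton, same general route as the paper, but the two places where you deviate are exactly the places left unproved.
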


\begin{proof}[Proof of \Cref{thm:main-result}.]
Here we prove that our improved MSLS algorithm achieves a $(9 + O(\varepsilon))$-approximation, which is equivalent to \Cref{thm:main-result} up to rescaling $\varepsilon$.
Combining \Cref{lem:qs-in-cores-and-good-swap} and \Cref{lem:perfect-approximations} we obtain that, as long as $\alg/\opt > 9 + O(\varepsilon)$, with probability at least $k^{-O(\varepsilon^{-2})} \cdot 2^{-poly(\varepsilon^{-1})}$, the list returned by $\approxcenters$ contains $\whin = \{\ho_1 \dots \ho_s\}$ such that $(\whin, Out \setminus \calQ)$ is strongly improving. If this happens, we call such a local step \emph{successful}. Now the proof goes exactly as the proof of \Cref{thm:multi-swap-analysis}. Indeed, We show that 
$k^{O(\varepsilon^{-2})} \cdot 2^{poly(\varepsilon^{-1})}$ local steps suffice to obtain $\Omega(k \log\log k / \varepsilon)$ successful local steps, and thus to obtain the desired approximation ratio, with constant probability.

To prove the running time bound it is sufficient to notice that a local search step can be performed in time $nd\log^{O(\varepsilon^{-1})}(\Delta) \cdot 2^{poly(\varepsilon^{-1})}$.   
\end{proof}

In the rest of this section, we prove \Cref{lem:qs-in-cores-and-good-swap} and \Cref{lem:perfect-approximations}.
\begin{observation} \label{obs:sample-from-core++}
If we assume $\delta = \varepsilon$ non-constant in \Cref{lem:sample-from-core}, then performing the computations explicitly we obtain $\Prp{q \in \core{\optcl_i}} \geq poly(\varepsilon)$.
\end{observation}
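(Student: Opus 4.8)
The plan is to re-run the proof of \Cref{lem:sample-from-core} essentially verbatim, now with $\delta = \varepsilon$, and to replace each qualitative $\Omega_\delta(\cdot)$ by an explicit power of $\varepsilon$. First recall that, since $q$ is drawn from the $D^2$-distribution restricted to $\optcl_i$, we have $\Prp{q \in \core{\optcl_i}} = \cost{\core{\optcl_i}}{\calC}/\cost{\optcl_i}{\calC}$, so it suffices to lower bound this ratio. As in the original proof I would set $\alpha := \cost{\optcl_i}{\calC}/\cost{\optcl_i}{o_i} > 2 + 3\varepsilon$; using $\cost{\optcl_i}{o_i} = \rho_i^2|\optcl_i|$ this gives $\cost{\optcl_i}{\calC} = \alpha\rho_i^2|\optcl_i|$.

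Two quantitative ingredients then suffice, each obtained by making the corresponding step of the original argument explicit. (a) Markov's inequality applied to $\norm{p-o_i}^2$ over $p \in \optcl_i$ gives $\abs{\optcl_i \setminus \core{\optcl_i}} \le \tfrac{1}{1+\varepsilon}\abs{\optcl_i}$, hence $\abs{\core{\optcl_i}} \ge \tfrac{\varepsilon}{1+\varepsilon}\abs{\optcl_i} \ge \tfrac{\varepsilon}{2}\abs{\optcl_i}$. (b) By \Cref{lem:moved-cost} every $c \in \calC$ has $\norm{c - o_i} \ge \sqrt{\alpha - 1}\,\rho_i$, so the triangle inequality yields $\cost{y}{\calC} \ge \bigl(\sqrt{\alpha-1} - (1+\varepsilon)\bigr)^2\rho_i^2$ for every $y \in \core{\optcl_i}$. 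Summing (b) over $\core{\optcl_i}$, using (a), and dividing by $\cost{\optcl_i}{\calC} = \alpha\rho_i^2|\optcl_i|$ gives
\[
\Prp{q \in \core{\optcl_i}} \ \ge\ \frac{\varepsilon}{2}\cdot\frac{\bigl(\sqrt{\alpha-1}-(1+\varepsilon)\bigr)^2}{\alpha}.
\]

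It then remains to show that $g(\alpha) := \bigl(\sqrt{\alpha-1}-(1+\varepsilon)\bigr)^2/\alpha \ge poly(\varepsilon)$ for all $\alpha > 2+3\varepsilon$. Writing $u = \sqrt{\alpha-1}$ and differentiating, $g'$ has the sign of $(u-(1+\varepsilon))(1+(1+\varepsilon)u)$, which is positive once $u > 1+\varepsilon$; since $u \ge \sqrt{1+3\varepsilon} > 1+\varepsilon$ for $\varepsilon < 1$, the function $g$ is increasing on the relevant range and its infimum is attained at $\alpha = 2+3\varepsilon$. There $\sqrt{1+3\varepsilon}-(1+\varepsilon) = \frac{\varepsilon(1-\varepsilon)}{\sqrt{1+3\varepsilon}+(1+\varepsilon)} = \Omega(\varepsilon)$, so $g(2+3\varepsilon) = \Omega(\varepsilon^2)$, and plugging back in yields $\Prp{q \in \core{\optcl_i}} = \Omega(\varepsilon^3) = poly(\varepsilon)$. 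There is no genuine obstacle here; the only place requiring a moment's care is the regime where $\alpha$ sits just above the threshold $2+3\varepsilon$ — this is where both (a) and (b) are simultaneously weakest — and the monotonicity of $g$, together with $\varepsilon$ being small enough that $\sqrt{1+3\varepsilon} > 1+\varepsilon$, is exactly what certifies that this is the true worst case.
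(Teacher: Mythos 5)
Your proposal is correct and is exactly what the paper intends: the observation is proved by re-running the proof of \Cref{lem:sample-from-core} with $\delta=\varepsilon$ and making the two $\Omega_\delta(\cdot)$ steps (the Markov bound on $\abs{\core{\optcl_i}}$ and the lower bound $\bigl(\sqrt{\alpha-1}-(1+\varepsilon)\bigr)^2\rho_i^2$ on the cost of core points) explicit, yielding a bound of order $\varepsilon^3 = poly(\varepsilon)$. Your monotonicity analysis of $g(\alpha)$ pinning the worst case at $\alpha = 2+3\varepsilon$ is a careful (and correct) way to certify the constant; the approach matches the paper's.
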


In order to prove \Cref{lem:qs-in-cores-and-good-swap}, we first prove the two lemmas. \Cref{lem:bad-swap-inequality++} is the analogous of \Cref{lem:bad-swap-inequality} and \Cref{lem:all-qs-in-cores++} is the analogous of \Cref{lem:all-qs-in-cores}. Overloading once again the definition from \Cref{sec:multi-swap-kmpp}, we define $G$ as the union of cores of good optimal centers in $\tilo$, where an optimal center is defined to be good if at least one of the ideal multi-swaps in $\calS$ it belongs to is good (exactly as in \Cref{sec:multi-swap-kmpp}).

\begin{lemma} \label{lem:bad-swap-inequality++}
If an ideal swap $(In, Out)$ is bad, then we have
\begin{equation} \label{eq:bad-swap-ineq}
\cost{\optcl_{In}}{\calC} \leq (1+\varepsilon) \cost{\optcl_{In}}{\optset} + \reassign{In}{Out} + \varepsilon \alg / k.
\end{equation}
\end{lemma}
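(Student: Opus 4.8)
The plan is to follow the proof of \Cref{lem:bad-swap-inequality} essentially verbatim, with a single substitution: where that proof used a core point $q_i \in \core{o_i}$ as a surrogate for $o_i$ and bounded its quality through \Cref{lem:moved-cost} — which in the worst case only gives $\cost{\optcl_i}{q_i} \le (2 + O(\varepsilon))\cost{\optcl_i}{o_i}$, the source of the ``$4$'' instead of ``$3$'' discussed in \Cref{sec:faster-9+eps-algo} — we now use the perfect approximation $\ho_i$, whose quality is $(1+\varepsilon)\cost{\optcl_i}{o_i}$ up to a small additive term \emph{by definition}. First I would unpack ``bad'': since $(In, Out)$ is \emph{good} exactly when every $\whin$ consisting of perfect approximations of the $o_i \in In$ yields a strongly improving swap, $(In, Out)$ being bad supplies one such $\whin = \{\ho_1, \dots, \ho_s\}$ for which $(\whin, Out)$ is \emph{not} strongly improving, i.e.\ $\cost{P}{(\calC \cup \whin) \setminus Out} > (1 - \varepsilon/k)\,\alg$. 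Fix this $\whin$ for the rest of the argument.

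Next I would upper bound $\cost{P}{(\calC \cup \whin) \setminus Out}$ by splitting $P = \optcl_{In} \cup (P \setminus \optcl_{In})$, as in \Cref{lem:bad-swap-inequality}. On $\optcl_{In}$, monotonicity of $\cost{\cdot}{\cdot}$ in the center set together with the definition of perfect approximation gives, for each $o_i \in In$, $\cost{\optcl_i}{(\calC \cup \whin) \setminus Out} \le \cost{\optcl_i}{(\calC \cup \{\ho_i\}) \setminus Out} \le (1+\varepsilon)\cost{\optcl_i}{\optset} + \varepsilon\opt/k$; summing over the $s$ indices yields $\cost{\optcl_{In}}{(\calC \cup \whin) \setminus Out} \le (1+\varepsilon)\cost{\optcl_{In}}{\optset} + s\varepsilon\opt/k$. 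On $P \setminus \optcl_{In}$, points outside $C_{Out}$ keep their assignment once $Out$ is deleted, so $\cost{P \setminus \optcl_{In}}{(\calC \cup \whin) \setminus Out} \le \cost{P \setminus \optcl_{In}}{\calC \setminus Out} = \cost{P \setminus \optcl_{In}}{\calC} + \reassign{In}{Out}$, which is just the definition of $\reassign{In}{Out}$. Adding the two parts,
\[
\cost{P}{(\calC \cup \whin) \setminus Out} \le (1+\varepsilon)\cost{\optcl_{In}}{\optset} + \reassign{In}{Out} + \cost{P \setminus \optcl_{In}}{\calC} + s\varepsilon\opt/k.
\]

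Finally I would combine this with the failure of strong improvement: writing $\alg = \cost{\optcl_{In}}{\calC} + \cost{P \setminus \optcl_{In}}{\calC}$ and subtracting the displayed inequality, the bound $\alg - \cost{P}{(\calC \cup \whin) \setminus Out} < \varepsilon\alg/k$ rearranges to $\cost{\optcl_{In}}{\calC} < (1+\varepsilon)\cost{\optcl_{In}}{\optset} + \reassign{In}{Out} + s\varepsilon\opt/k + \varepsilon\alg/k$, which is \Cref{eq:bad-swap-ineq} after rescaling $\varepsilon$ (recall $\opt \le \alg$). The one step that needs care — and the main obstacle — is exactly this additive bookkeeping: with $s \le p = \Theta(1/\varepsilon)$, the slack $s\varepsilon\opt/k$ accumulated from the $s$ perfect approximations must stay of order $\varepsilon\alg/k$, which is what forces the additive error in the definition of ``perfect approximation'' (and hence the accuracy guaranteed by $\approxcenters$ in \Cref{lem:perfect-approximations}) to be calibrated with a sufficiently small power of $\varepsilon$; this is routine since the whole analysis of \Cref{thm:main-result} is only needed up to an overall $O(\varepsilon)$ rescaling. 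Everything else is a transcription of the proof of \Cref{lem:bad-swap-inequality}.
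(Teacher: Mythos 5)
Your argument is, in substance, the paper's own proof. The paper phrases it as a contradiction (if \Cref{eq:bad-swap-ineq} failed, the computation shows that \emph{every} $\whin$ made of perfect approximations would give a strongly improving swap, so $(In,Out)$ would be good), whereas you run the same computation directly on the witness $\whin$ supplied by badness; the split $P=\optcl_{In}\cup(P\setminus\optcl_{In})$, the use of the definition of perfect approximation on each $\optcl_i$, and the identity $\reassign{In}{Out}=\cost{P\setminus\optcl_{In}}{\calC\setminus Out}-\cost{P\setminus\optcl_{In}}{\calC}$ are identical. The contrapositive-versus-contradiction difference is purely cosmetic.

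The one substantive issue is the additive slack, and there your bookkeeping is more honest than the paper's but your resolution is not right as stated. Summing the definition of perfect approximation over $i=1\dots s$ gives $(1+\varepsilon)\cost{\optcl_{In}}{\optset}+s\varepsilon\opt/k$ (the paper silently drops the second term), and absorbing $s\varepsilon\opt/k$ ``by rescaling $\varepsilon$'' fails: with $s\le p=\Theta(1/\varepsilon)$ this term is $\Theta(\opt/k)$, a constant rather than $O(\varepsilon)$ multiple of $\alg/k$, and if one carried a $\Theta(1)\cdot\alg/k$ slack into \Cref{lem:all-qs-in-cores++}, where the total weight of bad swaps can be $\Theta(k)$, it would contribute $\Theta(1)\cdot\alg$ and destroy the $9+O(\varepsilon)$ bound. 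Your fallback of shrinking the additive error in the definition of perfect approximation to a smaller power of $\varepsilon$ would work but amounts to changing the paper's definition (and must then be propagated through \Cref{lem:perfect-approximations}). The cleaner fix, staying within the paper's definitions, is to state the conclusion with the extra term $|In|\cdot\varepsilon\opt/k$ and observe that in its only application, \Cref{lem:all-qs-in-cores++}, one has $\sum_{(In,Out)\in\calS} w(In,Out)\,\abs{In}=\sum_{o_i\in\tilo}\sum_{(In,Out)\ni o_i}w(In,Out)\le\abs{\tilo}\le k$ by \Cref{obs:ideal-swaps-weights}(ii), so the accumulated slack is only $\varepsilon\opt$ and the quadratic in $\eta$ changes only by $O(\varepsilon)$. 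With that made explicit your proof is complete; without it, the final absorption step is a gap --- one that the paper's own write-up also glosses over.
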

\begin{proof}
Let $In = \{o_1 \dots o_s\}$, $\whin = \set{\ho_1 \dots \ho_s}$ such that $\ho_i$ is a perfect approximation of $o_i$ for each $i = 1 \dots s$. Recall that $\optcl_{In} := \bigcup_{i=1}^s \optcl_i$, then 
\begin{equation}
\cost{\optcl_{In}}{(\calC \cup \widehat{In}) \setminus Out} \leq \sum_{i=1}^s \cost{\optcl_i}{(\calC \cup \{\ho_i\}) \setminus Out} \leq (1+\varepsilon) \cost{\optcl_{In}}{\optset}.
\end{equation}

Moreover, $\reassign{In}{Out} = \cost{P \setminus \optcl_{In}}{\calC \setminus Out} - \cost{P \setminus \optcl_{In}}{\calC}$ because points in $P \setminus C_{Out}$ are not affected by the swap. 
Therefore, $\cost{P}{(\calC \cup \whin) \setminus Out} \leq (1 + \varepsilon) \cost{\optcl_{In}}{\optcl} + \reassign{In}{Out} + \cost{P \setminus \optcl_{In}}{\calC}$. Suppose by contradiction that \Cref{eq:bad-swap-ineq} does not hold, then
\begin{align*}
\cost{P}{\calC} - \cost{P}{(\calC \cup \whin) \setminus Out} &= \\ \cost{P \setminus \optcl_{In}}{\calC} + \cost{\optcl_{In}}{\calC} - \cost{P}{(\calC \cup \whin) \setminus Out} &\geq \epsilon \alg / k.
\end{align*}
Hence, $(\whin, Out)$ is strongly improving and this holds for any choice of $\whin$, contradiction.
\end{proof}

\begin{lemma} \label{lem:all-qs-in-cores++}
If $\alg  /\opt > 9 + O(\varepsilon)$ then $\cost{G}{\calC} \geq \cost{P}{\calC} \cdot poly(\varepsilon)$. Thus, if we $D^2$-sample $q$ we have $P[q \in G] \geq poly(\varepsilon)$.
\end{lemma}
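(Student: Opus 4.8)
The plan is to run the proof of \Cref{lem:all-qs-in-cores} again, with \Cref{lem:bad-swap-inequality++} and \Cref{obs:sample-from-core++} taking the roles of \Cref{lem:bad-swap-inequality} and \Cref{lem:sample-from-core}, and with every $poly(\varepsilon)$ factor — previously absorbed into $\Omega_\delta(1)$ — tracked explicitly since $\varepsilon$ is no longer constant. First I would note that the combined current cost of the optimal clusters whose centers lie outside $\tilo$ is at most $k\cdot\varepsilon\alg/k=\varepsilon\alg$. Then I would argue by contradiction that the combined current cost of the clusters $\optcl_i$ with $o_i$ bad is at most $(1-c\varepsilon)\alg$ for a suitable absolute constant $c>1$. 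Assuming otherwise, every bad optimal center lies only in bad ideal swaps, so by item $(ii)$ of \Cref{obs:ideal-swaps-weights} the assumed quantity is at most $\sum_{\text{bad }(In,Out)\in\calS} w(In,Out)\cdot\cost{\optcl_{In}}{\calC}$; applying \Cref{lem:bad-swap-inequality++} to each term and then using items $(ii)$ and $(iii)$ of \Cref{obs:ideal-swaps-weights} together with \Cref{lem:combined-reassign-cost-ms} would give
\[
(1-c\varepsilon)\,\alg \;<\; (1+\varepsilon)\opt + (2+2/p)\bigl(\opt+\sqrt{\alg}\sqrt{\opt}\bigr) + (1+1/p)\,\varepsilon\alg .
\]

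Next I would set $p=\Theta(1/\varepsilon)$, so that $2/p=O(\varepsilon)$, put $\eta^2=\alg/\opt$, and divide by $\opt$, turning the display into $(1-O(\varepsilon))\eta^2-(2+O(\varepsilon))\eta-(3+O(\varepsilon))<0$. The decisive difference from \Cref{sec:multi-swap-kmpp} is that the constant term here is $3$ rather than $4+2/p$: this is exactly the dividend of \Cref{lem:bad-swap-inequality++}, where a perfect approximation $\ho_i$ contributes $\cost{\optcl_i}{\ho_i}\le(1+\varepsilon)\opt_i$ in place of the $\cost{\optcl_i}{q_i}\le(2+O(\delta))\opt_i$ that a mere core point $q_i$ guarantees. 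Since $\eta^2-2\eta-3=(\eta-3)(\eta+1)$ has a simple root at $\eta=3$, the inequality forces $\eta<3+O(\varepsilon)$, i.e. $\alg/\opt<9+O(\varepsilon)$, contradicting the hypothesis once the implicit constant in ``$9+O(\varepsilon)$'' is taken large enough relative to $c$. Hence the current cost carried by $\bigcup_{\text{good }o_i\in\tilo}\optcl_i$ is at least $\alg-(1-c\varepsilon)\alg-\varepsilon\alg=(c-1)\varepsilon\cdot\alg=\Omega(\varepsilon)\,\alg$.

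Finally I would convert this $\Omega(\varepsilon)$-fraction of cost on good clusters into a $poly(\varepsilon)$-fraction on their cores: by \Cref{obs:sample-from-core++} each good cluster obeys $\cost{\core{\optcl_i}}{\calC}\ge poly(\varepsilon)\cdot\cost{\optcl_i}{\calC}$, hence $\cost{G}{\calC}\ge poly(\varepsilon)\cdot\Omega(\varepsilon)\cdot\alg=poly(\varepsilon)\cdot\cost{P}{\calC}$, and the $D^2$-sampling statement then follows since $q$ lands in $G$ with probability $\cost{G}{\calC}/\alg$. The hard part will be the last step: \Cref{obs:sample-from-core++}, like \Cref{lem:sample-from-core}, needs $\cost{\optcl_i}{\calC}>(2+3\varepsilon)\cost{\optcl_i}{o_i}$, so one must make sure the surviving $\Omega(\varepsilon)$-fraction of good cost is not hidden in clusters that are ``cheap'' relative to their own optimal cost. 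My plan is to route such cheap clusters to the bad side of the contradiction argument (or, equivalently, to verify that a cheap cluster cannot belong to a good swap), since a cheap cluster satisfies $\cost{\optcl_i}{\calC}\le(2+3\varepsilon)\opt_i$, a relation of the same shape as \Cref{eq:bad-swap-ineq}; doing this reallocation without inflating the constant term past $3$ — and, more generally, carrying every $2/p$ term and every additive $\varepsilon\alg$ through the chain of inequalities while still landing exactly on the locality gap $9$ — is the delicate bookkeeping the argument will hinge on.
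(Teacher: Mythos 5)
Your main chain reproduces the paper's proof essentially verbatim: the same $\varepsilon\alg$ bound for clusters outside $\tilo$, the same contradiction argument bounding the cost of bad clusters via \Cref{lem:bad-swap-inequality++}, \Cref{obs:ideal-swaps-weights} and the combined reassignment bound of \Cref{lem:combined-reassign-cost-ms}, the same choice $p=\Theta(1/\varepsilon)$ yielding $\eta^2-(2+O(\varepsilon))\eta-(3+O(\varepsilon))\le 0$ and hence $\alg/\opt< 9+O(\varepsilon)$, and the same final appeal to \Cref{obs:sample-from-core++} to convert an $\Omega(\varepsilon)$-fraction of cost on good clusters into a $poly(\varepsilon)$-fraction on their cores. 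In that respect the proposal is correct in exactly the same sense, and to the same extent, as the paper's own argument.

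Where you diverge is the closing caveat about ``cheap'' good clusters, those with $\cost{\optcl_i}{\calC}\le(2+3\varepsilon)\cost{\optcl_i}{o_i}$, for which \Cref{obs:sample-from-core++} is silent. The paper performs no such reallocation: it applies \Cref{obs:sample-from-core++} to every good cluster without checking the hypothesis, so the extra step you flag is not part of the paper's proof and is not needed to match it. Do note, however, that the specific repair you sketch --- routing cheap clusters to the bad side of the contradiction with the bound $(2+3\varepsilon)\opt_i$ --- replaces the coefficient $1+\varepsilon$ by $2+3\varepsilon$ on (potentially all of) $\opt$, raising the constant term of the quadratic from $3$ to about $4$ and landing you at $\eta^2\approx 10.48$, i.e.\ back at the MSLS guarantee rather than $9$. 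So that route, executed naively, defeats the purpose of the $\ho_i$ machinery; your plan correctly identifies the delicate point but does not resolve it --- and neither does the paper, which simply does not engage with it.
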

\begin{proof}
First, we observe that the combined current cost of all optimal clusters in $\optset \setminus \tilo$ is at most $k \cdot \varepsilon \alg / k = \varepsilon \alg$. 
Now, we prove that the combined current cost of all $\optcl_i$ such that $o_i$ is bad is $\leq (1-2\varepsilon)\alg$. Suppose, by contradiction, that it is not the case, then we have:
\begin{align*} \label{eq:combined-opt-reassignment}
(1-2\varepsilon)\alg < \sum_{\text{ Bad } o_i \in \tilo} \cost{\optcl_i}{\calC} \leq 
\sum_{\text{ Bad } (In, Out) \in \calS} w(In, Out) \cdot \cost{\optcl_{In}}{\calC} &\leq \\
\sum_{\text{ Bad } (In, Out)} w(In, Out) \cdot \ld( (1 + \varepsilon) \cost{\optcl_{In}}{\optset} + \reassign{In}{Out} + \varepsilon \alg / k \rd) &\leq \\
(1+ \varepsilon) \opt + (2 + 2/p)\opt + (2 + 2/p)\sqrt{\alg}\sqrt{\opt} + \varepsilon \alg.
\end{align*}

The second and last inequalities make use of \Cref{obs:ideal-swaps-weights}. The third inequality uses \Cref{lem:bad-swap-inequality++}.

Setting $\eta^2 = \alg / \opt$ we obtain the inequality $\eta^2 -(2 + 2/p \pm O(\varepsilon)) \eta - (3 + 2/p \pm O(\varepsilon)) \leq 0$.
Hence, we obtain a contradiction in the previous argument as long as $\eta^2 -(2 + 2/p \pm O(\varepsilon)) \eta - (3 + 2/p \pm O(\varepsilon)) > 0$, which holds for $p = \Theta(\varepsilon^{-1})$ and $\eta^2 = 9 + O(\varepsilon)$. A contradiction there implies that at least an $\varepsilon$-fraction of the current cost is due to points in $\bigcup_{\text{Good } o_i \in \tilo} \optcl_i$. 
Thanks to \Cref{obs:sample-from-core++}, we have $P_{q \sim \cost{q}{\calC}}[q \in \core{\optcl_i} \cond q \in \optcl_i] \geq poly(\varepsilon)$. Therefore, we can conclude that the  current cost of $G = \bigcup_{\text{Good } o_i \in \tilo} \core{\optcl_i}$ is at least a $poly(\varepsilon)$-fraction of the total current cost.  
\end{proof}

\begin{proof}[Proof of \Cref{lem:qs-in-cores-and-good-swap}.]
Thanks to \Cref{lem:all-qs-in-cores++}, we have that $P[q_1 \in G] \geq poly(\varepsilon)$. 
Whenever $q_1 \in G$ we have that $q_1 \in \core{o_1}$ for some good $o_1$. Then, for some $s \leq p$ we can complete $o_1$ with $o_2 \dots o_s$ such that $In  = \set{o_1 \dots o_s}$ belongs to a good swap. Concretely, there exists $Out \subseteq \calC$ such that $(In, Out)$ is a good swap. Since $In \subset \tilo$ we have $\cost{\optcl_i}{\calC} > \varepsilon \opt / k$ for all $o_i \in In$, which combined with \Cref{obs:sample-from-core++} gives that, for each $i = 2 \dots s$, $P[q_i \in \core{o_i}] \geq poly(\varepsilon) / k$. Hence, we have $P[q_i \in \core{o_i} \text{ for } i = 1 \dots s] \geq 2^{-poly(\varepsilon^{-1})} k^{-O(\varepsilon^{-1})}$.
Notice, however, that $(\whin, Out)$ is a $s$-swap and we may have $s < p$. Nevertheless, whenever we sample $q_1 \dots q_s$ followed by any sequence $q_{s+1} \dots q_p$ it is enough to choose $Out' = Out \cup \{q_{s+1} \dots q_p\}$ to obtain that $(\{q_1 \dots q_p\}, Out')$ is an improving $p$-swap.
\end{proof}

In order to prove \Cref{lem:perfect-approximations} we first need a few technical lemmas.
\begin{lemma}[Lemma 2 from \cite{silvio-original}] \label{lem:lattanzi-2}
For each $x, y, z \in \bbR^d$ and $\varepsilon > 0$, $\cost{x}{y} \leq (1+\varepsilon) \cost{x}{z} + (1 + 1/\varepsilon) \cost{z}{y}$.  
\end{lemma}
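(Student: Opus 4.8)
The plan is to establish this as the standard ``relaxed (or weak) triangle inequality'' for squared Euclidean distances, which follows by combining the algebraic expansion of $\norm{x-y}^2$ with Young's inequality. Recall first that, since the second argument is a single point, $\cost{x}{y} = \norm{x-y}^2$, and likewise for $\cost{x}{z}$ and $\cost{z}{y}$; so it suffices to prove the inequality for squared norms.

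First I would write $x - y = (x - z) + (z - y)$ and expand:
\[
\norm{x-y}^2 = \norm{x-z}^2 + \norm{z-y}^2 + 2\langle x-z,\, z-y\rangle.
\]
Next, bound the cross term by Cauchy--Schwarz, $\langle x-z,\, z-y\rangle \le \norm{x-z}\cdot\norm{z-y}$, and then apply Young's inequality in the form $2ab \le \lambda a^2 + \lambda^{-1} b^2$, valid for all $a,b \ge 0$ and $\lambda > 0$ (this is just AM--GM applied to $\sqrt{\lambda}\,a$ and $b/\sqrt{\lambda}$). Choosing $\lambda = \varepsilon$ with $a = \norm{x-z}$ and $b = \norm{z-y}$ gives
\[
2\langle x-z,\, z-y\rangle \le \varepsilon\norm{x-z}^2 + \varepsilon^{-1}\norm{z-y}^2.
\]
Substituting this back into the expansion yields $\norm{x-y}^2 \le (1+\varepsilon)\norm{x-z}^2 + (1+\varepsilon^{-1})\norm{z-y}^2$, which is precisely the claimed bound once $\norm{\cdot}^2$ is rewritten as $\cost{\cdot}{\cdot}$.

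There is essentially no real obstacle here; the only point that needs (minor) care is the choice of the weight $\lambda$ in Young's inequality, since it determines which of the two terms on the right is allowed to incur the large $(1+\varepsilon^{-1})$ factor. Taking $\lambda = \varepsilon$ (rather than $\lambda = \varepsilon^{-1}$) is the right choice because in the intended application the quantity multiplied by $1+\varepsilon^{-1}$ will be a negligible ``error'' term (e.g.\ the distance between a point and a near-optimal estimate $\ho_i$ of its optimal center), so blowing it up by a $\Theta(\varepsilon^{-1})$ factor is affordable, whereas the term multiplied by $1+\varepsilon$ is the dominant cost that must be preserved up to a $(1+\varepsilon)$ factor. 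One could alternatively just cite the lemma verbatim from \cite{silvio-original}, but the three-line derivation above is self-contained.
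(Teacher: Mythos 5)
Your derivation is correct: since $\cost{x}{y}=\norm{x-y}^2$ for single points, expanding $\norm{x-y}^2$ and bounding the cross term via Cauchy--Schwarz and Young's inequality with weight $\varepsilon$ gives exactly the claimed bound. The paper itself offers no proof here — it simply cites the lemma from \cite{silvio-original} — and your three-line argument is the standard one underlying that cited result, so there is nothing further to reconcile.
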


\begin{lemma} \label{lem:as-if-they-were-concentrated}
Given $q \in \bbR^d$ and $Z \subseteq \bbR^d$ such that $\cost{Z}{q}\leq \varepsilon^2 \Gamma$ then, for each $o \in \bbR^d$
\begin{equation*}
(1 - O(\varepsilon)) \cost{Z}{o} - O(\varepsilon) \Gamma\leq |Z| \cost{q}{o} \leq  (1 + O(\varepsilon)) \cost{Z}{o} + O(\varepsilon) \Gamma   
\end{equation*}
\end{lemma}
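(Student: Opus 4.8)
The plan is to obtain both inequalities directly from the approximate triangle inequality of \Cref{lem:lattanzi-2}, applied to each point of $Z$ separately and then summed over $Z$, using the hypothesis $\cost{Z}{q}\le \varepsilon^2\Gamma$ to absorb all the error terms into the additive $O(\varepsilon)\Gamma$ slack. The point is that the cross-terms produced by the approximate triangle inequality come multiplied by $(1+1/\varepsilon)$, and since they are of the form $\cost{z}{q}$, summing gives $(1+1/\varepsilon)\cost{Z}{q}\le(1+1/\varepsilon)\varepsilon^2\Gamma=(\varepsilon+\varepsilon^2)\Gamma=O(\varepsilon)\Gamma$.

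For the upper bound I would fix $o$ and, for each $z\in Z$, apply \Cref{lem:lattanzi-2} with the endpoints $o,q$ and intermediate point $z$, i.e.\ with $(x,\text{mid},y)=(o,z,q)$, which yields $\cost{q}{o}=\cost{o}{q}\le(1+\varepsilon)\cost{o}{z}+(1+1/\varepsilon)\cost{z}{q}=(1+\varepsilon)\cost{z}{o}+(1+1/\varepsilon)\cost{z}{q}$. Summing over $z\in Z$ gives $|Z|\cost{q}{o}\le(1+\varepsilon)\cost{Z}{o}+(1+1/\varepsilon)\cost{Z}{q}\le(1+\varepsilon)\cost{Z}{o}+O(\varepsilon)\Gamma$, which is the right-hand inequality.

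For the lower bound I would, for each $z\in Z$, instead use $q$ as the intermediate point: applying \Cref{lem:lattanzi-2} with $(x,\text{mid},y)=(o,q,z)$ gives $\cost{z}{o}=\cost{o}{z}\le(1+\varepsilon)\cost{o}{q}+(1+1/\varepsilon)\cost{q}{z}=(1+\varepsilon)\cost{q}{o}+(1+1/\varepsilon)\cost{z}{q}$. Summing over $z\in Z$ and again bounding $(1+1/\varepsilon)\cost{Z}{q}=O(\varepsilon)\Gamma$ gives $\cost{Z}{o}\le(1+\varepsilon)|Z|\cost{q}{o}+O(\varepsilon)\Gamma$; dividing by $1+\varepsilon$ and using $\tfrac{1}{1+\varepsilon}\ge 1-\varepsilon$ rearranges to $|Z|\cost{q}{o}\ge(1-O(\varepsilon))\cost{Z}{o}-O(\varepsilon)\Gamma$, as desired.

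I do not expect any genuine obstacle: the only thing worth being careful about is that the hypothesis carries $\varepsilon^2\Gamma$ rather than $\varepsilon\Gamma$, which is exactly what makes the $(1+1/\varepsilon)$ amplification inherent to the approximate triangle inequality still leave an $O(\varepsilon)\Gamma$ (and not a $\Theta(1)\cdot\Gamma$) additive term. An equivalent but slightly more computational route would go through \Cref{lem:moved-cost}: write $\cost{Z}{o}=\cost{Z}{\mu(Z)}+|Z|\,\norm{o-\mu(Z)}^2$ and note $\cost{Z}{q}=\cost{Z}{\mu(Z)}+|Z|\,\norm{q-\mu(Z)}^2\le\varepsilon^2\Gamma$, hence $|Z|\,\norm{q-\mu(Z)}^2\le\varepsilon^2\Gamma$, and then compare $|Z|\,\norm{q-o}^2$ with $|Z|\,\norm{o-\mu(Z)}^2$ via the triangle inequality, splitting the cross term with AM--GM as $2\norm{o-\mu(Z)}\,\norm{q-\mu(Z)}\le\varepsilon\norm{o-\mu(Z)}^2+\varepsilon^{-1}\norm{q-\mu(Z)}^2$. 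I would present the shorter \Cref{lem:lattanzi-2}-based argument.
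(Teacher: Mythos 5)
Your proposal is correct and follows essentially the same route as the paper's own (very terse) proof: apply \Cref{lem:lattanzi-2} once with $z$ as the intermediate point and once with $q$ as the intermediate point, sum over $z\in Z$, and absorb the cross terms via $(1+1/\varepsilon)\cost{Z}{q}\leq(1+1/\varepsilon)\varepsilon^2\Gamma=O(\varepsilon)\Gamma$. Your write-up is in fact slightly more careful than the paper's one-liner (which contains a typo in its first displayed bound), so there is nothing to fix.
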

\begin{proof}
To obtain the first inequality, we apply \Cref{lem:lattanzi-2}  to bound $\cost{z}{o} \leq (1 + \varepsilon) \cost{z}{o} + (1 + 1/\varepsilon) \cost{z}{q}$ for each $z \in Z$. To obtain the second inequality, we bound $\cost{q}{o} \leq (1 + \varepsilon) \cost{z}{o} + (1 + 1/\varepsilon) \cost{z}{q}$ for each $z \in Z$.
\end{proof}

\begin{lemma} \label{lem:alpha-are-off-its-ok}
Let $X = \{x_1 \dots x_\ell\}$ be a weighted set of points in $\bbR^d$ such that $x_i$ has weight $w_i$. Let $\mu$ be the weighted average of $X$. 
Let $\hat{\mu} = \hat{\mu}(\alpha_1 \dots \alpha_\ell)$ be the weighted average of $X$ where $x_i$ has weight $\alpha_i$.
If $w_i \leq \alpha_i \leq w_i / (1-\varepsilon)$ for each $i = 1 \dots \ell$, then if we interpret $\cost{X}{C}$ as $\sum_{x_i \in X} w_i \cdot \cost{x_i}{C}$ we have 
$\cost{X}{\hat\mu} \leq (1 + O(\varepsilon)) \cost{X}{\mu}$. 
\end{lemma}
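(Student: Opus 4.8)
The plan is to reduce the statement to a small-displacement bound on $\norm{\hat\mu - \mu}$. First I would invoke the weighted analogue of \Cref{lem:moved-cost}: setting $W := \sum_{i} w_i$, for every $c \in \bbR^d$ one has $\cost{X}{c} = \cost{X}{\mu} + W \norm{c - \mu}^2$, where throughout $\cost{X}{c} = \sum_i w_i \norm{x_i - c}^2$ as in the statement; this is the identical expansion of the square used to prove \Cref{lem:moved-cost}, the only change being that the cross term vanishes because $\sum_i w_i (x_i - \mu) = 0$ by definition of the weighted mean $\mu$. Applying this identity with $c = \hat\mu$, the claim $\cost{X}{\hat\mu} \le (1 + O(\varepsilon))\cost{X}{\mu}$ becomes equivalent to $W\norm{\hat\mu - \mu}^2 \le O(\varepsilon) \cdot \cost{X}{\mu}$, so it suffices to bound the displacement of the mean. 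Note also that $\cost{X}{\cdot}$ and both means are translation equivariant, so we may recenter.

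To do so I would translate so that $\mu = 0$; then $\sum_i w_i x_i = 0$ and $\cost{X}{\mu} = \sum_i w_i \norm{x_i}^2$. Writing $\alpha_i = w_i + \beta_i$, the hypothesis $w_i \le \alpha_i \le w_i/(1-\varepsilon)$ gives $0 \le \beta_i \le \tfrac{\varepsilon}{1-\varepsilon} w_i = O(\varepsilon) w_i$. Because $\sum_i w_i x_i = 0$, the numerator defining $\hat\mu$ collapses to $\sum_i \alpha_i x_i = \sum_i \beta_i x_i$, whereas the denominator obeys $\sum_i \alpha_i \ge \sum_i w_i = W$; hence $\norm{\hat\mu} \le \tfrac{1}{W} \sum_i \beta_i \norm{x_i}$. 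Now Cauchy--Schwarz yields $\sum_i \beta_i \norm{x_i} \le \big(\sum_i \beta_i\big)^{1/2}\big(\sum_i \beta_i \norm{x_i}^2\big)^{1/2}$, and using $\beta_i = O(\varepsilon) w_i$ in each factor gives $\sum_i \beta_i \le O(\varepsilon) W$ and $\sum_i \beta_i \norm{x_i}^2 \le O(\varepsilon) \cost{X}{\mu}$. Combining, $\norm{\hat\mu} \le \tfrac{1}{W}\cdot O(\varepsilon)\sqrt{W}\sqrt{\cost{X}{\mu}}$, i.e.\ $W\norm{\hat\mu}^2 \le O(\varepsilon^2)\cost{X}{\mu}$, which is even stronger than what we need; feeding this back into the identity from the first step gives $\cost{X}{\hat\mu} \le (1 + O(\varepsilon^2))\cost{X}{\mu} \le (1+O(\varepsilon))\cost{X}{\mu}$.

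I do not expect a genuine obstacle here: once the instance is recentered at $\mu$, the argument is a one-line Cauchy--Schwarz estimate. The only points to be careful about are (a) phrasing the weighted version of \Cref{lem:moved-cost} and checking that the cross term indeed cancels, and (b) observing that replacing the weights $w_i$ by the inflated weights $\alpha_i \ge w_i$ can only enlarge the normalizing denominator $\sum_i \alpha_i$, so that the entire perturbation of $\hat\mu$ away from $\mu$ is governed by the tiny numerator correction $\sum_i \beta_i x_i$ with $\beta_i = O(\varepsilon) w_i$.
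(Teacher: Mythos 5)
Your proof is correct, but it takes a genuinely different route from the paper's. The paper argues in three lines via the minimizer property of $\hat\mu$: writing $g(z)=\sum_i \alpha_i \cost{x_i}{z}$, the hypothesis $w_i \le \alpha_i \le w_i/(1-\varepsilon)$ gives the pointwise sandwich $\cost{X}{z} \le g(z) \le \cost{X}{z}/(1-\varepsilon)$ for every $z$, and since $\hat\mu$ minimizes $g$ one gets $\cost{X}{\hat\mu} \le g(\hat\mu) \le g(\mu) \le \cost{X}{\mu}/(1-\varepsilon)$, with no need to ever expand squares or track where $\hat\mu$ actually sits. Your argument instead quantifies the displacement of the mean: the weighted analogue of \Cref{lem:moved-cost} (whose cross term indeed cancels because $\sum_i w_i(x_i-\mu)=0$) reduces the claim to bounding $W\norm{\hat\mu-\mu}^2$, and after recentering at $\mu$ your Cauchy--Schwarz step correctly yields $W\norm{\hat\mu-\mu}^2 \le O(\varepsilon^2)\cost{X}{\mu}$, hence the stronger conclusion $\cost{X}{\hat\mu} \le (1+O(\varepsilon^2))\cost{X}{\mu}$. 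What each approach buys: the paper's sandwich argument is shorter, coordinate-free, and generalizes immediately to any objective for which $\hat\mu$ is the reweighted minimizer (it never uses that the cost is a sum of squared Euclidean distances), whereas your computation exploits the quadratic structure to get a quantitatively sharper $\varepsilon^2$-order error and explicit control on $\norm{\hat\mu-\mu}$, which could be reused if one needed a geometric statement about where $\hat\mu$ lies rather than only a cost bound; for the purposes of this lemma the extra precision is not needed.
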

\begin{proof}
We note that $\mu$ minimizes the expression $\cost{X}{\mu}$. Moreover, $\cost{X}{z} \leq \sum_{i=1}^\ell \alpha_i \cdot \cost{x_i}{z} \leq  \cost{X}{z} / (1-\varepsilon)$. Since $\hat\mu$ minimizes the expression $\sum_{i=1}^\ell \alpha_i \cdot \cost{x_i}{z}$ it must be $\cost{X}{\hat\mu} \leq \cost{X}{\mu} / (1-\varepsilon)$.  
\end{proof}
Adopting the same proof strategy, we obtain the following.
\begin{observation} \label{obs:as-if-concentrated-obs}
Thanks to \Cref{lem:as-if-they-were-concentrated}, we can assume that the points in $Z$ are concentrated in $q$ for the purpose of computing a $(1 + O(\varepsilon))$-approximation to the $1$-means problem on $Z$, whenever an additive error $\Gamma$ is tolerable. Indeed, moving all points in $Z$ to $q$ introduces a $1+O(\varepsilon)$ multiplicative error on $\cost{Z}{\cdot}$ and a $O(\varepsilon) \Gamma$ additive error.
\end{observation}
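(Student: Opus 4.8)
The plan is to read the statement off directly from \Cref{lem:as-if-they-were-concentrated}. Here ``moving all points of $Z$ to $q$'' means replacing the sub-instance $Z$ by the single point $q$ taken with multiplicity $|Z|$, so that the $1$-means cost of this concentrated instance with respect to a center $o$ is exactly $|Z|\,\cost{q}{o}$. The condition under which the observation is invoked --- that $Z$ consists of points close to $q$, i.e.\ $\cost{Z}{q}\le \varepsilon^2\Gamma$ --- is precisely the hypothesis of \Cref{lem:as-if-they-were-concentrated}, so that lemma applies verbatim with this $Z$, this $q$, and this $\Gamma$.

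First I would record what the lemma gives: for \emph{every} $o\in\bbR^d$ we have
\begin{equation*}
(1-O(\varepsilon))\,\cost{Z}{o}-O(\varepsilon)\Gamma \;\le\; |Z|\,\cost{q}{o} \;\le\; (1+O(\varepsilon))\,\cost{Z}{o}+O(\varepsilon)\Gamma .
\end{equation*}
Rewriting the left inequality as $\cost{Z}{o}\le (1+O(\varepsilon))\,|Z|\,\cost{q}{o}+O(\varepsilon)\Gamma$ (using $1/(1-O(\varepsilon))=1+O(\varepsilon)$), the two bounds together assert that the objective $\cost{Z}{\cdot}$ and the concentrated objective $|Z|\,\cost{q}{\cdot}$ agree up to a $1+O(\varepsilon)$ multiplicative factor and an $O(\varepsilon)\Gamma$ additive term, \emph{uniformly over the choice of center}. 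This uniform two-sided comparison is the whole content of the observation.

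It then remains only to spell out why such uniform closeness justifies treating $Z$ as concentrated at $q$. Since the two objectives lie within the stated errors of each other for every $o$, a center that is a $(1+O(\varepsilon))$-approximate minimizer of $|Z|\,\cost{q}{\cdot}$ up to additive slack $O(\varepsilon)\Gamma$ is also a $(1+O(\varepsilon))$-approximate minimizer of $\cost{Z}{\cdot}$ up to additive slack $O(\varepsilon)\Gamma$, and conversely; in particular the two optimal values coincide up to the same errors. Moreover, when $Z$ is a subset of a larger point set $Y$ one has $\cost{Y}{o}=\cost{Z}{o}+\cost{Y\setminus Z}{o}$, and replacing the first summand by $|Z|\,\cost{q}{o}$ perturbs $\cost{Y}{o}$ by at most the same $1+O(\varepsilon)$ multiplicative factor on the $Z$-part and the same $O(\varepsilon)\Gamma$ additive term, so the substitution is harmless for computing a $(1+O(\varepsilon))$-approximation whenever an $O(\Gamma)$ additive error is acceptable. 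Rescaling $\varepsilon$ by a constant turns each $O(\varepsilon)$ into $\varepsilon$, yielding the statement as phrased. There is no real obstacle here: the only care needed is in propagating the additive $\Gamma$-term through the composition with $Y\setminus Z$ and through inverting the multiplicative factor, both of which are routine.
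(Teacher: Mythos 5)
Your proposal is correct and matches the paper's intended argument: the paper proves the observation by invoking the two-sided, uniform-in-$o$ bound of \Cref{lem:as-if-they-were-concentrated} and then ``adopting the same proof strategy'' as \Cref{lem:alpha-are-off-its-ok}, i.e.\ a pointwise comparison of the true and concentrated objectives transfers approximate minimizers between them, exactly as you spell out. Your extra remark about propagating the error through a larger point set $Y \supseteq Z$ is a harmless elaboration of how the observation is later used, not a deviation in method.
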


The next lemma shows that a point $z$ that is far from a center $o$ experiences a small variation of $\cost{z}{o}$ when the position of $o$ is slightly perturbed. 
\begin{lemma} \label{lem:gradient-argument}
Given $o, z \in \bbR^d$ such that $||o-z|| \geq r / \varepsilon$ we have that for every $o' \in B(o, r)$, $\cost{z}{o'} = (1 \pm O(\varepsilon))\cost{z}{o}$.
\end{lemma}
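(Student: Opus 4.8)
The plan is to prove this purely by the triangle inequality, since $\cost{\cdot}{\cdot}$ is just squared Euclidean distance. First I would record the two bounds that the hypotheses give us: we have $\norm{o - o'} \le r$ because $o' \in B(o, r)$, and we have $r \le \varepsilon \norm{o - z}$ because $\norm{o - z} \ge r/\varepsilon$. Chaining these, $\norm{o - o'} \le \varepsilon \norm{o - z}$, i.e. the perturbation of the center is small relative to the distance from $z$.

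Next I would apply the triangle inequality in both directions to sandwich $\norm{z - o'}$:
\[
(1 - \varepsilon)\norm{z - o} \;\le\; \norm{z - o} - \norm{o - o'} \;\le\; \norm{z - o'} \;\le\; \norm{z - o} + \norm{o - o'} \;\le\; (1 + \varepsilon)\norm{z - o}.
\]
Squaring the outer inequalities gives $(1-\varepsilon)^2 \cost{z}{o} \le \cost{z}{o'} \le (1+\varepsilon)^2 \cost{z}{o}$, and since $(1\pm\varepsilon)^2 = 1 \pm O(\varepsilon)$ this is exactly the claimed statement $\cost{z}{o'} = (1 \pm O(\varepsilon)) \cost{z}{o}$.

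There is essentially no obstacle here; the only mild point of care is making sure the lower triangle inequality $\norm{z-o'} \ge \norm{z-o} - \norm{o-o'}$ is used in the correct orientation and that the resulting factor $(1-\varepsilon)$ stays nonnegative, which it does for $\varepsilon < 1$ (and for $\varepsilon \ge 1$ the statement is vacuous up to constants). This lemma will then be combined with the far-point case $F = \{x : \cost{x}{q_1} > \rho_1^2/\varepsilon^3\}$ in the main argument, where $r$ plays the role of (a constant times) $\rho_i$ and $z$ a far point, so that perturbing a candidate center within $B(q_i, \rho_i) \subseteq B(o_i, (2+\varepsilon)\rho_i)$ changes the cost of each far point by only a $(1 \pm O(\varepsilon))$ factor.
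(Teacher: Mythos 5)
Your proof is correct, but it takes a genuinely different (and more elementary) route than the paper. The paper first reduces to perturbations $o'$ lying on the line $L$ through $o$ and $z$ (observing that any $o''\in B(o,r)$ has the same distance to $z$ as some point of $B(o,r)\cap L$), and then bounds the directional derivative of $\cost{z}{\cdot}$ along $L$ over the ball, concluding that the cost changes by an additive $O(r^2/\varepsilon)$, which is $O(\varepsilon)\cost{z}{o}$ since $\cost{z}{o}\ge r^2/\varepsilon^2$. You instead sandwich $\norm{z-o'}$ directly via the triangle inequality, using $\norm{o-o'}\le r\le\varepsilon\norm{z-o}$ to get $(1-\varepsilon)\norm{z-o}\le\norm{z-o'}\le(1+\varepsilon)\norm{z-o}$, and square. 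The two arguments prove the same multiplicative bound; yours avoids the reduction to the line and the (somewhat informal) derivative estimate, so it is shorter and self-contained, while the paper's gradient phrasing is the kind of first-order perturbation bound that generalizes directly to statements of the form ``moving the center within a small ball changes each far point's cost by a first-order term.'' Your closing remark about how the lemma is invoked (with $r$ a constant multiple of $\rho_i$ and $z$ a far point) matches the paper's use.
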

\begin{proof}
It is enough to prove it for all $o'$ that lie on the line $L$ passing through $o$ and $z$, any other point in $o'' \in B(o, r)$ admits a point $o' \in B(o, r) \cap L$ with $||o' - z|| = ||o'' - z||$. It is enough to compute the derivative of $\cost{z}{\cdot}$ with respect to the direction of $L$ and see that $\frac{\partial \cost{z}{\cdot}}{\partial L}|_{B(o, r)} = (1\pm O(\varepsilon)) r / \varepsilon$. Thus, $\cost{z}{o'} = \cost{z}{o} \pm (1\pm O(\varepsilon)) r^2/\varepsilon = (1 \pm O(\varepsilon)) \cost{z}{o}$.
\end{proof}

\begin{proof}[Proof of \Cref{lem:perfect-approximations}]
Here we prove that for each $o_1 \dots o_s$ there exist coefficients $\alpha^{(i)}_0 \dots \alpha^{(i)}_s \in \set{1, (1-\varepsilon) \dots (1-\varepsilon)^{\lceil \log_{1-\varepsilon}(\varepsilon^{7})} \rceil} \cup \{0\}$ such that the convex combination $\ho_i = \ho_i(\alpha^{(i)}_0 \dots \alpha^{(i)}_s)$ is a perfect approximation of $o_i$, with probability $k^{-O(\varepsilon^{-2})} \cdot 2^{-poly(\varepsilon^{-1})}$.  Wlog, we show this for $o_1$ only. Concretely, we want to show that, with probability $k^{-O(\varepsilon^{-1})} \cdot 2^{-poly(\varepsilon^{-1})}$, there exist coefficients $\alpha_0 \dots \alpha_s$ such that $\ho_1=\ho_1(\alpha_0 \dots \alpha_s)$ satisfies $\cost{\optcl_1}{(\calC \cup \{\ho_1\}) \setminus  Out} \leq (1 + O(\varepsilon))\opt_1 + O(\varepsilon)\opt / k $. Taking the joint probability of these events for each $i=1\dots s$ we obtain the success probability $k^{-O(\varepsilon^{-2})} \cdot 2^{-poly(\varepsilon^{-1})}$. Note that we are supposed to prove that $\cost{\optcl_1}{(\calC \cup \{\ho_1\}) \setminus  Out} \leq (1 + \varepsilon) \opt_1 + \varepsilon \opt /k$, however we prove a weaker version where $\varepsilon$ is replaced by $O(\varepsilon)$, which is in fact equivalent up to rescaling $\varepsilon$.

Similarly to $\calC[\cdot]$ and $\optset[\cdot]$ define $\calT[p]$ as the closest center to $p$ in $\calT$. Denote with $C_1, F_1$ and $N_1$ the intersections of $\optcl_1$ with $C, F$ and $N$ respectively. In what follows we define the values of $\alpha_0 \dots \alpha_s$ that define $\ho_1 = \ho_1(\alpha_0 \dots \alpha_s)$ and show an assignment of points in $\optcl_1$ to centers in $(\calC \cup \{\ho_1\}) \setminus  Out$ with cost $(1 + O(\varepsilon))\opt_1 + O(\varepsilon)\opt / k $. Recall that we assume that $q_i \in \core{o_i}$ for each $i = 1 \dots s$.

In what follows, we assign values to the coefficients $(\alpha_i)_i$. It is understood that if the final value we choose for $\alpha_i$ is $v$ then we rather set $\alpha_i$ to the smallest power of $(1-\varepsilon)$ which is larger than $v$, if $v > \varepsilon^{7}$. Else, set $\alpha_i$ to $0$. We will see in the end that this restrictions on the values of $\alpha_i$ do not impact our approximation.

In what follows, we will assign the points in $\optcl_1$ to $\calC \setminus Out$, if this can be done inexpensively. If it cannot, then we will assign points to $\ho_1$. In order to compute a good value for $\ho_1$ we need an estimate of the average of points assigned to $\ho_1$. For points in $N_1$, computing this average is doable (leveraging \Cref{lem:weak-coreset-for-1-mean}) while for points in $\optcl_1 \setminus N_1$ we show that either their contribution is negligible or we can collapse them so as to coincide with some $q_i \in \calQ$ without affecting our approximation.
The coefficients $(\alpha_i)_{i\geq 1}$ represent the fraction of points in $\optcl_i$ which is collapsed to $q_i$. $\alpha_0$ represents the fraction of points in $\optcl_i$ which average we estimate as $\mu(S)$. Thus, \Cref{eq:def-ohat} defines $\ho_i$ as the weighted average of points $q_i$, where the weights are the (approximate) fractions of points collapsed onto $q_i$, together with the the average $\mu(S)$ and its associated weight $\alpha_0$.

\paragraph{Points in $C_1$.} All points $p \in C_1$ such that $\calT[p] \not \in \calQ$ can be assigned to $\calT[p] \in \calC \setminus Out$ incurring a total cost of at most $\varepsilon^6 \opt_1$, by the definition of $C_1$. 
Given a point $p \in C_1$ with $\calT[p] \in \calQ$ we might have $\calT[p] \not \in \calC \setminus Out$ and thus we cannot assign $p$ to $\calT[p]$. 
Denote with $W$ the set of points $p$ with $\calT[p] \in \calQ$. Our goal is now to approximate $\mu(W)$.
In order to do that, we will move each $p \in W$ to coincide with $q_i = \calT[p]$. We can partition $W$ into $W_1 \dots W_s$ so that for each $z \in W_i$ $\calT[z] = q_i$. If $p \in Z_i$ then we have $||p-q_i||^2 \leq \varepsilon^3 \rho_1^2$. 
Hence, thanks to \Cref{obs:as-if-concentrated-obs}, we can consider points in $W_i$ as if they were concentrated in $q_i$ while losing at most an additive factor $O(\varepsilon) \opt_1$ and a multiplicative factor $(1+\varepsilon)$ on their cost.
For $i = 1 \dots s$, set $\alpha_i \leftarrow |W_i| / |\optcl_1|$.
In this way, $\sum_{i=1}^s \alpha_i \cdot q_i / \sum_{i=1}^s \alpha_i$ is an approximates solution to $1$-mean on $W$ up to a multiplicative factor $(1+\varepsilon)$ and an additive factor $O(\varepsilon) \opt_1$.

\paragraph{Points in $N_1$.}
Consider the two cases: $(i)$ $\cost{N_1}{\calT} > \varepsilon^2 \opt / k$; $(ii)$ $\cost{N_1}{\calT} \leq \varepsilon^2 \opt / k$.

Case $(i)$. We show that in this case $\mu(S)$ is a $(1+\varepsilon)$-approximation for $1$-mean on $N_1$, with probability $k^{-O(\varepsilon^{-1})} \cdot 2^{-poly(\varepsilon^{-1})}$.
First, notice that if we condition on $S \subseteq N_1$ then \Cref{lem:weak-coreset-for-1-mean} gives that $\mu(S)$ is a $(1+\varepsilon)$-approximation for $1$-mean on $N_1$ with constant probability. Thus, we are left to prove that $S \subseteq N_1$ with probability $k^{-O(\varepsilon^{-1})} \cdot 2^{-poly(\varepsilon^{-1})}$. 
We have that the $P_{p\sim \cost{p}{\calT}}[p \in N_1 \cond p \in N] \geq \varepsilon^2 / k$, however the costs w.r.t. $\calT$ of points in $N$ varies of at most a factor $poly(\varepsilon^{-1})$, thus $P_{p\sim Unif}[p \in N_1 \cond p \in N] \geq poly(\varepsilon)/k$. The probability of $S \subseteq N_1$ is thus $(poly(\varepsilon) /k)^{|S|} = k^{-O(\varepsilon^{-1})} \cdot 2^{-poly(\varepsilon^{-1})}$.
In this case, we set $\alpha_0 \leftarrow |N_1| / |\optcl_1|$ because $\mu(S)$ approximates the mean of the entire set $N_1$.

Case $(ii)$. Here we give up on estimating the mean of $N_1$ and set $\alpha_0 \leftarrow 0$. 
The point $x \in N_1$ such that $\calT[x] \not\in \calQ$ can be assigned to $\calT[x]$ incurring a combined cost of $\varepsilon^2 \opt / k$.
We partition the remaining points in $N_1$ into $Z_1 \cup \dots Z_s$ where each point $x$ is placed in $Z_i$ if $\calT[x] = q_i$. 
Now, we collapse the points in $Z_i$ so as to coincide with $q_i$ and show that this does not worsen our approximation factor.
In terms of coefficients $(\alpha_i)_i$, this translates into the updates $\alpha_i \leftarrow \alpha_i + |Z_i| / |\optcl_i|$ for each $i = 1 \dots s$. 

Indeed, using \Cref{obs:as-if-concentrated-obs} we can move all points in $Z_i$ to $q_i$ incurring an additive combined cost of $\varepsilon \opt / k$ and a multiplicative cost of $1+O(\varepsilon)$.

\paragraph{Points in $F_1$.} Points in $F_1$ are very far from $q_1$ and thus far from $o_1$, hence even if their contribution to $\cost{\optcl_1}{o_1}$ might be large, we have $\cost{F_1}{o_1} = (1 \pm O(\varepsilon)) \cost{F_1}{o'}$ for all $o'$ in a ball of radius $\rho_1 / \varepsilon$ centered in $o_1$, thanks to \Cref{lem:gradient-argument}.

Let $H$ be the set of points that have not been assigned to centers in $\calC \setminus Out$. In particular, $H = W \cup N_1$ if points in $N_1$ satisfy case $(i)$ and $H = W \cup Z_1 \dots Z_s$ if points in $N_1$ satisfy case $(ii)$.
We consider two cases. 

If $||\mu(H) - q_1 || \leq \rho / \varepsilon$, then $||\mu(H) - o_1 || \leq \rho (1 + \varepsilon + 1/ \varepsilon)$ because $q_1 \in \core{o_1}$. Since for each $f \in F_1$ we have
$||f - o_1|| \geq ||f - q_1|| - (1 + \varepsilon) \rho \geq \Omega(\rho / \varepsilon^3)$ then $\cost{f}{o'} = (1\pm O(\varepsilon)) \cost{f}{o_1}$ for each $o'$ in a ball of radius $O(\rho / \varepsilon)$ centered in $o_1$, and so in particular for $o' = \mu(H)$. Thus in this case we can simply disregard all points in $F_1$ and computing $\ho_1$ according to the $(\alpha_i)_i$ defined above yields a perfect approximation of $o_i$.

Else, if $||\mu(H) - q_1 || > \rho / \varepsilon$, a similar argument applies to show that $\cost{ H}{o'} = (1\pm \varepsilon) \cost{H}{o}$ for each $o'$ in ball of radius $O(\rho)$ centered in $o_1$. Indeed, we can rewrite $\cost{H}{o'}$ as $|H| \cdot \cost{\mu(H)}{o'} + \cost{\mu(H)}{H}$. If $||\mu(H) - q_1 || < \rho / \varepsilon$ the first term varies of at most a factor $(1+\varepsilon)$ and the second term is constant. Thus in this case $\ho_1 = q_1$ is a perfect approximation of $o_1$ and we simply set $\alpha_1 = 1$ and $\alpha_j = 0$ for $j\neq 1$. In other words, here $\mu(N_1 \cup H)$ is too far from $q_1$ (and thus $o_1$) to significantlyt influence the position of $\ho_1$ and the same holds for any point in $F_1$. This works, of course, because we assumed $q_1 \in \core{o_1}$.
\end{proof}

\paragraph{Discussing the limitations on the coefficients values.}
The proof above would work smoothly if we were allowed to set $\alpha_i$ to exactly the values discussed above, representing the fractions of points from $\optcl_i$ captured by different $q_i$s. However, to make the algorithm efficient we limit ourselves to values in $\set{1, (1-\varepsilon) \dots (1-\varepsilon)^{\lceil \log_{1-\varepsilon}(\varepsilon^{7})} \rceil} \cup \{0\}$. \Cref{lem:alpha-are-off-its-ok} shows that as long as the values of $(\alpha_i)_i$ estimate the frequencies described above up to a factor $1\pm O(\varepsilon)$ then the approximation error is within a multiplicative factor $1\pm O(\varepsilon)$.

We are left to take care of the case in which $\alpha_i$ is set to a value $< \varepsilon^{7}$. We set $\alpha_i$ when dealing with points in $C_1 \cup N_1$ and for each $x \in C_1 \cup N_1$ we have, for each $o' \in B(q_1, (1+\varepsilon)\rho)$, $\cost{x}{o'} \leq 2 \cost{q_1}{o'} + 2\cost{x}{q_1} = O(\rho_1 \varepsilon^{-6})$. Thus, if we simply set $\alpha_i \leftarrow 0$ whenever we have $\alpha_i < \varepsilon^{7}$ then the combined cost of points in $\optcl_1$ with respect to $o'$ varies by $\varepsilon^{7}|\optcl_1| \cdot \rho_1 \varepsilon^{-6} = O(\varepsilon) \opt_1$. Effectively, ignoring these points does not significantly impact the cost. hence solving $1$-mean ignoring these points finds a $(1+O(\varepsilon))$-approximate solution to the original problem. 

\section*{Additional Experimental Evaluation}
In this section we report additional experiments which presentation did not fit in the main body. In particular, we run experiments on the dataset KDD-PHY and for $k=10, 50$.

In \Cref{fig:appendix-vanilla-greedy} we compare \algogreedymsls{} with \algomsls{}. To perform our experiment, we initialize $k=25$ centers using \algokmeans{} and then run $50$ iterations of local search for both algorithms, for $p \in\{2, 3\}$ swaps. We repeat each experiment $5$ times.
For ease of comparison, we repeat the plot for the KDD-BIO and RNA datasets that we present in the main body of the paper.
Due to the higher running of the \algomsls{} we perform this experiments on 1\% uniform sample of each of our datasets. 
We find out that the performance of the two algorithms is comparable on all our instances, while they both perform roughly 15\%-27\% better than $k$-means++ at convergence.

\begin{figure}[] 
    \centering
     \begin{subfigure}
         \centering
        \includegraphics[width=0.47\textwidth]{./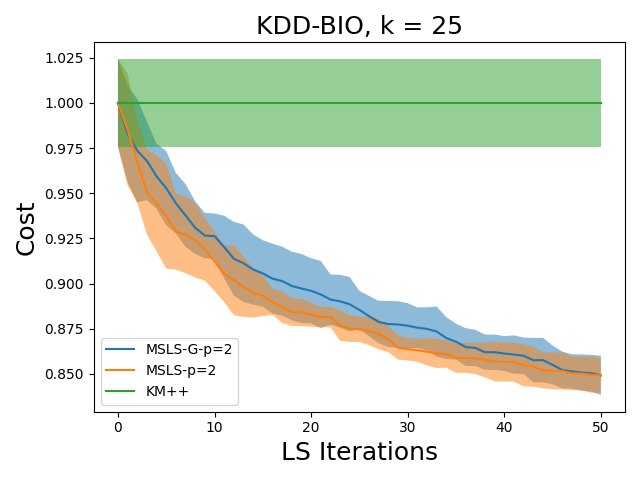}
    \end{subfigure}
         \begin{subfigure}
         \centering
        \includegraphics[width=0.47\textwidth]{./images/vvsg-3-swaps-KDD-BIO.png}
    \end{subfigure}
    \begin{subfigure}
         \centering
        \includegraphics[width=0.47\textwidth]{./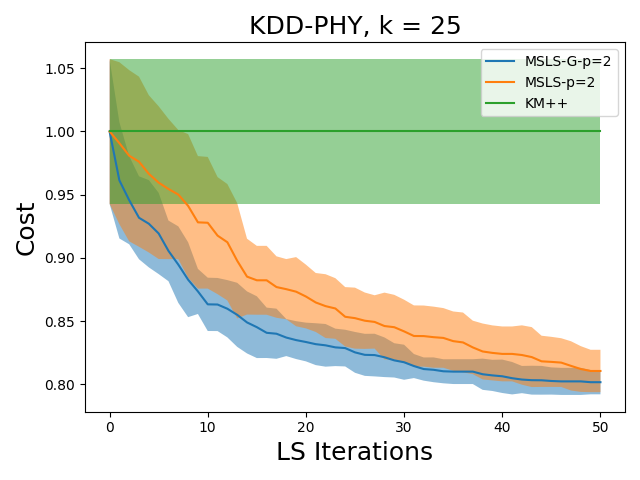}
        \end{subfigure}
     \begin{subfigure}
         \centering
        \includegraphics[width=0.47\textwidth]{./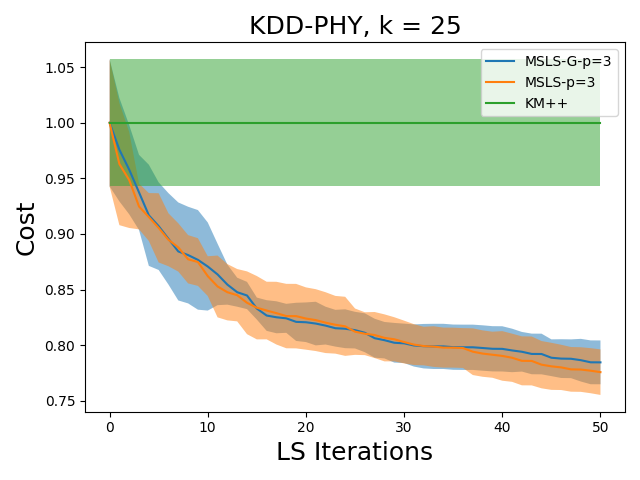}
    \end{subfigure}
     \begin{subfigure}
         \centering
        \includegraphics[width=0.47\textwidth]{./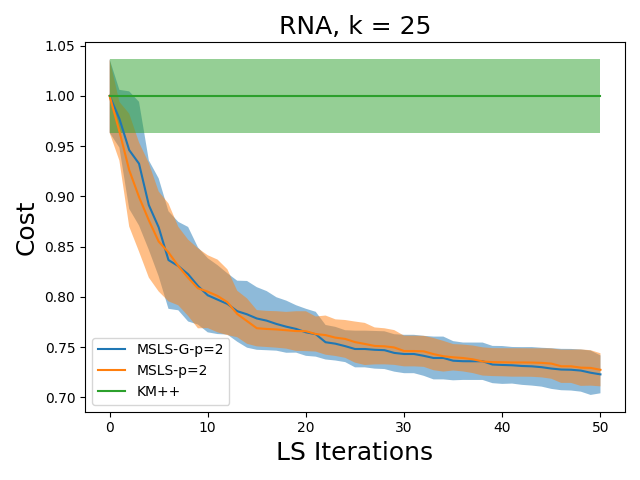}
    \end{subfigure}
     \begin{subfigure}
         \centering
        \includegraphics[width=0.47\textwidth]{./images/vvsg-3-swaps-RNA.png}
    \end{subfigure}
    \caption{Comparison between \algomsls{} and \algogreedymsls{}, for $p =2$ (left column) and $p=3$ (right column), for $k=25$, on the datasets KDD-BIO (first row), KDD-PHY (second row) and RNA (third row). The $y$ axis shows the mean solution cost, over the 5 repetitions of the experiment, divided by the means solution cost of \algokmeans{}.}
    \label{fig:appendix-vanilla-greedy}
\end{figure}

In \Cref{fig:appendix-nolloyd} we run \algokmeans{} followed by $50$ iterations of \algogreedymsls{} with $p = 1, 4, 7, 10$ and $k=10, 25, 50$ (expcluding the degenerate case $p=k=10$) and plot the relative cost w.r.t. \algokmeans{} at each iteration. The results for $k=25$ on KDD-BIO and RNA can be found in \Cref{fig:main-experiment}. We repeat each experiment $5$ times. Our experiment shows that, after $50$ iterations \algogreedymsls{} for $p = 4, 7, 10$ achieves improvements of roughly $5-10\%$ compared to \algogreedymslsarg{1} and of the order of $20\%-40\%$ compared to \algokmeans. These improvements are more prominent for $k=25, 50$. We also report the time per iteration that each algorithm takes. For comparison, we report the running time of a single iteration of Lloyd's next to the dataset's name. 
Notice that the experiment on RNA for $k=50$ is performed on a $10\%$ uniform sample of the original dataset, due to the high running time.

\begin{figure}[] 
    \centering
     \begin{subfigure}
         \centering
        \includegraphics[width=0.47\textwidth]{./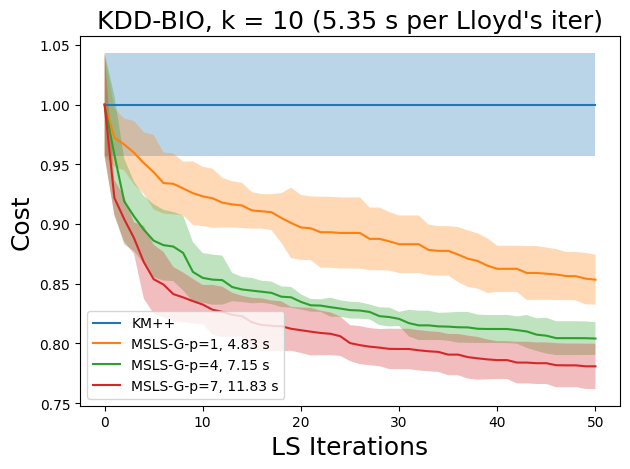}
    \end{subfigure}
    \begin{subfigure}
         \centering
        \includegraphics[width=0.47\textwidth]{./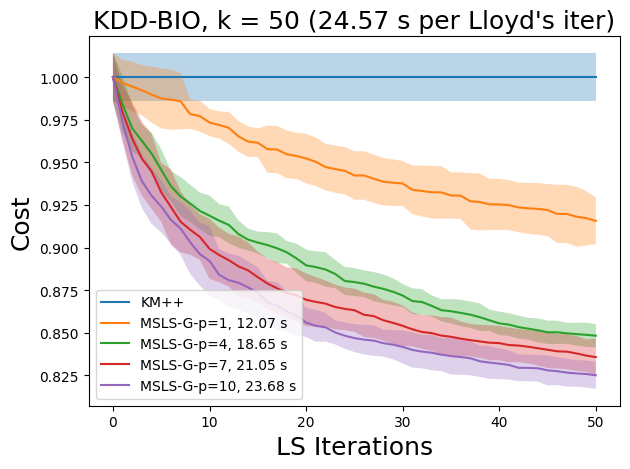}
        \end{subfigure}
             \begin{subfigure}
         \centering
        \includegraphics[width=0.47\textwidth]{./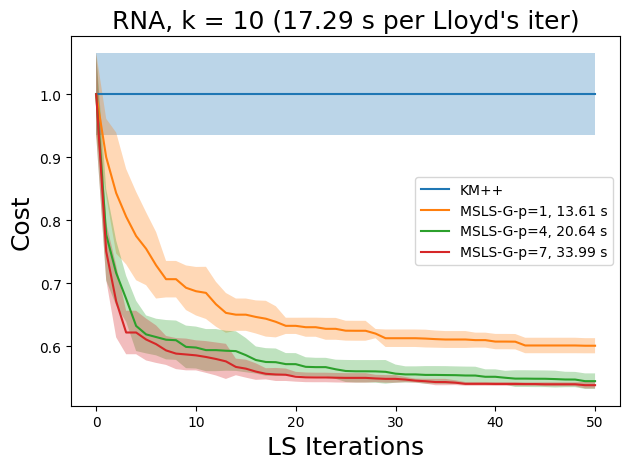}
    \end{subfigure}
         \begin{subfigure}
         \centering
        \includegraphics[width=0.47\textwidth]{./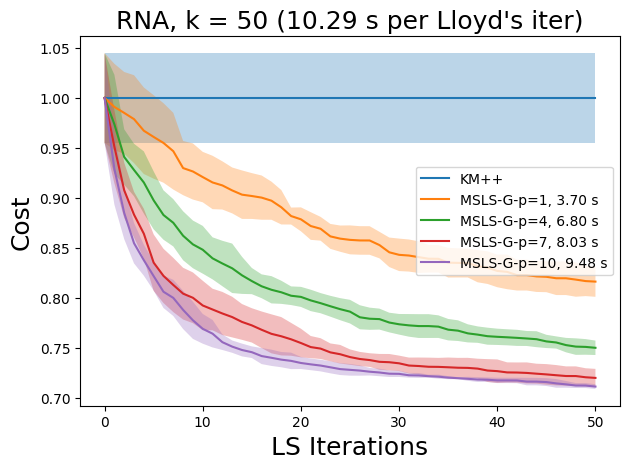}
    \end{subfigure}
     \begin{subfigure}
         \centering
        \includegraphics[width=0.47\textwidth]{./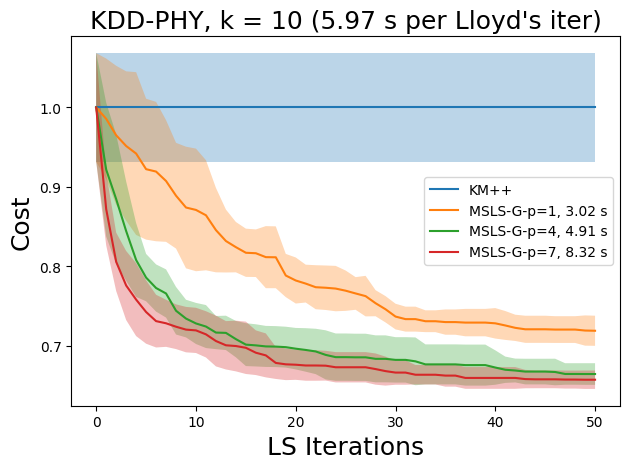}
    \end{subfigure}
     \begin{subfigure}
         \centering
        \includegraphics[width=0.47\textwidth]{./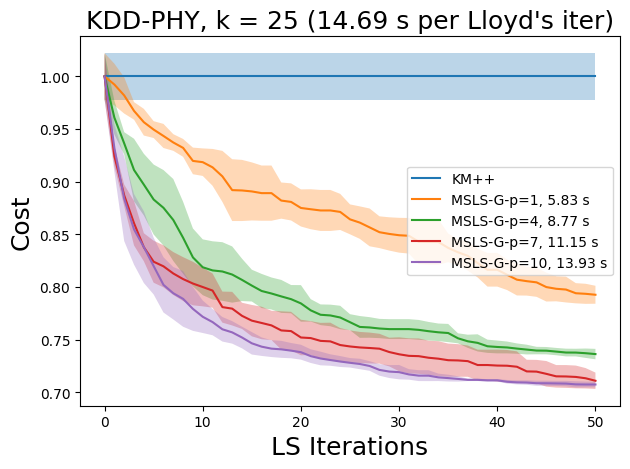}
    \end{subfigure}
     \begin{subfigure}
         \centering
        \includegraphics[width=0.47\textwidth]{./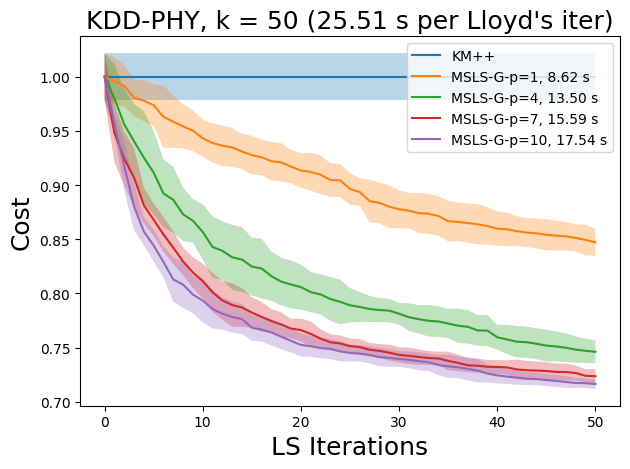}
    \end{subfigure}
    \caption{We compare the cost of \algogreedymsls{}, for $p\in\{1,4, 7, 10\}$, divided by the mean cost of \algokmeans{} at each LS  step, for $k\in\{10, 25, 50\}$, excluding the degenerate case $p=k=10$. The legend reports also the running time of \algogreedymsls{} per LS step (in seconds). The experiments were run on all datasets: KDD-BIO, RNA and KDD-PHY, excluding the case of $k=25$ for KDD-BIO and RNA which are reported in the main body of the paper.}
    \label{fig:appendix-nolloyd}
\end{figure}

In  \Cref{fig:appendix-lloyd}, we use \algokmeans{} and \algogreedymsls{} as a seeding algorithm for Lloyd's and measure how much of the performance improvement measured is retained after running Lloyd's. 
First, we initialize our centers using \algokmeans{} and the run $15$ iterations of \algogreedymsls{} for $p=1, 4, 7$.  We measure the cost achieved by running $10$ iterations of Lloyd's starting from the solutions found by \algogreedymsls{} as well as \algokmeans{}. We run experiments for $k=10, 25, 50$ and we repeat each experiment $5$ times. We observe that for $k=25, 50$ \algogreedymsls{} for $p>1$ performs at least as good as \algossls{} from \cite{silvio-original} and in some cases maintains non-trivial improvements. These improvements are not noticeable for $k=10$; however, given how Lloyd's behave for $k=10$ we conjecture that $k=10$ might be an ``unnatural'' number of clusters for our datasets.

\begin{figure}[] 
    \centering
     \begin{subfigure}
         \centering
        \includegraphics[width=0.47\textwidth]{./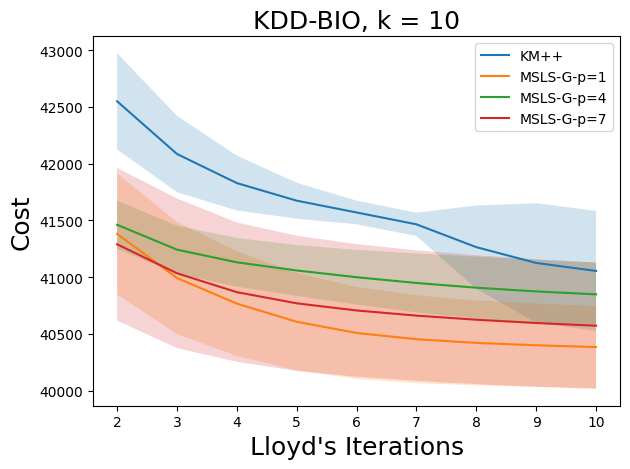}
    \end{subfigure}
         \begin{subfigure}
         \centering
        \includegraphics[width=0.47\textwidth]{./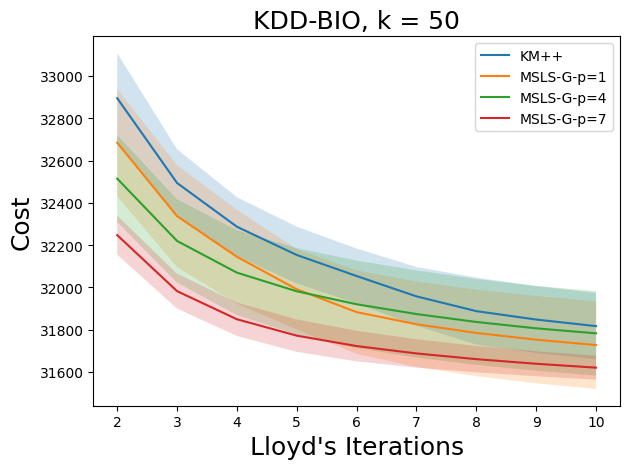}
    \end{subfigure}
    \begin{subfigure}
         \centering
        \includegraphics[width=0.47\textwidth]{./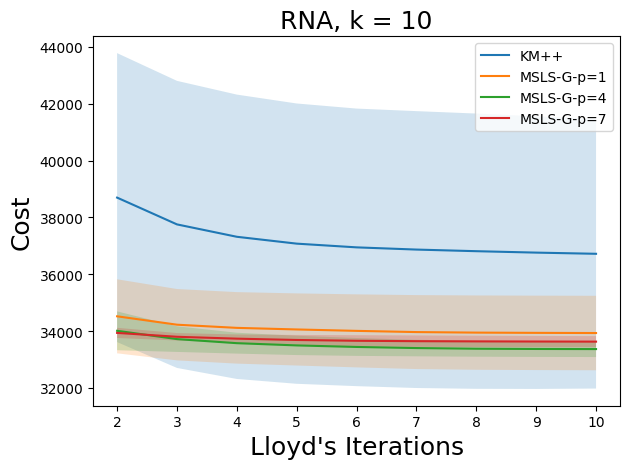}
        \end{subfigure}
     \begin{subfigure}
         \centering
        \includegraphics[width=0.47\textwidth]{./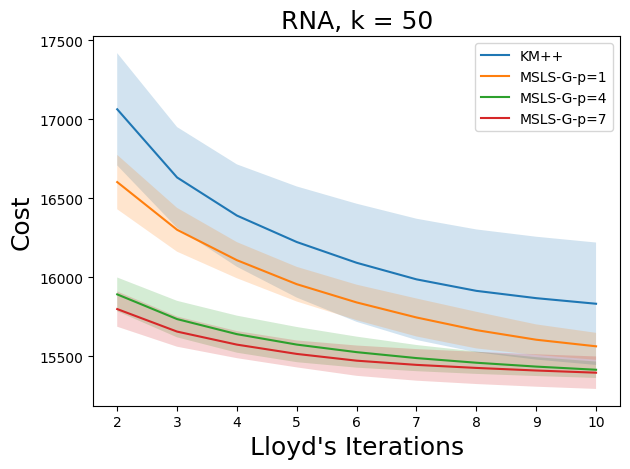}
    \end{subfigure}
     \begin{subfigure}
         \centering
        \includegraphics[width=0.47\textwidth]{./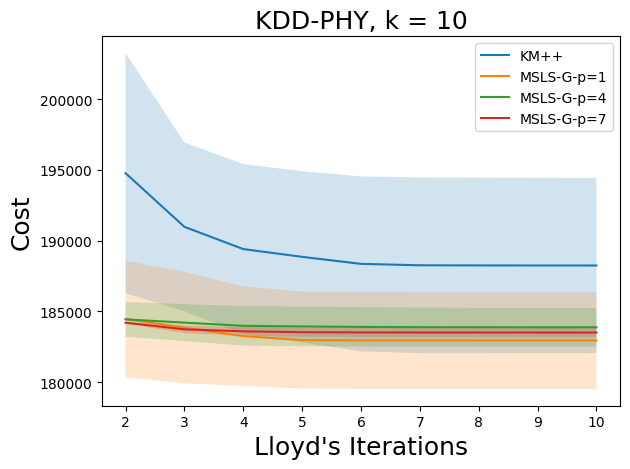}
    \end{subfigure}
     \begin{subfigure}
         \centering
        \includegraphics[width=0.47\textwidth]{./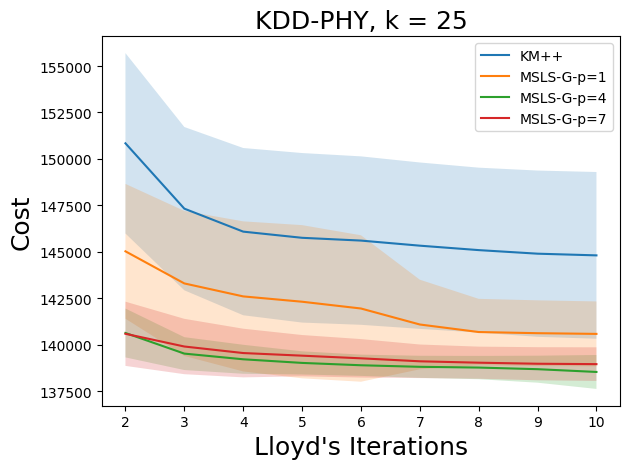}
    \end{subfigure}
     \begin{subfigure}
         \centering
        \includegraphics[width=0.47\textwidth]{./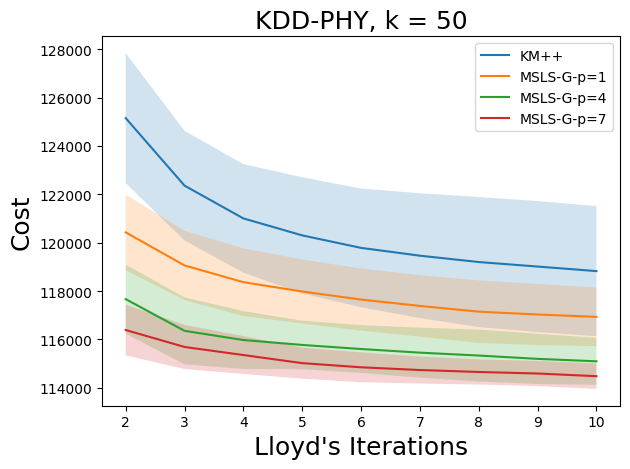}
    \end{subfigure}
    \caption{We compare the cost after each of the $10$ iterations of Lloyd with seeding from \algogreedymsls{}, for $p\in\{1,4, 7, 10\}$ and $15$ local search steps and \algokmeans{}, for $k\in \{10, 25, 50\}$. We excluded the degenerate case $p=k=10$, and the experiments reported in the main body of the paper.}
    \label{fig:appendix-lloyd}
\end{figure}

\end{document}